\documentclass[11pt]{article}
\usepackage{amssymb}
\usepackage{amsmath}
\usepackage{amsthm, multirow}
\usepackage{graphicx, authblk}
\usepackage{algorithm}% http://ctan.org/pkg/algorithms
\usepackage{algpseudocode}% http://ctan.org/pkg/algorithmicx
\usepackage[round]{natbib}
\usepackage{fullpage}
\usepackage{xspace}

\usepackage{mathtools}

\algnewcommand\algorithmicinput{\textbf{INPUT:}}
\algnewcommand\INPUT{\item[\algorithmicinput]}
\algnewcommand\algorithmicoutput{\textbf{OUTPUT:}}
\algnewcommand\OUTPUT{\item[\algorithmicoutput]}
\allowdisplaybreaks
\usepackage{hyperref}[]
\hypersetup{
    colorlinks=true,
    linkcolor=blue,
    filecolor=magenta,      
    urlcolor=blue,
      citecolor=blue,
    }

\usepackage{cleveref}

\newcommand{\ou}{\"{o}}
\newtheorem{lemma}{Lemma}
\newtheorem{proposition}{Proposition}
\newtheorem{corollary}{Corollary}

\newtheorem{theorem}{Theorem}
\newtheorem{assume}{Assumption}
\newtheorem{model}{Model}

\newcommand{\p}{\mathcal P}
\newcommand{\hatp}{\widehat {\mathcal P}}

\DeclareMathOperator*{\argmin}{argmin}
\newcommand{\sepp}{\textsc{sepp}\xspace}
\newcommand{\pdp}{\textsc{pdp}\xspace}
\newcommand{\sepps}{\textsc{sepp}s\xspace}
\DeclareMathOperator*{\poiss}{\textsc{Poisson}}

\date{\vspace{-5ex}}

\title{Detecting Abrupt Changes in \\High-Dimensional Self-Exciting Poisson Processes}
\author[1]{Daren Wang}
\author[2]{Yi Yu}
\author[1]{Rebecca Willett}
\affil[1]{Department of Statistics, University of Chicago}
\affil[2]{Department of Statistics, University of Warwick}

\begin{document}
\maketitle

\begin{abstract}
High-dimensional self-exciting point processes have been widely used in many application areas to model discrete event data in which past and current events affect the likelihood of future events.  In this paper, we are concerned with detecting abrupt changes of the coefficient matrices in discrete-time high-dimensional self-exciting Poisson processes, which have yet to be studied in the existing literature due to both theoretical and computational challenges rooted in the non-stationary and high-dimensional nature of the underlying process.  We propose a penalized dynamic programming approach which is supported by a theoretical rate analysis and numerical evidence. 

\vskip 3mm
\textbf{Keywords}: Self-exciting Poisson process; High-dimensional statistics; Piecewise stationarity; Penalized dynamic programming.
\end{abstract}

\section{Introduction}
Self-exciting point processes (\sepps) are useful in modelling many types of discrete event data in which past and current event help determine the likelihood of future events.  Such data are ubiquitous in application areas including crime science \citep[e.g.][]{egesdal2010statistical}, national security \citep[e.g.][]{lewis2012self}, finance \citep[e.g.][]{chavez2012high} and neuroscience \citep[e.g.][]{linderman2016bayesian}, to name but a few.

\sepps were, arguably, first rigorously studied in a mathematical framework by \cite{hawkes1971spectra}, where the eponymous Hawkes process was proposed.  Since the debut of the Hawkes process, there have been tremendous efforts poured into different aspects of understanding and utilizing the univariate Hawkes process; see \cite{laub2015hawkes} and \cite{hawkesreview} for comprehensive and contemporary reviews.  More recently, due to the availability of richer datasets and computational resources, attention has shifted to multivariate and even high-dimensional \sepps, where different coordinates might correspond to different geographic locations, different neurons in a biological neural network, people in a social network, etc.  See, for instance, \cite{hall2016inference}, \cite{mark2018network}, \cite{chavez2012high} and \cite{ertekin2015reactive}.  
  
In these high-dimensional settings, understanding how events in one coordinate influence the likelihood of events in another coordinate provides valuable insight into the underlying process.  We call the collection of these influences between pairs of coordinates a ``network'', and this paper describes novel methods for detecting abrupt changes in this network with theoretical performance bounds that characterize the accuracy of the change point estimation and how strong the signals must be to ensure reliable estimation.

While change point detection has a long and rich history, we are unaware of any preexisting change point methodology that can be used to detect changes in \sepps in high dimensions.  Some recent high-dimensional change point detection work is briefly discussed as follows.  \cite{wang2018optimal} and \cite{padilla2019change} studied the change point detection in Bernoulli networks and dynamic random dot product graphs, respectively.  \cite{cho2015multiple}, \cite{cho2016change}, \cite{matteson2014nonparametric}, \cite{wang2018high}, \cite{dette2018relevant} and others investigated high-dimensional mean change problems.  \cite{wang2017optimal}, \cite{aue2009break} and others were concerned with high/multi-dimensional covariance structure changes.  \cite{safikhani2017joint} and \cite{wang2019localizing} exploited the high-dimensional vector autoregressive models and provided change point detection results thereof.   \cite{li2017detecting} focused on a low-dimensional Hawkes process setting in which the processes may be characterized by a small number of parameters.  

The lack of results on the change point analysis on high-dimensional \sepps can be related to its nonlinearities inherent to the model.  Note that, \sepps can be viewed as a nonlinear autoregressive process, and even detecting changes for linear vector autoregressive processes is an active area of investigation \citep{wang2019localizing}. The nonlinearities associated with \sepps further complicate the change point detection problem.  

This paper describes a computationally- and statistically-efficient methodology for detecting changes in the network underlying \sepps. At the heart of our method lies a penalized dynamic programming algorithm that estimates the times at which each change occurs when the underlying network is sparse, i.e.~when the number of network edges is small relative to the number of pairs of network nodes.  In this paper, we also apply our method to neuron spike train data sets to help pinpoint the times at which the functional networks might change due to the changes of the state of consciousness.

% In the numerical experiments, due to the direct competitors, we choose two popular algorithms aiming at high-dimensional mean and/or variance change points detection as our competitors.  Detailed comparisons can be found in \Cref{section:numeric}.

\subsection{Problem formulation}

The detailed model considered in this paper is introduced as follows.

\begin{model} \label{model:sepp change point}
Let $\{X(t)\}_{t=1}^T \subset \mathbb{Z}^M$ be a discrete-time Poisson process.  For each $t \in \{1, \ldots, T\}$, let $\mathcal{X}(t) = (X(1),\ldots, X(t)) \in \mathbb{R}^{M \times t}$ consist all the history up to time $t$.  For each $t \in \{1, \ldots, T-1\}$ and $m \in \{1, \ldots, M\}$, suppose that given $\mathcal{X}(t)$, all coordinates $\{X_m(t+1)\}$ are conditionally independent and the conditional distribution of $X_m(t+1)$ is a Poisson distribution, i.e.
	\begin{align}\label{eq:change point SEPP}
		X_m(t+1) | \mathcal X(t) \sim \poiss (\exp\{\lambda_m (t)\}),
	\end{align}
	where
	\begin{equation}\label{eq-lambda-definition}
		\lambda_m (t) = v + A_{m}^*(t)  g_t\{\mathcal X( t)\},		
	\end{equation}
	the matrix $A ^* (t)\in \mathbb R^{M\times M}$ is the coefficient matrix at time point t, $A^*_m(t)$ is the $m$-th row of $A^*(t)$ and $g_t(\cdot):\, \mathbb{R}^{M \times t} \to \mathbb{R}^M$ is a $M$-dimensional vector-valued function.

Suppose that there exists an integer $K \geq 0$ and time points $\{\eta_k\}_{k = 0}^{K+1}$, called change points, satisfying $1 = \eta_0 < \eta_1 < \ldots < \eta_K \leq T < \eta_{K+1} = T+1$ and
	\[
		A^*(t) \neq A^*(t - 1) \quad \mbox{if and only if} \quad t \in \{\eta_k\}_{k = 1}^{K}.
	\]
	Let the minimal spacing and the minimal jump size be defined as 
	\[
		\Delta = \min_{k = 1, \ldots, K + 1} (\eta_k - \eta_{k-1}) \quad \text{and} \quad \kappa = \min_{k = 1, \ldots, K + 1} \|A^*(\eta_k) - A^*(\eta_{k-1})\|_{\mathrm{F}},
	\]
	respectively, where $\|\cdot\|_{\mathrm{F}}$ denotes the Frobenius norm of a matrix.	
\end{model}

It is worth mentioning that $\{X(t)\}_{t = 1}^T$ defined in \Cref{model:sepp change point} is an \textsc{sepp}, where each $X_m(t)$ is conditionally distributed as a Poisson random variable.  We therefore refer to \eqref{eq:change point SEPP} a self-exciting Poisson processes.  In the sequel, when there is no ambiguity, we will also refer self-exciting Poisson processes as \sepps.

In fact, \Cref{model:sepp change point} is a generalization of a stationary \sepp process, which assumes that the coefficient matrices $A^*(t) = A^*(1)$, $t \in \{1, \ldots, T\}$.  Stationary \sepp models have been well-studied in the existing literature, including \cite{hall2018learning} and \cite{mark2018network}, where it has been shown that the coefficient matrix of the point process can be estimated by an $\ell_1$-penalized likelihood estimator.  

Given $\{X(t)\}_{t=1}^T$ satisfying \Cref{model:sepp change point}, our main task is to estimate $\{\eta_k\}_{k = 1}^K$ accurately.  To be specific, we seek estimators $\{\widehat{\eta}_k\}_{k = 1}^{\widehat{K}}$ such that as the sample size $T \to \infty$, with probability tending to 1, it holds that
	\begin{equation}\label{eq-cons-def}
		\widehat{K} = K \quad \mbox{and} \quad \frac{\epsilon}{\Delta} = \frac{\max_{k = 1, \ldots, K}|\widehat{\eta}_k - \eta_k|}{\Delta} \to 0.
	\end{equation}
	For the change point estimators satisfying \eqref{eq-cons-def}, we call them \emph{consistent} change point estimators.  We will also call $\epsilon$ the \emph{localization error}.
	
To the best of our knowledge, we are the first to study the high-dimensional \sepps with change points.  In addition to the mathematical introduction of the model, we investigate the consistency of the abrupt change point location estimators, under minimal conditions.  The proposed penalized dynamic programming approach in \Cref{section:main} is computationally efficient and tailored for this novel setting.

\subsection*{Notation}
For any integer pair $(t_1, t_2) \in \mathbb{Z}^2$, let $[t_1, t_2]$ denote the integer interval $[t_1, t_2] \cap \mathbb{Z}$.  Same notation applies to open intervals.  For any matrix $A \in \mathbb R^{M\times M}$, let $A_{m}$ denote the $m$th row of $A$ and $A_{m, m'}$ denote the $(m, m')$th entry of $A$.  With some abuse of notation, for any vector $v$ and any matrix $M$, let $\|v\|_2$, $\|v\|_1$, $\|M\|_{\mathrm{F}}$ and $\|M\|_1$ be the $\ell_2$- and $\ell_1$-norms of $v$, the Frobenius norm of $M$ and the $\ell_1$-norm of $\mathrm{vec}(M)$, respectively, where $\mathrm{vec}(M)$ is the vectorized version of $M$ by stacking all the columns of $M$.  For any  $v(t):\, [1, T] \to \mathbb R^m$, let 
	\[
		\| Dv\|_0 = \sum_{t  =2}^T I \{v(t-1) \neq v(t)\},
	\]
	where $I\{ \cdot\} \in \{0, 1\}$ is the indicator function.  For any set $S \subset \{(m, m'): \, m ,m' = 1, \ldots, M\}$, let $A_S  \in \mathbb R^{M \times M}$ satisfy
	$$ (A_S)_{m, m'} = \begin{cases}
 	A_{m, m'}, & (m, m') \in S, \\
 	0, & \text{otherwise.}
	\end{cases} $$	
	Given any $A(t): \, [1, T] \to \mathbb R^{M\times M}$ and any $J \subset [1, T]$, if $A(\cdot)$ is unchanged in $I$, then we denote $A(I) = A(t)$, $t \in I$.  

\section{The Penalized Dynamic Programming Algorithm}
\label{section:main}
%!TEX root = . \ notes.tex

To detect the change points in \Cref{model:sepp change point}, we propose the penalized dynamic programming (\pdp) algorithm, which is stated in \eqref{eq:SEPP DP} with necessary notation in \eqref{eq:definition of A hat}, \eqref{eq:penalized likelihood} and \eqref{eq:constrain set}.  The \pdp consists of two layers: estimation of the coefficient matrices $A^*(t)$, $t \in [1, T]$, and estimation of the change points.  

For the coefficient matrix estimation, we let $\widehat{A}(I)$ be the penalized log-likelihood estimator of the coefficient matrix over an integer interval $I \subset [1, T]$, i.e.
	\begin{align} \label{eq:definition of A hat}
		\widehat A(I) = \argmin_{A \in \mathcal C} H(A, I),
	\end{align}
	where $H(A, I)$ and $\mathcal C$ are the penalized log-likelihood function and the constrained domain of the coefficient matrices, respectively.  To be specific, with a pre-specified tuning parameter $\lambda > 0$ and $I = [s, e]$, let
	\begin{align}\label{eq:penalized likelihood} 
		H(A, I) = \sum_{t = s}^{e - 1} \sum_{m=1}^M \left(\exp\left[v + A_m g_t\{\mathcal X (t)\}\right] - X_m (t+1) [v + A_m g_t\{\mathcal X (t)\}] + \lambda |I |^{1/2} \|A\|_1\right)
	\end{align}
	and
	\begin{align} \label{eq:constrain set}
		\mathcal C = \left \{A \in \mathbb R^{M \times M }:\, \max_{m = 1, \ldots, M} \|A_{m}\|_1 \le 1 \right \}.
	\end{align}

The loss function $H(\cdot, \cdot)$ is a penalized logarithmic conditional likelihood function, recalling that $X_m(t+1)$ given $\mathcal{X}(t)$ follows a Poisson distribution with intensity $\exp\left[v + A_m g_t\{\mathcal X (t)\}\right]$.  The penalty term $\lambda |I|^{1/2}$ in \eqref{eq:penalized likelihood} is introduced in a way such that the tuning parameter $\lambda$ is independent of the interval length.  The term $|I|^{1/2}$ reflects the order of the standard error of the sum of $|I|$ marginal log-likelihood functions.  We elaborate on this scaling factor and its derivation in \Cref{lemma:Azuma Hoeffding} and its proof. 

The constraint on $\mathcal{C}$ is to ensure that the \sepp process as vector-valued time series is stable \citep[see e.g.][]{lutkepohl2005new}.  As for stationary \sepp estimation, \cite{mark2018network} proposed a constraint similar to \eqref{eq:constrain set}.
	
\medskip
Given the above framework, we can now consider estimating change points by setting
	\begin{align} \label{eq:SEPP DP}
		\widehat { \mathcal P }  = \argmin_{\mathcal P} \left\{\sum_{ I \in \mathcal P }   H( \widehat A(I), I) + \gamma | \mathcal P|\right\},
	\end{align}
where $\gamma > 0$ is a tuning parameter, the minimization is over all possible interval partitions of $[1, T]$  and $\mathcal P$ denotes one such partition.  To be specific, an interval partition has the form $\mathcal P = \{I_k, \, k=1, \ldots, K_{\mathcal{P}}\}$ and satisfies $I_{k'} \cap I_{k} = \emptyset$ and $\bigcup_{k=1}^{K_{\mathcal{P}}} I_k = [1, T]$.  Once $\widehat{\mathcal{P}}$ is at hand, we let $\widehat K = |\widehat {\cal P}| - 1 \geq 0$, $\eta_{\widehat{K} +  1} = T+1$ and 
	\[
		\widehat{\mathcal{P}} = \left\{\{1, \ldots, \widehat{\eta}_1 - 1\},\, \{\widehat{\eta}_k, \ldots, \widehat{\eta}_{k+1} - 1\}_{k = 1}^{\widehat{K}}\right\}.
	\] 
	We call $\{\widehat{\eta}_k\}_{k = 1}^{\widehat{K}}$ the change point estimators induced by  $\widehat{\mathcal{P}}$.

The optimization problem in \eqref{eq:SEPP DP} is known as the minimal partition problem on a linear chain graph and can be solved using dynamic programming \citep[e.g.][]{friedrich2008complexity} with the worst case computational cost of order $O\{T^2 \mathrm{Cost}(T)\}$, where $\mathrm{Cost}(t)$ denotes, in our case, the computational cost of computing $\widehat A(I)$ in the interval $I$ with $|I| = t$.  We remark that there has been a line of attack on the computational aspect of optimizing the minimal partition problem, including \cite{killick2012optimal} and \cite{maidstone2017optimal}, among others. 

For completeness, we summarize the \pdp procedure in \Cref{algorithm:PDP} below.  The quantities and functions involved there are defined in \eqref{eq:definition of A hat}, \eqref{eq:penalized likelihood} and \eqref{eq:constrain set}.
\begin{algorithm}[htbp]
\begin{algorithmic}
	\INPUT Data $\{X(t)\}_{t=1}^{T}$, tuning parameters $\lambda, \gamma > 0$.
	\State $(\mathcal{B}, s, t, \mathrm{FLAG}) \leftarrow (\emptyset, 0, 2, 0)$
	\While{$s < T-1$}
		\State $s \leftarrow s + 1$
		\While{$t < n$ and $\mathrm{FLAG} = 0$}
			\State $t \leftarrow t+1$ 			
			\If{$\min_{l = s+1, \ldots, t-1}\{H(\widehat{A}([s, l]), [s, l]) + H(\widehat{A}([l+1, t]), [l+1, t]) + \gamma < H(\widehat{A}([s, t]), [s, t])\}$} 
				\State 
				\begin{align*}
					& s \leftarrow \min \argmin_{l = s+1, \ldots, t-1}\{H(\widehat{A}([s, l]), [s, l]) + H(\widehat{A}([l+1, t]), \\
					& \hspace{5cm} [l+1, t]) + \gamma < H(\widehat{A}([s, t]), [s, t])\}
				\end{align*}
				\State $\mathcal{B} \leftarrow \mathcal{B} \cup \{s\}$
				\State $\mathrm{FLAG} \leftarrow 1$
			\EndIf
		\EndWhile
	\EndWhile
	\OUTPUT The set of estimated change points $\mathcal{B}$.
\caption{Penalized dynamic programming. PDP$(\{X(t)\}_{t=1}^{n}, \lambda, \gamma)$}
\label{algorithm:PDP}
\end{algorithmic}
\end{algorithm}

\subsection{The localization rate of the Penalized Dynamic Programming estimators}

In order to establish the consistency of the change point estimators resulting from the \pdp procedure detailed in \Cref{algorithm:PDP}, we first impose \Cref{assume:parameters}.  

\begin{assume}\label{assume:parameters} 
Let $\{X(t)\}_{t=1}^T \subset \mathbb{Z}^M$ be a discrete-time \sepp generated according to \Cref{model:sepp change point} and satisfying the following.
\begin{itemize}
\item [{\bf A1.}] There exists a subset $S \subset \{(m, m'): m, m' = 1, \ldots, M\}$ such that for all $t \in [1, T]$, $A_{m, m'}^*(t) = 0$, if $(m, m') \notin S$.  Let $d = |S|$. 
\item [{\bf A2.}] It holds that
	\[
		\max_{t = 1, \ldots, T} \max_{m =1, \ldots, M} \|A^*_{m}(t)\|_1 \le 1.
	\]
\item [{\bf A3.}] For any $\xi > 0$, there exist absolute constants $C_{\Delta, 1}, C_{\Delta, 2} > 0$ such that
	\[
		\Delta \ge C_{\Delta, 1} T \quad \mbox{and} \quad \Delta \ge C_{\Delta, 2} \log^{2+\xi}(TM) d ^2\max\{\kappa^{-2},\, \kappa^{-4}\}.
	\] 
\item [{\bf A4.}] There exist absolute constants $p \in \mathbb Z^+$ and $\omega > 0$ such that for any $t$, the matrix 
	\[
		\mathbb{E}[g_t\{\mathcal X(t)\}g_t\{\mathcal X(t)\}^\top |\mathcal X (t - p)] -\omega I_{M}
	\] 
	is positive definite, where $I_M \in \mathbb{R}^{M \times M}$ is an identity matrix.  In addition, $v$ and $\|g_t(\cdot)\|_{\infty}$, for all $t$, are uniformly upper bounded by an absolute constant $C_g > 0$.  
\end{itemize}
\end{assume}

\Cref{model:sepp change point} and \Cref{assume:parameters} completely charaterize the problem with model parameters $M$ (the dimensionality of the time series), $d$ (the sparsity parameter indicating an upper bound of the number of nonzero entries in all the coefficient matrices), $\Delta$ (the minimal spacing between change points), and $\kappa$ (the minimal jump size), along with the sample size $T$.  The consistency we are to establish is based on allowing $M$ and $d$ to diverge and $\kappa$ to vanish as the sample size $T$ diverges unbounded.  

The number of parameters at each time point is of order $M^2$, which is allowed to well exceed the sample size.  A sparsity constraint therefore comes into force in Assumption {\bf A1}, which is a standard assumption in the high-dimensional statistics literature.  Note that the set $S$ is the union of all $(m,m')$ pairs with a nonzero entry in any coefficient matrix.  Assumption {\bf A2} echoes the imposition of the constraint domain $\mathcal{C}$ \eqref{eq:constrain set} in the optimization \eqref{eq:definition of A hat}, to ensure the stationarity of the \sepp.  In fact, the constant one in the upper bound can be relaxed to any absolute constant and is set to be one in this paper for identification issue.  To be specific, what goes into the model is the product of $A_m(t)$ and $g_t\{\mathcal{X}(t)\}$, and the latter is assumed to be upper bounded in sup-norm in  Assumption {\bf A4}.

Assumption {\bf A3} can be regarded as a signal-to-noise assumption.  It is required that the minimal spacing $\Delta$ is at least of a constant fraction of the total sample size, which implies the number of change points is of order $O(1)$.  This might appear to be strong compared to other change point detection literature, however, the problem we are facing here is challenging due to the nonlinearity of the \sepp model.  In fact, Assumption {\bf A3} is a mild condition and covers some challenging scenarios.  For instance, Assumption {\bf A3} holds if $M \asymp \exp(T^{1/2})$, $d \asymp T^{1/4}$ and $\kappa \asymp \log(T)$.  The quantity $\xi$ can be set arbitrarily small and it ensures the consistency of the estimator which will be explained after \Cref{theorem:SEPP change point}.
 
Assumption {\bf A4} can be interpreted as the restricted eigenvalue condition for \sepp processes.  We refer readers to Section~4 of \cite{mark2018network} for a number of common self-excited point process models satisfying Assumption {\bf A4}.

\medskip

In what follows, we show the consistency of \pdp in \Cref{theorem:SEPP change point}.
	 
\begin{theorem} \label{theorem:SEPP change point}
Let $\{X(t)\}_{t=1}^T \subset \mathbb{Z}^M$ be an \sepps generated from \Cref{model:sepp change point} and satisfying \Cref{assume:parameters}.  Let $\{\widehat \eta_k\}_{k=1}^{\widehat K}$ be the change point estimators from the \pdp algorithm detailed in \Cref{algorithm:PDP} with tuning parameters
	\begin{equation}\label{eq-tuning-order}
		\lambda =  C_\lambda \log(TM) \quad \text{and} \quad \gamma =  C_\gamma \log^2(TM) d\left(1 +  d\kappa^{-2}\right),
	\end{equation} 
	where $C_\lambda, C_\gamma > 0$ are absolute constants, depending only on $p$, $\omega$, $C_{\Delta, 1}$, $C_{\Delta, 2}$ and $C_g$.  We have that
	\[
		\mathbb{P}\left\{\widehat{K} = K \quad \mbox{and} \quad \max_{k = 1, \ldots, K}|\widehat{\eta}_k - \eta_k| \le C_{\epsilon}d^2 \log^2(TM) \max\{\kappa^{-2}, \kappa^{-4}\}\right\} \geq 1- 2 (TM)^{-1},
	\]
	where $C_\epsilon > 0$ is an absolute constant only depending on $p$, $\omega$, $C_{\Delta, 1}$, $C_{\Delta, 2}$ and $C_g$.
\end{theorem}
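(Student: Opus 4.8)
The plan is to combine one probabilistic (concentration) step with a deterministic combinatorial argument driven by the defining optimality of $\widehat{\mathcal P}$. Write $\mathcal P^*$ for the true partition induced by $\{\eta_k\}_{k=1}^K$. Since $\widehat{\mathcal P}$ minimises the objective in \eqref{eq:SEPP DP}, it obeys the fundamental inequality
\[
\sum_{I \in \widehat{\mathcal P}} H(\widehat A(I), I) + \gamma|\widehat{\mathcal P}| \le \sum_{I \in \mathcal P^*} H(\widehat A(I), I) + \gamma|\mathcal P^*|,
\]
and everything will be deduced by bounding the two sides on a high-probability event. On the right we must show that over a \emph{homogeneous} interval $I$ (one containing no change point), $\widehat A(I)$ fits nearly as well as the truth, so $H(\widehat A(I),I) - H(A^*(I),I)$ is small; on the left we need a matching \emph{lower} bound showing that any interval straddling a change point pays an excess cost proportional to the accumulated signal. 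Balancing these two quantities against the penalty $\gamma$ will pin down $\widehat K = K$ and the localization rate simultaneously.

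First I would establish the probabilistic ingredient. Because $\{X(t)\}$ is self-exciting, the summands in $H$ are dependent, but at the truth the centred score $\{X_m(t+1) - \exp(v + A^*_m(t) g_t\{\mathcal X(t)\})\} g_t\{\mathcal X(t)\}$ is a martingale difference sequence for the filtration generated by $\mathcal X(t)$. The Azuma--Hoeffding-type bound of \Cref{lemma:Azuma Hoeffding} then controls, uniformly over all $O(T^2)$ integer intervals $I=[s,e]$ and coordinates,
\[
\max_m \Bigl\| \sum_{t=s}^{e-1} \{X_m(t+1) - \exp(v + A^*_m(t) g_t\{\mathcal X(t)\})\} g_t\{\mathcal X(t)\} \Bigr\|_\infty \lesssim |I|^{1/2}\log(TM),
\]
with probability at least $1 - 2(TM)^{-1}$, which is exactly what dictates the penalty scale $\lambda|I|^{1/2}$ and the choice $\lambda \asymp \log(TM)$. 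On this event, a Lasso-type analysis of the convex program \eqref{eq:definition of A hat} --- using sparsity $d$ from A1, the boundedness in A4, and the restricted-eigenvalue lower bound furnished by the positive-definiteness $\omega$ in A4 --- yields, on any homogeneous $I$,
\[
H(\widehat A(I), I) - H(A^*(I), I) \lesssim \log^2(TM)\, d, \qquad \|\widehat A(I) - A^*(I)\|_{\mathrm F}^2 \lesssim \frac{\log^2(TM)\, d}{|I|}.
\]

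Next comes the deterministic core. For an interval $I$ containing a single true change point $\eta_k$, split it as $I^- = I \cap [1,\eta_k-1]$ and $I^+ = I \cap [\eta_k, T]$; strong convexity of the Poisson log-likelihood near the truth (again via $\omega$) forces any single matrix to misfit one piece, giving an excess-cost lower bound of the form
\[
H(\widehat A(I), I) \ge H(\widehat A(I^-), I^-) + H(\widehat A(I^+), I^+) + c\,\frac{|I^-|\,|I^+|}{|I|}\,\kappa^2 - C \log^2(TM)\, d.
\]
Feeding the upper bound (homogeneous intervals fit well) and this lower bound (heterogeneous intervals are costly) into the fundamental inequality, I would argue in the two usual directions: taking $\gamma$ larger than the per-interval estimation error $\log^2(TM)\,d$ rules out spurious splits and forces $\widehat K \le K$; and, since A3 guarantees $\Delta \ge C_{\Delta,1}T$ and $\Delta\kappa^2 \gg \gamma$, failing to place an estimate near some $\eta_k$ would leave a long heterogeneous interval whose excess cost exceeds the $\gamma$ saved, forcing $\widehat K \ge K$ and a one-to-one matching. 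The localization bound then follows by applying the excess-cost inequality to the interval between a matched pair: if $|\widehat\eta_k - \eta_k| = \epsilon$, the signal collected, of order $\epsilon\kappa^2$, must be dominated by the estimation budget of order $d^2\log^2(TM)$, which after rearranging yields $\epsilon \lesssim d^2 \log^2(TM)\max\{\kappa^{-2}, \kappa^{-4}\}$.

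The hardest part, and where the \sepp structure really bites, is the deterministic lower bound across a change point. Unlike the linear (Gaussian VAR) case, the exponential link makes the population curvature of the loss depend on the magnitude of the fitted coefficients, so the effective strong-convexity constant is not uniform; handling this is precisely what produces the split captured by $\max\{\kappa^{-2}, \kappa^{-4}\}$ --- a $\kappa^{-2}$ rate when the curvature is bounded below by a constant, and a degraded $\kappa^{-4}$ rate in the small-signal regime where the curvature itself shrinks with $\kappa$. A secondary difficulty is that the concentration and the restricted-eigenvalue control must hold \emph{simultaneously} over all $O(T^2)$ intervals under only the mild $\Delta \gtrsim T$ spacing of A3; the uniform union bound is what forces the extra logarithmic factors and the $\log^{2+\xi}(TM)$ slack in A3, and carrying these factors consistently through the chain of inequalities is the main bookkeeping burden.
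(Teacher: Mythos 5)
Your architecture matches the paper's: the proof there also starts from the optimality of $\widehat{\mathcal P}$ in \eqref{eq:SEPP DP} (applied interval by interval, as in \eqref{eq:prop 1 case a}), uses Azuma--Hoeffding martingale concentration made uniform over the $O(T^2)$ intervals (\Cref{lemma:Azuma Hoeffding,lemma:finite change points event 1,lemma:finite change points event 2}), a restricted-strong-convexity step driven by Assumption \textbf{A4}, the signal lower bound $\sum_{t\in I}\|\Delta(t)\|_{\mathrm F}^2 \ge \frac{|I_1||I_2|}{|I|}\kappa^2$, and a final balance against $\gamma$ split into a localization proposition (\Cref{proposition:dp 1}) and a counting proposition (\Cref{lemma:consistency of number}). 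So the plan is sound and essentially the paper's.

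One piece of your reasoning is wrong, though, and it concerns what you call the hardest part. You attribute the $\max\{\kappa^{-2},\kappa^{-4}\}$ to the curvature of the exponential link degrading in a small-signal regime, claiming ``the effective strong-convexity constant is not uniform.'' In this model it \emph{is} uniform: by Assumption \textbf{A2} and the constraint set \eqref{eq:constrain set} every fitted and true row satisfies $\|A_m\|_1\le 1$, and by \textbf{A4} the design $g_t$ is bounded by $C_g$, so the argument $v+A_m g_t\{\mathcal X(t)\}$ lives in a fixed compact interval and the second derivative of $\exp(\cdot)$ is bounded below by an absolute constant $c$ --- this is exactly the content of \Cref{eq:strong convexity standard}. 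The $\kappa^{-4}$ term has a different, more prosaic origin: in the basic inequality the $\ell_1$-penalty mismatch leaves a cross term of order $\lambda\sqrt{|I_2|}\,d$ (from $\lambda\sqrt{|I_2|}\|A^*(I_2)\|_1$), and absorbing it against the signal $|I_2|\kappa^2$ via the inequality $\lambda\sqrt{|I_2|}\,d \le \tfrac{8\lambda^2 d^2}{\omega\kappa^2}+\tfrac14|I_2|\omega\kappa^2$ produces the $\lambda^2 d^2\kappa^{-4}$ contribution (Step~6 of \Cref{lemma:one change point}). If you executed your plan looking for a curvature collapse you would be chasing a phantom difficulty and would miss the term that actually needs to be absorbed. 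Separately, to make the counting step airtight you also need to rule out estimated intervals containing three or more change points (\Cref{lemma:three change point}), which your sketch glosses over but which follows from $\Delta\ge C_{\Delta,1}T$ in \textbf{A3}.
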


The proof of \Cref{theorem:SEPP change point} is deferred to \Cref{sec-proof-thm}, where it can be seen that the order of the estimation error is of the form 
	\[
		\frac{\lambda^2 d}{\kappa^2} + \frac{\lambda^2 d^2}{\kappa^4} +  \frac{\gamma}{\kappa^2}.
	\]
	Due to the signal-to-noise ratio condition in Assumption~\textbf{A3}, we have that
	\begin{align*}
		\frac{\max_{k = 1, \ldots, K}|\widehat{\eta}_k - \eta_k|}{\Delta} \lesssim \frac{d^2 \log^2(TM) \max\{\kappa^{-2}, \kappa^{-4}\}}{\Delta} \lesssim \frac{d^2 \log^2(TM) \max\{\kappa^{-2}, \kappa^{-4}\}}{d^2 \log^{2 + \xi}(TM) \max\{\kappa^{-2}, \kappa^{-4}\}} \to 0,
	\end{align*}
	as $T \to \infty$.  This explains the role of the quantity $\xi$ in Assumption~\textbf{A3} and shows the consistency of the \pdp algorithm.  In fact, if we let $d = 1$ and assume $\kappa > 1$, then the localization error we derived here coincides with the optimal localization error in the univariate mean change point detection problem \citep[e.g.][]{wang2020univariate}.

Two tuning parameters are involved, where $\lambda$ is used in the optimization \eqref{eq:penalized likelihood} to recover the sparsity in estimating high-dimensional coefficient matrices, and $\gamma$ is involved in optimizing \eqref{eq:SEPP DP} to penalize the over-partitioning.  The order of $\lambda$ required in \eqref{eq-tuning-order} is a logarithmic quantity in $T$ and $M$, which is resulted from a union bound argument applied to a sub-exponential concentration bound.  The requirement on $\gamma$ is essentially that $\gamma \asymp \lambda^2\left(d + d^2\kappa^{-2}\right)$, which can be intuitively explained as an upper bound on the difference between $H(\widehat{A}(I_1), I_1) + H(\widehat{A}(I_2), I_2)$ and $H(\widehat{A}(I_1 \cup I_2), I_1 \cup I_2)$, where $I_1$ and $I_2$ are two relatively long, non-overlapping and adjacent intervals, and there is no true change point near the shared endpoint of $I_1$ and $I_2$.  In this case, one would not wish to partition $I_1 \cup I_2$ into $I_1$ and $I_2$.  If we only focus on the log-likelihood functions, over-estimating will result in that 
	\[
		H(\widehat{A}(I_1), I_1) + H(\widehat{A}(I_2), I_2) < H(\widehat{A}(I_1 \cup I_2), I_1 \cup I_2).
	\]
	The penalty we impose through $\gamma$ will therefore avoid this over-partitioning.

%\Cref{theorem:SEPP change point} provides  theoretical guarantees for a consistent change point estimation procedure, while the practical issues including  tuning parameter selection will be studied in \Cref{section:simulations}. 

\subsection{Comparisons with related work}
\label{sec:prior}
In a broad sense, as we have mentioned, there have been numerous  existing papers on different aspects of \sepps.  In fact, another related area is the analysis of piecewise-stationary time series models, where we also see a vast volume of existing papers.  The two most related papers are \cite{mark2018network}, which is concerned with a stationary, high-dimensional \sepp process, and \cite{wang2019localizing}, which studies a piecewise-stationary high-dimensional linear process. 

%It is natural to keep the continuous-time perspective and extend the univariate to multivariate processes \citep[for a comprehensive review, see][]{reinhart2018review}, but it is understandably and extremely challenging to retain the continuous-time perspective when the dimension diverges as the sample size goes unbounded. In this case, one would like to bin the continuous time into small intervals and counts the number of events occurring in each bin, then the continuous-time self-exciting point processes are transformed into discrete-time ones.

%As for the high-dimensional self-exciting processes, in addition to moving from continuous-time to discrete-time processes, due to the computational challenges caused by the high dimensionality, in empirical work, one often deploys a log-linear intensity model, rather than the linear intensity assumed in the original Hawkes processes. To be specific, the intensity is a linear function of history in the Hawkes process and log-linear intensity models assume the logarithm of the intensity is a linear function of history. For detailed comparisons between these two, see \cite{mark2018network}. 

\cite{mark2018network} studied a stationary version of \Cref{model:sepp change point} with $K = 0$.  The penalized estimator of the coefficient matrix developed there is almost identical to the ones summoned in our problem in \eqref{eq:definition of A hat}.  There are a few fundamental differences between this paper and \cite{mark2018network}.  (1) Due to the piecewise-stationarity assumed in \Cref{model:sepp change point}, when estimating the coefficient matrices in \eqref{eq:definition of A hat} and \eqref{eq:penalized likelihood}, it is possible that there exists a true change point in the interval of interest and the estimator we seek is an estimator of a mixture of different true coefficient matrices.  (2) We provide a more refined analysis as an improved version of \cite{mark2018network}, for instance, the optimization constrain domain $\mathcal{C}$ defined in \eqref{eq:constrain set} is a cleaner version of its counterpart in \cite{mark2018network}; a subspace compatibility condition is required in \cite{mark2018network} to control the ratio of different norms of the coefficient matrix, and this assumption is shown to be redundant in our new analysis.

The other closest-related work is \cite{wang2019localizing}, where the change point localizing problem in the piecewise-stationary vector autoregressive models is investigated and a penalized dynamic programming approach was deployed there.  The main differences between this paper and \cite{wang2019localizing} comes from the underlying model.  The vector autoregressive model is a linear model in the sense that given $\mathcal{X}(t)$ the history data up till time point $t$, the conditional expectation of $X(t+1)$ is a linear combination of the columns of $\mathcal{X}(t)$, which is not the case here.  The self-exciting point process is a nonlinear model, and as we have mentioned, the logarithm of the conditional intensity is a linear function of the history.  Another key difference is that  \cite{wang2019localizing} are concerned with sub-Gaussian innovation sequences, while the counting processes we study here determine the heavy-tail properties of the data.

\section{Numerical Experiments}
\label{section:numeric}
%!TEX root = . \ notes.tex

In this section, we further examine the performances of the \pdp algorithm by numerical experiments, with simulated data analyzed in \Cref{sec-sim} and a real data set in \Cref{sec-real-data}.

\subsection{Simulated data analysis} \label{sec-sim}

We generate data according to \Cref{model:sepp change point} and \Cref{assume:parameters}.  In particular, we adopt the setting in  \cite{mark2018network} and assume that the design function $g_t(\cdot)$ is defined to be
	\begin{equation}\label{eq-g-specify}
		g_t\{\mathcal X(t)\} = \left(\min\{\mathcal{X}_1(t), \, C_g\}, \ldots, \min\{\mathcal{X}_M(t), \, C_g\}\right)^{\top} \in \mathbb{R}^M,
	\end{equation}
	where $C_g > 0$ is a constant, $\mathcal{X}(t)$ is an $M \times t$ matrix and $\mathcal{X}_m(t)$ denotes the $m$th row of $\mathcal{X}(t)$, $m \in \{1, \ldots, M\}$.   For the two tuning parameters $\lambda$ and $\gamma$ defined in \eqref{eq:penalized likelihood} and \eqref{eq:SEPP DP}, respectively, with the theoretical guidance in \Cref{theorem:SEPP change point}, we fix $\lambda =90 \log(TM)$ and $\gamma = \log^2(M)/2$ in all experiments in this section.

Since the piecewise-stationary \sepp model is first introduced here, we do not have direct competitors.  For illustration purpose, however, we compare our \pdp algorithm with the SBS-MVTS algorithm \citep{cho2015multiple} and E-Divisive procedure \citep{matteson2014nonparametric}, both of which are designed to detect abrupt change points in multivariate time series, but neither of which is designed specifically for the scenarios we are studying here.  Having said this, there are the reasons we choose these two competitors.  The SBS-MVTS can identify covariance changes in the high-dimensional autoregressive time series and the E-Divisive procedure can estimate of both the number and locations of change points under mild assumptions on the first or second moments of the underlying distributions.  Since Poisson random variables have the same means and variances, these two competitors may be able to detect the changes in Poisson processes with piecewise-constant parameters.  In all the simulated experiments, the tuning parameters for SBS-MVTS algorithm and E-Divisive procedure are selected according to the information-type criteria and permutation tests in the R \citep{R} packages wbs \citep{wbs-R} and ecp \citep{ecp-R}, respectively.

Let $\{\widehat \eta_k\}_{k=1}^{\widehat K}$ and $\{\eta_k\}_{k=1}^{K}$ be a collection of change point estimates and a collection of true change points, respectively.  We evaluate the estimators' performances by the absolute error $|K -\widehat K|$ and their Hausdorff distance.  The Hausdorff distance between two sets $\mathcal A$ and $\mathcal B$ is defined as
	\begin{equation}\label{eq-D-defineee}
		\mathcal D( \mathcal A, \mathcal  B ) = \max\{ d ( \mathcal A | \mathcal  B ),\, d(\mathcal  B | \mathcal  A)   \},
	\end{equation}
	where $$d(\mathcal  A | \mathcal  B) = \max_{a\in \mathcal A } \min_{b \in \mathcal B } |a-b|. $$  
 
In the sequel, we consider three settings.  Recall that $T$ is the total number of time points, $M$ is the dimensionality of the time series and $C_g$ is the threshold used in the design function $g_t(\cdot)$, which is specified in \eqref{eq-g-specify}.  Every setting is repeated 100 times.  Additional setting details are listed below. 
	\begin{itemize}
	\item [(a)]	One change point and varying jump size.  Fix $T = 300$, $M = 30$, $C_g = 6$ and the intercept $v = 1/2$, which is defined in \eqref{eq-lambda-definition}. Let
		\[
			A^*(t) = \begin{cases}
				 (\rho v_1, \rho v_2, 0_{M\times (M-2)}) \in \mathbb{R}^{M \times M}, & t \in [1, 150], \\
				 (\rho v_2, \rho v_1, 0_{M\times (M-2)}) \in \mathbb{R}^{M \times M}, & t \in [151, 300],
			\end{cases}
		\]
	where $v_1 \in \mathbb{R}^M$ with odd coordinates being 1 and even coordinates being $-1$, $v_2 = -v_1$, $0_{M \times (M-2)} \in \mathbb{R}^{M \times (M-2)}$ is an all zero matrix and $\rho \in \{0.15, 0.20, 0.25, 0.30, 0.35\}$.
	\item [(b)] Two change points and varying minimal spacing.  Let $T \in \{180, 240, 300, 360, 420\}$, $M = 40$, $C_g = 8$ and the intercept $v = 1/4$.  Let the coefficient matrices satisfy $(A^*(t))_{ij} = 0$, $|i - j| > 1$, $t \in [1, T]$,
		\[
			(A^*(t))_{ij} = \begin{cases}			
				\begin{cases}
 					0.15 & t \in [1, T/3] \cup (2T/3, T], \\
 					-0.15 & t \in (T/3, 2T/3],
 				\end{cases} & i = j, \\ \\
				\begin{cases}
	 				-0.15 & t \in [1, T/3], \\
 					0.15 & t \in (T/3, T],
 				\end{cases} & i - j = -1, \\ \\
				\begin{cases}
 					0.15 & t \in [1, 2T/3], \\
	 				-0.15 & t \in (2T/3, T],
 				\end{cases} & i - j = 1.
 			\end{cases}	
		\]
	\item [(c)]	Two change points and varying dimension.  Let $T = 450$, $C_g = 4$, $v =1/5$ and $M \in \{15, 20, 25, 30, 35\}$.  Let
		\[
			A(t) = \begin{cases}
				(v_1, v_2, v_3, 0_{M \times (M-3)}), & t \in [1, 150], \\
				(v_2, v_3, v_3, 0_{M \times (M-3)}), & t \in [151, 300], \\
				(v_3, v_2, v_1, 0_{M \times (M-3)}), & t \in [301, 450],
			\end{cases} 
		\]
		where $v_1, v_2, v_3 \in \mathbb{R}^M$ are
		\begin{align*}
			v_1 & = (-0.075, 0.15, 0.3, -0.3, 0, \ldots, 0)^{\top}, \\
			v_2 & = (\underbrace{0, \ldots, 0}_4, 0.375, -0.225, -0.075, 1.5, 0.225, 0, \ldots, 0)^{\top}, \\
			v_3 & = (\underbrace{0, \ldots, 0}_8, -0.15, -0.075, 0.45 , -0.225, 0, \ldots, 0)^{\top}.\end{align*} 
	\end{itemize}

\begin{table}[h]\small
\begin{center}
\begin{tabular}{ccccccc}
 	\multicolumn{7}{c}{Setting (a)} \\[5pt]
	Method & Metric & $\rho = 0.15$ & $\rho = 0.20$ & $\rho = 0.25$ & $\rho = 0.30$ & $\rho = 0.35$ \\ [3pt]
	\pdp & $\mathcal{D}$ & 3.1(9.8) & 1.1(1.0) & 0.7(0.5) & 0.6(0.5) & 0.6(0.5) \\
	SBS-MVTS &  & 282.6(69.1) & 226.5(119.9) & 114.7(130.8) & 47.3(52.9) & 9.3(21.3) \\
	E-Divisive &  & 151.0(0.0) & 151.0(0.0) & 151.0(0.0) & 151.0(0.0) & 151.0(0.0) \\ [3pt]
	\pdp & $|\widehat{K} - K|$ & 0.0(0.0) & 0.0(0.0) & 0.0(0.0) & 0.0(0.0) & 0.0(0.0) \\
	SBS-MVTS & & 0.9(0.2) & 0.7(0.4) & 0.4(0.5) & 0.5(0.5) & 0.1(0.3) \\
	E-Divisive & & 300.0(0.0) & 300.0(0.1) & 300.0(0.5) & 296.4(16.2) & 287.2(31.4) \\[5pt]
	\hline 
 	\multicolumn{7}{c}{Setting (b)} \\[5pt]
 	& & $T = 180$ & $T = 240$ & $T = 300$ & $T = 360$ & $T = 420$ \\ [3pt]
	\pdp & $\mathcal{D}$ & 11.5(6.2) & 3.7(4.6) & 2.5(4.6) & 2.8(4.3) & 1.2(3.6) \\
	SBS-MVTS & & 177.0(21.1)  & 233.3(38.1)  & 270.1(85.5) & 243.8  (156.1) & 263.5(185.2) \\
	E-Divisive & & 61.0(0.0) & 81.0(0.0)  & 101.0(0.0) & 121.0(0.0) & 141.0(0.0)  \\  [3pt]
	\pdp & $|\widehat{K} - K|$ & 0.0(0.0) & 0.0(0.0) & 0.0(0.0) & 0.0(0.0) & 0.0(0.0) \\
	SBS-MVTS & & 2.0(0.2) & 1.9(0.3) & 1.9(0.4) & 1.6(0.7) & 1.6(0.6) \\
	E-Divisive & & 178.9(0.3) & 238.9(0.3) & 298.9(0.3) & 358.8(0.4) & 418.8(0.4) \\	 [5pt]
	\hline 
 	\multicolumn{7}{c}{Setting (c)} \\[5pt]	
	& & $M = 15$ & $M = 20$ & $M = 25$ & $M = 30$ & $M = 35$ \\ [3pt]
	\pdp & $\mathcal{D}$ & 3.3(5.0) & 3.6(5.5) & 3.2(4.5) & 5.0(12.4) & 6.1(13.2) \\
	SBS-MVTS & & 401.4(112.8) & 378.2(129.9) & 411.3(101.7) & 377.7(134.1) & 375.4(134.5) \\
	E-Divisive & & 151.0(0.0) & 151.0(0.0) & 151.0(0.0) & 151.0(0.0) & 151.0(0.0) \\  [3pt]
	\pdp & $|\widehat{K} - K|$ & 0.0(0.0) & 0.0(0.0) & 0.0(0.0) & 0.0(0.0) & 0.0(0.0) \\
	SBS-MVTS & & 1.8(0.4) & 1.7(0.5) & 1.9(0.3) & 1.8(0.4) & 1.8(0.4) \\
	E-Divisive & & 448.6(0.5) & 449.0(0.3) & 449.0(0.1) & 449.0(0.0) & 449.0(0.0)
\end{tabular}
\end{center}
\caption{Simulation results.  Each cell is in the form of mean(standard error).  For the metrics, $\mathcal{D}$ denotes the Hausdorff distance defined in \eqref{eq-D-defineee} and $|\widehat{K} - K|$ denotes the absolute errors in estimating the numbers of the change points.  \pdp uniformly outperforms the other two methods across a range of parameter values, including $\rho$ (reflecting the jump size), $T$ (the number of samples), and $M$ (the dimension of the time series). \label{tab-sim}}
\end{table}
\begin{figure}[h]
	\includegraphics[width=\columnwidth]{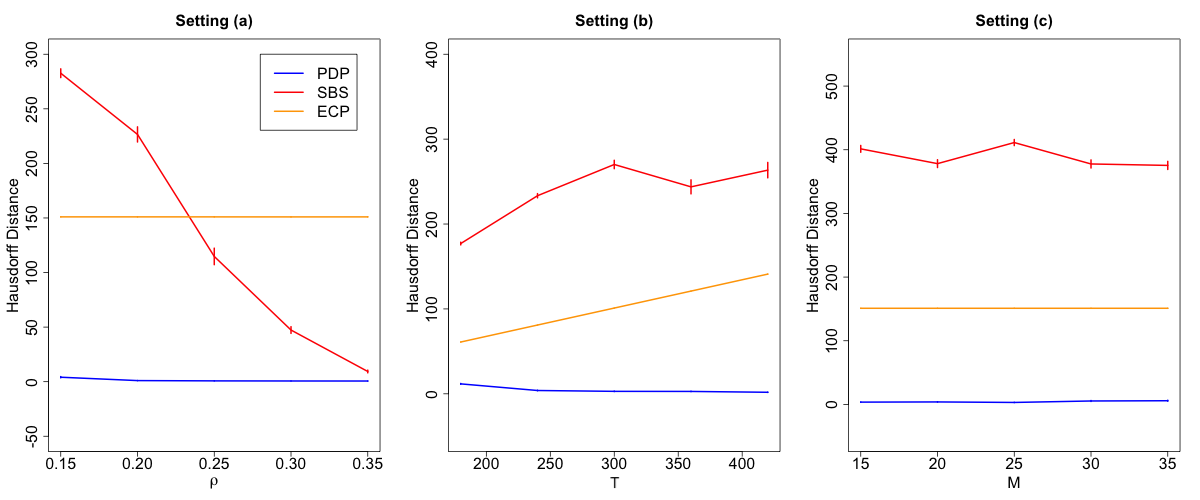}
	\caption{A visualization of the mean Hausdorff distance metric $\mathcal{D}$ \eqref{eq-D-defineee} in \Cref{tab-sim}.  The methods concerned are: \Cref{algorithm:PDP}, \pdp; SBS, SBS-MVTS; ECP, E-Divisive.  In each panel, the y-axis represents the mean Hausdorff distance across 100 repetitions and the x-axis represents the varying parameter in each setting.  \pdp uniformly outperforms the other two methods across a range of parameter values, including $\rho$ (reflecting the jump size), $T$ (the number of samples), and $M$ (the dimension of the time series).}
	\label{fig:simulations}
\end{figure} 
  
We collect the simulation results in \Cref{tab-sim}, each cell containing the mean and standard errors of 100 repetitions.  The Hausdorff distances are visualized in \Cref{fig:simulations} to improve readability.  These three settings have ranged over various situations.  It is clearly that \pdp outperforms both competitors in all settings on both metrics.

\subsection{Real data example}\label{sec-real-data}

We consider the neuron spike train data set previously analyzed in  \cite{watson2016network}.  The three chosen data sets are from \cite{WatsonBO} and each consists of wake-sleep episodes of multi-neuron spike train recording sessions of one laboratory animal.  Each wake-sleep episode includes at least 7 minutes of wake time followed by at least 20 minutes of sleep time.  Note that the wake and sleep periods were recorded so the true change point in each dataset is the end of the wake period.  For each data set, we first compute the Firing Rate (FR) of each neuron using a 5-second discretization time window and then apply \Cref{algorithm:PDP} with $\lambda = 800$ and $\gamma = \log^2(M)/2$, the same as in  \Cref{sec-sim}.  For comparison, we also apply the SBS-MVTS algorithm \citep{cho2015multiple} and E-Divisive procedure \citep{matteson2014nonparametric}.  

These three subsets are on subjects 20140528\_565um, BWRat17\_121912 and BWRat19\_032413.  The numbers of neurons, i.e.~the dimensions of the time series $M$, are 24, 33 and 41, respectively.  The total numbers of 5-second time intervals, i.e.~the total number of time points $T$ considered in \Cref{model:sepp change point}, are 3750, 2995 and 3920, respectively.  The true change points are at point 788, 1184 and 2001, respectively.

The results are summarized in \Cref{table:real data} and are depicted in \Cref{fig:realdata}.  As we can see from the table and the figure, our \pdp algorithm consistently outperforms the other two algorithms in these real data examples. 
  
\begin{table} [ht]  
\begin{center}
	\begin{tabular} {ccccc}  
	Subject & Metric & \pdp & SBS-MVTS & E-Divisive \\[5pt]
	\hline 
	\multirow{2}{*}{20140528\_565um} & $\mathcal D$ & 38 & 382 & 2966 \\
	& $|\widehat K -K|$ & 0 & 0 & 740 \\
	\multirow{2}{*}{BWRat17\_121912} & $\mathcal D$ & 84 & 140 & 1816 \\
	& $|\widehat K -K|$ & 0 & 0 & 595 \\
	\multirow{2}{*}{BWRat19\_032413} & $\mathcal D$ 	& 1 & 99 & 1996 \\
	& $|\widehat K -K|$ & 0 & 0 & 773 
	\end{tabular}
	\caption{The results of three algorithms on multi-neuron spike train data sets.     For the metrics, $\mathcal{D}$ denotes the Hausdorff distance defined in \eqref{eq-D-defineee} and $|\widehat{K} - K|$ denotes the absolute errors in estimating the numbers of the change points.  \pdp uniformly outperforms the other two methods.\label{table:real data}} 
\end{center} 
\end{table}   

\begin{figure}[p]
	\includegraphics[width=\columnwidth]{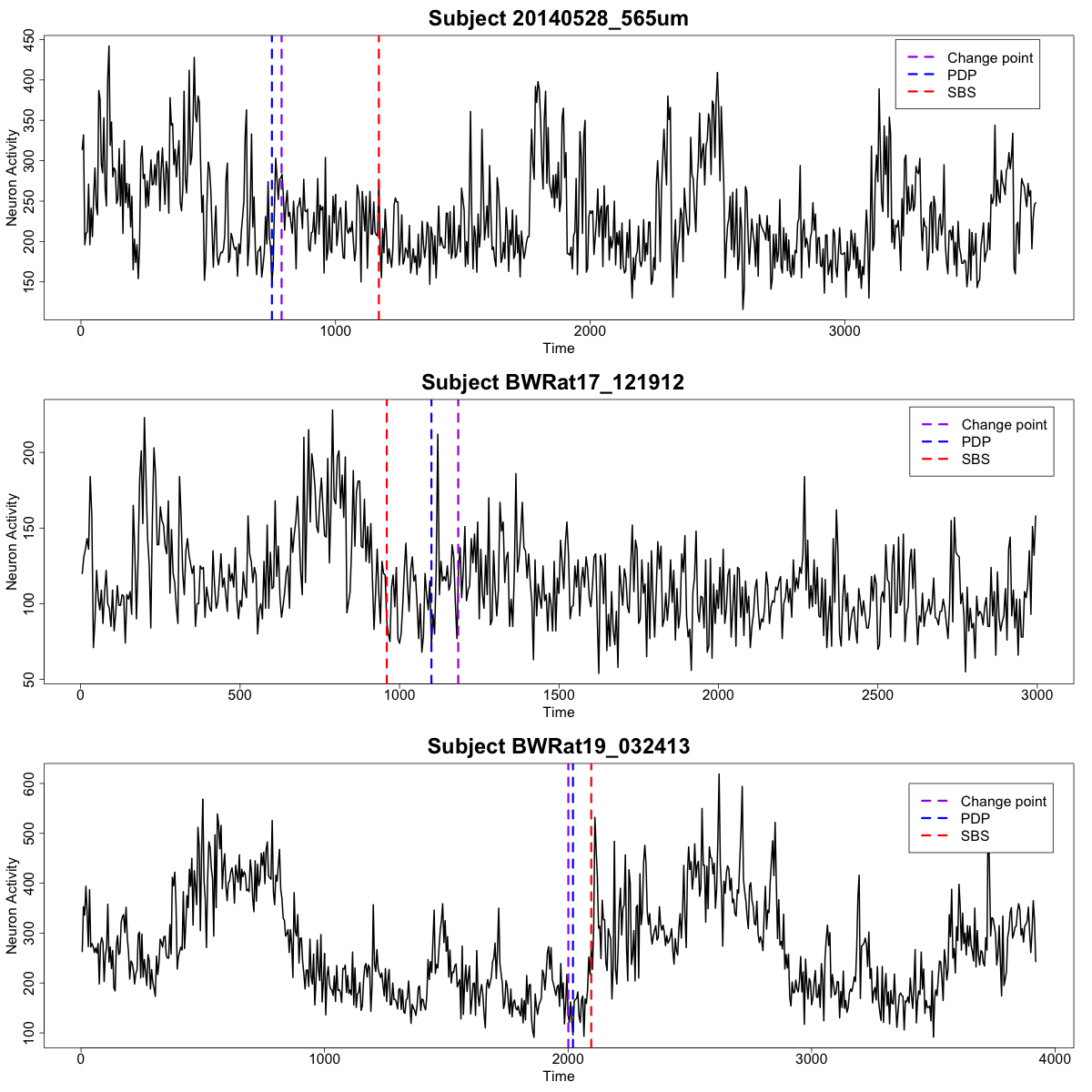}
	\caption{The true change points and the estimators provided by   \pdp and SBS-MVTS in the multi-neuron spike train data sets.  Each panel corresponds to a subject.  The y-axis represents the sum of the FRs across all neurons   and the x-axis represents the ordered time intervals.  The   estimators of the E-Divisive procedure are not included because the corresponding $\widehat K$'s are too large.  \pdp uniformly outperforms the other two methods.  See \Cref{table:real data} for detailed information.  }
	\label{fig:realdata}
\end{figure}   
 
\section{Discussions}

In this paper, we studied piecewise-stationary discrete-time high-dimensional self-exciting Poisson processes, which, or at least the theoretical properties of which were not studied in the literature.  The number of stationary segments in the whole time series is assumed to be an unknown constant.  All the other model parameters are allowed to be functions of the sample size $T$.  We proposed a computationally-efficient and theoretically-guaranteed algorithm.

In the numerical experiments, we fix tuning parameters.  One future research direction is to investigate data-driven methods for tuning parameter selection.  Possible methods include variants of stationary bootstrap \citep{politis1994stationary} or variants of information criteria \citep[e.g.][]{chen2012extended}.

Another future research direction is to extend the techniques we derived in this paper to other popular time series models.  For instance, one key feature of the \sepps we are concerned in this paper is the varying variance structure and heavy tail behaviours.  These share similarities with the GARCH models, which are widely used in finance. 

\bibliographystyle{plainnat}
\bibliography{citations}

\appendix

%!TEX root = . \ notes.tex

\section{Proof of \Cref{theorem:SEPP change point}} \label{sec-proof-thm}

In all the appendices, we do not distinguish the notation of every single absolute constant.  For notational simplicity, in the appendices, we drop the subscript of the function $g_t(\cdot)$.

\Cref{theorem:SEPP change point} is an immediate consequence of Proposition~\ref{proposition:dp 1} and \ref{lemma:consistency of number}.

\begin{proposition}\label{proposition:dp 1}
Let $\hatp$ be the defined in \eqref{eq:SEPP DP}.  Under all the assumptions in \Cref{theorem:SEPP change point}, with probability at least $1 - (TM)^{-1}$, the following hold uniformly for any $I = (s, e) \in \hatp$. 
	\begin{itemize}
	\item[{\bf a.}] If $I$ contains only one change point $\eta$, then there exists an absolute constant $C > 0$ such that $$\min \{e-\eta , \, \eta-s  \} \le  C \left ( \frac{ p \lambda^2  d }{ \kappa^2} +  \frac{ p \lambda^2  d^2  }{ \kappa^4}   + \frac{\gamma}{ \kappa^2} \right )    .  $$   
	\item [{\bf b.}] If   $  I  $    contains  exactly two  change points $\eta_{k} $ and $\eta_{k+1}$ , then there exists an absolute constant $C > 0$ such that  $$\max \{e-\eta_{k+1},  \, \eta_{k} -s  \} \le  C \left ( \frac{ p \lambda^2 d }{ \kappa^2} +  \frac{ p \lambda^2  d^2  }{ \kappa^4}   + \frac{\gamma}{ \kappa^2} \right )   .  $$ 
	\item [{\bf c.}] If $|\hatp| > 1$, then let $I$ and $J$ be two consecutive intervals in $\hatp$.  The interval $I \cup J$ contains at least one change point.
	\item [{\bf d.}] The interval $I$ does not contain more than two change points.
	\end{itemize}
\end{proposition}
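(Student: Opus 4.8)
The plan is to exploit the optimality of $\hatp$ as the minimizer in \eqref{eq:SEPP DP} by repeatedly comparing it against explicitly constructed competitor partitions: merging two intervals for part \textbf{c}, and refining an interval at its interior change point(s) for parts \textbf{a}, \textbf{b} and \textbf{d}. Before doing so I would isolate two ingredients that hold on a single high-probability event. The first is a concentration bound for the centred conditional log-likelihood coming from the martingale (Azuma--Hoeffding-type) argument flagged in \Cref{lemma:Azuma Hoeffding}: summed over an interval $I$, the score process deviates from its conditional mean by at most a multiple of $\lambda|I|^{1/2}\|\cdot\|_1$ once $\lambda\asymp\log(TM)$, and a union bound over all $O(T^2)$ integer intervals makes this hold uniformly, which is the source of the $1-(TM)^{-1}$ probability. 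The second is a single-interval analysis: on any interval $I$ free of change points, so that $A^*$ is constant on $I$, combining this deviation bound with the restricted strong convexity supplied by Assumption \textbf{A4} (the lag-$p$ conditional covariance being bounded below by $\omega I_M$) yields both a basic inequality controlling the optimization gain, $H(A^*(I),I)-H(\widehat{A}(I),I)\le C p\lambda^2 d$, and an estimation-error bound $\|\widehat{A}(I)-A^*(I)\|_{\mathrm F}^2\lesssim p\lambda^2 d/|I|$, valid once $|I|$ exceeds the scale prescribed in Assumption \textbf{A3}.

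The complementary and technically central ingredient is a lower bound on the cost of fitting a single matrix across a change point. If $I=(s,e)$ contains one change point $\eta$, the best single matrix still pays for the jump $\kappa$ between the two regimes: using the strong convexity of the population loss restricted to the shorter of the two sub-segments, I would show that $H(\widehat{A}(I),I)$ exceeds the loss obtained after splitting $I$ at $\eta$ by at least $c\,\omega\kappa^2\min\{e-\eta,\eta-s\}$, minus the estimation noise accrued on the short segment. Tracking how the $\ell_1$-constrained estimator $\widehat{A}(I)\in\mathcal C$ interacts with this restricted-eigenvalue lower bound is where the noise terms $p\lambda^2 d$ and $p\lambda^2 d^2\kappa^{-2}$ enter, and I expect this bookkeeping, in particular controlling the cross terms when one regime is much shorter than the other so that its estimation error is not negligible, to be the main obstacle.

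Granting these ingredients, parts \textbf{a} and \textbf{b} follow by comparing $\hatp$ with its refinement at the contained change point(s). Optimality gives $H(\widehat{A}(I),I)\le H(\widehat{A}([s,\eta-1]),[s,\eta-1])+H(\widehat{A}([\eta,e]),[\eta,e])+\gamma$; upper-bounding the two change-point-free sub-interval losses by the basic inequality and lower-bounding $H(\widehat{A}(I),I)$ by the misspecification cost produces $\omega\kappa^2\min\{e-\eta,\eta-s\}\le C(p\lambda^2 d+p\lambda^2 d^2\kappa^{-2}+\gamma)$, which rearranges to the claimed rate after dividing by $\kappa^2$. Part \textbf{b} is the same computation applied to each of the two outer segments, with the minimal-spacing bound guaranteeing that the middle segment has length at least $\Delta$. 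For part \textbf{d}, an interval with three or more change points fully contains a segment of length at least $\Delta$ between two interior change points; the misspecification cost across it is then at least $c\,\omega\kappa^2\Delta$, which by the bound $\Delta\ge C_{\Delta,2}\log^{2+\xi}(TM)d^2\max\{\kappa^{-2},\kappa^{-4}\}$ in Assumption \textbf{A3} strictly exceeds $\gamma$ plus the optimization gains, so refining strictly decreases the objective and contradicts optimality.

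Finally, part \textbf{c} runs in the opposite direction. Assuming for contradiction that two consecutive intervals $I,J\in\hatp$ have $I\cup J$ free of change points, I compare $\hatp$ against the partition that merges them. Optimality yields $H(\widehat{A}(I),I)+H(\widehat{A}(J),J)+\gamma\le H(\widehat{A}(I\cup J),I\cup J)$; bounding the right-hand side above by $H(A^*,I\cup J)\le H(A^*,I)+H(A^*,J)$, using subadditivity of $|\cdot|^{1/2}$ in the penalty, and applying the basic inequality on each of $I$ and $J$ gives $\gamma\le 2Cp\lambda^2 d$. Since the prescribed $\gamma=C_\gamma\log^2(TM)d(1+d\kappa^{-2})\ge C_\gamma\lambda^2 d$ with $C_\gamma$ chosen large enough relative to $Cp$, this is a contradiction, so $I\cup J$ must contain a change point. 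All the invoked interval-wise bounds hold simultaneously on the single high-probability event from the union bound, so the four statements hold jointly with probability at least $1-(TM)^{-1}$.
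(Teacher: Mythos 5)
Your proposal is correct and follows essentially the same route as the paper: you compare $\widehat{\mathcal P}$ against refined or merged competitor partitions, control the change-point-free sub-intervals via the trivial minimizer inequality $H(\widehat A(I_j),I_j)\le H(A^*(I_j),I_j)$ together with martingale concentration and the restricted strong convexity from Assumption \textbf{A4}, and lower-bound the single-matrix fit across a change point by the misspecification term $\tfrac{|I_1||I_2|}{|I|}\kappa^2$, which is exactly how the paper's Lemmas on one, two, zero and three-or-more change points are organized. The union bound over the $O(T^2)$ integer intervals and the contradiction with Assumption \textbf{A3} in parts \textbf{c} and \textbf{d} also match the paper's argument.
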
 

\begin{proposition} \label{lemma:consistency of number}
Let $\hatp$ be the defined in \eqref{eq:SEPP DP}.  Assume that $K \le | \hatp| \le 3K$.  Under all the assumptions in \Cref{theorem:SEPP change point}, with probability at least $1 - (TM)^{-1}$, it holds that $| \widehat {\mathcal  P} | = K$.
\end{proposition}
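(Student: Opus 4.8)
The plan is to leverage the structural description of $\hatp$ supplied by \Cref{proposition:dp 1} together with the defining optimality of $\hatp$ in \eqref{eq:SEPP DP}, and to rule out over-segmentation by an exchange (merging) argument. Since the statement already restricts attention to the window $K \le |\hatp| \le 3K$, the only thing left to exclude is the presence of a \emph{redundant} internal endpoint --- one not pinned to a true change point --- so that $|\hatp|$ is driven down to its minimal admissible value. Throughout, write $\mathcal E = C\left(p\lambda^2 d\kappa^{-2} + p\lambda^2 d^2\kappa^{-4} + \gamma\kappa^{-2}\right)$ for the localisation radius of \Cref{proposition:dp 1}; Assumption~\textbf{A3} forces $\Delta \gg \mathcal E$, so the $\mathcal E$-neighbourhoods of distinct change points are disjoint and every change point is isolated at scale $\mathcal E$.

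First I would record an exchange inequality. Deleting the endpoint shared by two adjacent intervals $I, J \in \hatp$ to form the coarser partition $\p' = (\hatp \setminus \{I, J\}) \cup \{I \cup J\}$ lowers the cardinality penalty by exactly $\gamma$. Because the data-fidelity part of $H$ is additive across a partition while the $\lambda|I|^{1/2}\|\cdot\|_1$ penalty is sub-additive in the interval length, the merge never increases the penalty term, and one obtains $F(\p') - F(\hatp) = \big[H(\widehat A(I\cup J), I\cup J) - H(\widehat A(I), I) - H(\widehat A(J), J)\big] - \gamma$. A contradiction with the minimality of $\hatp$ follows as soon as the bracketed refit difference is strictly below $\gamma$. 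The key estimate is that whenever $I \cup J$ lies inside a single stationary segment, comparing $\widehat A(I\cup J)$ against the true coefficient matrix $A^*$ governing that segment (admissible by Assumption~\textbf{A2}) and invoking the oracle-type deviation bound for each piece yields $H(\widehat A(I\cup J), I\cup J) - H(\widehat A(I), I) - H(\widehat A(J), J) \le C\lambda^2\left(d + d^2\kappa^{-2}\right)$. This is precisely the inequality that forces the scaling $\gamma \asymp \lambda^2(d + d^2\kappa^{-2})$ in \eqref{eq-tuning-order}, so that the bracket is $< \gamma$ by taking $C_\gamma$ large; it rests on the sub-exponential (Azuma--Hoeffding type) concentration of \Cref{lemma:Azuma Hoeffding} applied to the three penalised likelihoods, held uniformly over all $O(T^2)$ candidate intervals by a union bound --- the origin of the $\log^2(TM)$ factor and of the extra $(TM)^{-1}$ in the probability.

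Next I would produce such a mergeable pair under over-segmentation. If $\hatp$ carries a redundant endpoint, then by parts~\textbf{a} and~\textbf{b} of \Cref{proposition:dp 1} each endpoint adjacent to a straddled change point lies within $\mathcal E$ of it, and since $\Delta \gg \mathcal E$ keeps these neighbourhoods disjoint, a pigeonhole argument produces an internal endpoint $\widehat\eta_j$ sitting in the interior of a stationary segment, farther than $\mathcal E$ from every change point. Parts~\textbf{c} and~\textbf{d} then guarantee that the two intervals meeting at $\widehat\eta_j$ straddle at most the change points adjacent to their far endpoints, so the union $I \cup J$ either contains no change point or contains one only within $\mathcal E$ of its boundary. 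In both cases the refit difference remains below $\gamma$, by the bound above applied after trimming the negligible $O(\mathcal E)$ straddling stretch. Merging across $\widehat\eta_j$ then strictly decreases $F$, contradicting the optimality of $\hatp$ and forcing $|\hatp| = K$.

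The main obstacle is the uniform refit bound of the second paragraph, and in particular the fact that the merged interval need not be exactly stationary: its far endpoints may sit an $O(\mathcal E)$ distance from genuine change points. Showing that this short straddling stretch perturbs the fitted likelihood by only $o(\gamma)$, rather than swamping the $\gamma$ savings from dropping an interval, is the delicate step, and it is here that the restricted-eigenvalue Assumption~\textbf{A4}, the stability constraint $\mathcal C$ in \eqref{eq:constrain set}, and the sparsity level $d$ enter to convert the $\ell_1$-penalised likelihood deviations into the explicit $\lambda^2(d + d^2\kappa^{-2})$ rate. Everything else is bookkeeping: intersecting the high-probability event of \Cref{proposition:dp 1} with that carrying the deviation bounds yields the claimed guarantee of at least $1 - (TM)^{-1}$, and, combined with \Cref{proposition:dp 1}, the overall $1 - 2(TM)^{-1}$ of \Cref{theorem:SEPP change point}.
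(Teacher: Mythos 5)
Your strategy is a local exchange argument: locate a redundant internal endpoint, merge the two intervals meeting there, and show the refit difference is below $\gamma$. The paper does something genuinely different: it never merges a single pair, but instead telescopes a global comparison $\mathcal L^*(\eta_1,\ldots,\eta_K)+K\gamma \ge \mathcal L(\widehat\eta_1,\ldots,\widehat\eta_{\widehat K})+\widehat K\gamma \ge \mathcal L^*(\mathrm{Sort}(\widehat\eta's,\eta's))+\widehat K\gamma - CKd\lambda^2 - CK\lambda^2 d\sqrt{\mathcal E}$ (with $\mathcal E$ the localisation radius), so that $\widehat K>K$ forces $\gamma \le CKd\lambda^2+CK\lambda^2 d\sqrt{\mathcal E}$, contradicting \eqref{eq-tuning-order}. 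The two routes are not interchangeable here, and the reason is exactly the step you flag as ``delicate'' and then assert rather than prove: the contribution of the $O(\mathcal E)$-length stub adjoining a true change point.

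The gap is quantitative. Since $\mathcal E \asymp \lambda^2 d\kappa^{-2}+\lambda^2 d^2\kappa^{-4}+\gamma\kappa^{-2}$ itself contains a $\gamma\kappa^{-2}$ term, the natural bound on the likelihood perturbation caused by fitting the wrong coefficient matrix over a stretch of length $\mathcal E$ with jump size $\kappa$ is of order $\mathcal E\kappa^2 \gtrsim C\gamma$, where $C$ is the (unspecified, typically large) constant from \Cref{proposition:dp 1} --- \emph{not} something that shrinks when $C_\gamma$ is enlarged, because enlarging $C_\gamma$ enlarges $\mathcal E$ proportionally. So ``trimming the negligible $O(\mathcal E)$ straddling stretch'' does not yield an $o(\gamma)$ perturbation, and your merge need not strictly decrease the objective. (Note also that \Cref{proposition:dp 1}\textbf{c} already rules out a consecutive pair whose union is change-point free, so the only mergeable pairs available are precisely the problematic nearly-stationary ones.) The paper circumvents this in \Cref{lemma:prop2 three cases}: on the stub $J_1$ it compares against the \emph{true} matrix $A^*(J_1)$, uses the strong convexity inequality to discard a negative quadratic term, and is left with only the martingale cross term, bounded via \Cref{lemma:finite change points event 2} by $C\lambda d\sqrt{|J_1|}\le C\lambda d\sqrt{\mathcal E}$ --- a square-root, not linear, dependence on $\mathcal E$, which \emph{is} dominated by $\gamma$ under \eqref{eq-tuning-order}. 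A second, smaller omission: your bracketed refit difference requires a \emph{lower} bound on $H(\widehat A(I),I)+H(\widehat A(J),J)$ in terms of the oracle fits (the content of \Cref{coro:SEPP RSS}), not only an upper bound on the merged fit; without it the difference cannot be controlled. To repair your argument you would essentially have to import both of these ingredients, at which point it reduces to the paper's proof applied to a single pair.
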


\section{Proof of \Cref{proposition:dp 1}} 
\begin{proof} [Proof of \Cref{proposition:dp 1}]

\Cref{proposition:dp 1} is an immediate consequence of Lemmas~\ref{lemma:one change point}, \ref{lemma:two change point}, \ref{lemma:no change point} and  \ref{lemma:three change point}.  For illustration, we only prove the claim {\bf a}.

Let $I_1 = (s, \eta]$ and $I_2 = (\eta, e]$. Note that
\begin{align}\label{eq:prop 1 case a}  
	 H(\widehat A(I) ,I )  
	\le &  
  H(  \widehat  A  (I_1),I_1 ) +  
   H(  \widehat  A (I_2) , I_2 )   
 + \gamma 
    \le 
    H(     A^*(I_1) ,I_1 ) + H(    A^*(I_2)  ,I_2 )+\gamma
	\end{align}
	where the first inequality follows from 
   the fact that  $\widehat {\p} $ is a minimizer defined in \eqref{eq:SEPP DP} and 
   the second inequality follows  from  the definitions of $\widehat A(I_1) $ and $\widehat A(I_2)$ as in \eqref{eq:definition of A hat}. 
Since there are at most $T^2$ integer intervals in $[1, T]$, a union bound argument leads to that with probability at least $1-(TM)^{-2}$, \Cref{lemma:one change point}  holds uniformly for all integer intervals in $[1, T]$.  The claim \textbf{a.}~therefore holds.
\end{proof}

The following lemma is based on the strong convexity of the log-likelihood function. % For brevity, this lemma will be repeatedly used in the rest of the proofs without further explicit reference. 
\begin{lemma} \label{eq:strong convexity standard}
For $A \in \mathcal C$, where $\mathcal{C}$ is defined in \eqref{eq:constrain set}, under \Cref{assume:parameters}, for any $m \in [1, M]$ and $t \in [1, T - 1]$, it holds that 
\begin{align*}     
& \exp [v + A_m g\{\mathcal X ( t ) \}]   -        \exp [v +A_m^* (t) g\{\mathcal X ( t ) \}]   - X_m(t+1) \Delta_m(t)  g\{\mathcal X ( t ) \}  \\
& \hspace{6cm}  \ge   c  \left[   \Delta_m(t)  g \{\mathcal X(t) \}  \right]^2 +\epsilon_m(t)     \Delta_m(t)     g\{\mathcal X ( t ) \},
 \end{align*}
 where $  \Delta (t) = A  -A ^* (t)$ and $\epsilon (t) = X(t+1) -  \exp [v +A_m^* (t) g\{\mathcal X ( t ) \}] $.
\end{lemma}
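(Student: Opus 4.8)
The plan is to reduce the claim to the one-dimensional strong convexity of $\phi(x) = e^x$ on a bounded interval. Abbreviating $g = g\{\mathcal X(t)\}$ and setting $a = v + A_m^*(t)\, g$ and $b = v + A_m g$, we have $\Delta_m(t)\, g = b - a$, and the left-hand side is exactly the scalar excess loss $\phi(b) - \phi(a) - X_m(t+1)(b-a)$. The first step is a Bregman-type decomposition: adding and subtracting the first-order term $\phi'(a)(b-a) = e^a(b-a)$ gives
\begin{align*}
\phi(b) - \phi(a) - X_m(t+1)(b-a) = \big[\phi(b) - \phi(a) - e^a(b-a)\big] - \big(X_m(t+1) - e^a\big)(b-a),
\end{align*}
so the excess loss equals the Bregman divergence of $\phi$ between $a$ and $b$ plus a remainder that is linear in $\epsilon_m(t) = X_m(t+1) - e^a$ and coincides, up to its sign, with the cross term $\epsilon_m(t)\,\Delta_m(t)\, g$ on the right-hand side of the claim. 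It therefore suffices to bound the Bregman term below by $c\,(b-a)^2$.

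For the second step I would apply Taylor's theorem with Lagrange remainder to $\phi$, obtaining $\phi(b) - \phi(a) - e^a(b-a) = \tfrac12 e^{\xi}(b-a)^2$ for some $\xi$ between $a$ and $b$. The only remaining task is to produce a strictly positive, absolute lower bound on $e^{\xi}$, which reduces to confining $a$ and $b$ to a fixed compact interval.

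This confinement is the only place the assumptions enter. Since $A \in \mathcal C$ from \eqref{eq:constrain set} and Assumption \textbf{A2} give $\|A_m\|_1 \le 1$ and $\|A^*_m(t)\|_1 \le 1$, the $\ell_1$--$\ell_\infty$ Hölder inequality together with $\|g\{\mathcal X(t)\}\|_\infty \le C_g$ from Assumption \textbf{A4} yields $|A_m g| \le C_g$ and $|A^*_m(t) g| \le C_g$; as $v$ is an absolute constant bounded by $C_g$ in \textbf{A4}, both $a$ and $b$ lie in a fixed interval of radius $O(C_g)$. Hence $\xi$ lies there too, and $e^{\xi} \ge e^{-2C_g} =: 2c > 0$, so $\phi(b) - \phi(a) - e^a(b-a) \ge c\,(b-a)^2 = c\,[\Delta_m(t)\, g]^2$. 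Substituting this into the decomposition above gives the claim.

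The computation is essentially routine, and the main (mild) obstacle is precisely this last lower bound on the curvature: the function $e^x$ is only \emph{locally} strongly convex, so one genuinely needs the constraint set $\mathcal C$ and the sup-norm control on $g\{\mathcal X(t)\}$ in Assumption \textbf{A4} to pin both log-intensities into a fixed compact set. Without such confinement the factor $e^{\xi}$ could degenerate toward $0$ and no absolute constant $c$ would exist, which is exactly why the constraint $\mathcal C$ is imposed in the estimator \eqref{eq:definition of A hat} in the first place.
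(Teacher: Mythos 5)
Your proof is correct and takes essentially the same route as the paper's: both arguments confine the two log-intensities $v + A_m g\{\mathcal X(t)\}$ and $v + A^*_m(t) g\{\mathcal X(t)\}$ to a fixed compact interval using $A, A^*(t) \in \mathcal C$ and $\|g\|_\infty \le C_g$, and then invoke the local strong convexity of $\exp$ (your Taylor--Lagrange remainder is exactly the paper's quadratic lower bound $\exp(a) \ge \exp(b) + \exp(b)(a-b) + c(a-b)^2$ on a bounded interval), after which the Bregman decomposition yields the claim. Your observation that the cross term matches the stated right-hand side only ``up to its sign'' is also well taken: the lemma's convention $\epsilon(t) = X(t+1) - \exp[v + A^*_m(t)g\{\mathcal X(t)\}]$ is the opposite of the one used elsewhere in the appendix (e.g.\ $\epsilon_m(t) = \exp[\cdot] - X_m(t+1)$), and with that other convention the identity is exact, so this is a typo in the statement rather than a gap in your argument.
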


\begin{proof} 

For any $C > 0$ and $a, b\in [-C, C]$, due to the strong concavity of $\exp(\cdot) $, it holds that 
\begin{align*}
	\exp(a) \ge \exp(b) + \exp(a) (b-a) + c(b-a)^2,
\end{align*} 
where $c$ is constant depending on $C$.  Since for any $A\in \mathcal C$ and $m \in [1, M]$,  there exists a constant $C $ depending on $\mathcal C$ and $C_g$ satisfying $|v + A_m g\{\mathcal X(t)\}| \le C$, it holds that 
\begin{align*}
	& \exp [v + A_m  g\{\mathcal X ( t ) \}]    -   \exp [v +A_m^* (t) g\{\mathcal X ( t ) \}]   \\
	& \hspace{2cm} \ge -\exp [v +A_m^* (t) g\{\mathcal X ( t ) \}]  \Delta_m(t)  g\{\mathcal X ( t )\}  + c [ \Delta_m(t)  g\{\mathcal X ( t ) \}]^2,
 \end{align*} 
which leads to the claim. 
\end{proof}

\begin{lemma} \label{lemma:one change point}
For $I = (s, e) \subset (0, T+1)$, assume that $I$ contains only one change point $\eta$.  Denote $I_1 = (s,\eta]   $ and $I_2 = (\eta, e]$.  Assume that $\|A^* (I_1)  -A ^*(I_2)\|_{\mathrm{F}} =\kappa > 0$.  If 
	\begin{align} \label{eq:lemma 1 inequality}
		H\{\widehat A(I) ,I \}  \le & H\{  A^* (I_1),I_1 \} +  H\{ A^* (I_2) , I_2 \}  + \gamma,
	\end{align}
	then with  probability at least $ 1-  4 (TM)^{-4}$, there exists an absolute constant $C > 0$ such that 
	$$\min \{ |I_1| , \, |I_2|  \} \le  C_\epsilon \left( \frac{ \lambda^2 d}{  \kappa^2}   +  \frac{  \lambda^2 d^2 }{  \kappa^4}+\frac{\gamma}{ \kappa^2} \right). $$ 
\end{lemma}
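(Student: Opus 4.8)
The plan is to turn the assumed inequality \eqref{eq:lemma 1 inequality} into a lower bound on the cost of fitting a \emph{single} matrix $\widehat A(I)$ across the change point, and to show that this cost grows like $\min\{|I_1|,|I_2|\}\kappa^2$, which therefore must be small. First I would use that the log-likelihood part of $H$ is additive in $t$, so that $H(\widehat A(I),I)$ splits into the loss of $\widehat A(I)$ on $I_1$, its loss on $I_2$, and the single penalty $\lambda|I|^{1/2}\|\widehat A(I)\|_1$. Subtracting $H(A^*(I_1),I_1)+H(A^*(I_2),I_2)$ and applying \Cref{eq:strong convexity standard} on each piece (centering at the correct true matrix, so the model is correctly specified on each sub-interval) lower-bounds each excess loss by $cQ_j+N_j$, where $Q_j=\sum_{t\in I_j}\sum_m(\Delta^{(j)}_m g\{\mathcal X(t)\})^2$ with $\Delta^{(j)}=\widehat A(I)-A^*(I_j)$, and $N_j=\sum_{t\in I_j}\sum_m\epsilon_m(t)\Delta^{(j)}_m g\{\mathcal X(t)\}$ is a mean-zero martingale term. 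Rearranging yields the basic inequality $c(Q_1+Q_2)+N_1+N_2\le \mathrm{Pen}+\gamma$, where $\mathrm{Pen}=\lambda|I_1|^{1/2}\|A^*(I_1)\|_1+\lambda|I_2|^{1/2}\|A^*(I_2)\|_1-\lambda|I|^{1/2}\|\widehat A(I)\|_1$.

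Next I would control the right-hand side. A martingale (Azuma--Hoeffding/Bernstein-type) concentration bound on $\sum_{t\in I_j}\epsilon_m(t)g\{\mathcal X(t)\}$, followed by a union bound over the $O(M^2)$ coordinates, shows that with probability at least $1-4(TM)^{-4}$ one has $\|\sum_{t\in I_j}\epsilon_m(t)g\{\mathcal X(t)\}\|_\infty\lesssim\lambda|I_j|^{1/2}$ with $\lambda\asymp\log(TM)$, which is exactly the purpose of the $|I|^{1/2}$ scaling in \eqref{eq:penalized likelihood}. Splitting $\Delta^{(j)}$ over the support $S$ and its complement (Assumption~\textbf{A1}), the off-support noise $\lambda|I_j|^{1/2}\|\widehat A(I)_{S^c}\|_1$ is dominated by the negative penalty $-\lambda|I|^{1/2}\|\widehat A(I)_{S^c}\|_1$ (a Lasso cone cancellation, for which $C_\lambda$ must be a large enough constant), while the on-support noise is $\lesssim\lambda|I_j|^{1/2}\sqrt d\,\|\Delta^{(j)}\|_{\mathrm F}$. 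For the penalty, a reverse-triangle step on the support part leaves a mismatch term $(|I_1|^{1/2}+|I_2|^{1/2}-|I|^{1/2})\lambda\|\widehat A(I)_S\|_1$; since the square-root is sub-additive this coefficient is at most $\min\{|I_1|,|I_2|\}^{1/2}$, and $\|\widehat A(I)_S\|_1\le d$ because $\widehat A(I)\in\mathcal C$ has at most $d$ nonzero rows meeting $S$, so this term is $\lesssim\min\{|I_1|,|I_2|\}^{1/2}\lambda d$. This mismatch term is the source of the $\lambda^2 d^2\kappa^{-4}$ contribution.

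I would then lower-bound the quadratic forms using the restricted-eigenvalue content of Assumption~\textbf{A4}: with high probability $Q_j\gtrsim\omega|I_j|\|\Delta^{(j)}\|_{\mathrm F}^2$. Young's inequality absorbs the on-support noise $\lambda|I_j|^{1/2}\sqrt d\,\|\Delta^{(j)}\|_{\mathrm F}$ into $\tfrac12 cQ_j$ at cost $O(\lambda^2 d)$, giving $c(Q_1+Q_2)\lesssim\lambda^2 d+\min\{|I_1|,|I_2|\}^{1/2}\lambda d+\gamma$. The geometric heart of the argument is that the single matrix $\widehat A(I)$ cannot be close to both $A^*(I_1)$ and $A^*(I_2)$, which are Frobenius distance $\kappa$ apart: minimizing $|I_1|\|X-A^*(I_1)\|_{\mathrm F}^2+|I_2|\|X-A^*(I_2)\|_{\mathrm F}^2$ over $X$ gives $|I_1|\|\Delta^{(1)}\|_{\mathrm F}^2+|I_2|\|\Delta^{(2)}\|_{\mathrm F}^2\ge\tfrac{|I_1||I_2|}{|I_1|+|I_2|}\kappa^2\ge\tfrac12\min\{|I_1|,|I_2|\}\kappa^2$. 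Combined with the restricted-eigenvalue bound this yields $Q_1+Q_2\gtrsim\omega\min\{|I_1|,|I_2|\}\kappa^2$; substituting into the basic inequality produces a quadratic inequality in $\min\{|I_1|,|I_2|\}^{1/2}$ whose solution is exactly $\min\{|I_1|,|I_2|\}\lesssim\lambda^2 d\kappa^{-2}+\lambda^2 d^2\kappa^{-4}+\gamma\kappa^{-2}$.

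The main obstacle is the restricted-eigenvalue / restricted strong convexity step, namely establishing $Q_j\gtrsim\omega|I_j|\|\Delta^{(j)}\|_{\mathrm F}^2$ uniformly over candidate intervals for the dependent, heavy-tailed design $g\{\mathcal X(t)\}$; this is where Assumption~\textbf{A4} (the $p$-lagged conditional lower bound on $\mathbb E[gg^\top]$) and a truncation/blocking concentration argument are indispensable, and it forces the lower bound to hold only once an interval is longer than a threshold of order $d^2\log^2(TM)\max\{\kappa^{-2},\kappa^{-4}\}$. Reassuringly, this threshold matches the claimed bound, so any interval too short for the restricted-eigenvalue step already satisfies the conclusion. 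The remaining care is in the constants: $C_\lambda$ must be large enough for the cone cancellation, and the few concentration events (noise and restricted eigenvalue on each of $I_1,I_2$) must each hold at level $(TM)^{-4}$ so that the union over the $O(T^2)$ intervals in \Cref{proposition:dp 1} still succeeds with the stated probability.
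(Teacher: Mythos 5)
Your proposal is correct and follows essentially the same route as the paper's proof: the strong-convexity basic inequality, the support/off-support split with the Lasso cone cancellation, the penalty mismatch term of order $\lambda d\sqrt{\min\{|I_1|,|I_2|\}}$, the blocked restricted-eigenvalue lower bound via Assumption~\textbf{A4}, the key geometric inequality $\sum_{t\in I}\|\Delta(t)\|_{\mathrm F}^2\ge \tfrac{|I_1||I_2|}{|I_1|+|I_2|}\kappa^2$, and the concluding Young/H\"older step. The only execution difference is that the paper handles short intervals by splitting the residue classes $\mathcal T_q$ into a ``good'' set where the eigenvalue bound holds and a ``bad'' set whose contribution is absorbed additively, whereas you dispose of the short-interval case by observing the conclusion is then trivially true; both are valid.
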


\begin{proof}
Without loss of generality, assume $|I_1| \ge |I_2|$.  \Cref{eq:lemma 1 inequality}  implies that there exits a constant $c > 0$ such that 
	\begin{align}
		& c\sum_{t  \in I }\sum_{m = 1}^M [\Delta_m(t) g\{\mathcal X (t)\}]^2 \le  - \sum_{t  \in I}  \sum_{m = 1}^M \epsilon_m(t) \Delta_m(t)g\{\mathcal X (t)\} \label{eq:basic inequality} \\
		& \hspace{2cm} - \lambda \sqrt {|I | } \| \widehat A\|_1 + \lambda \left\{\sqrt {|I_1|} \|A^*(I_1)\|_1 + \sqrt {|I_2|} \|A^*(I_2)\|_1 \right\} + \gamma, 	\label{eq:basic inequality first term}
	\end{align}
	where $\epsilon_m(t) = X_m(t+1) - \exp[v+ A_m^* (t) g\{\mathcal X (t)\}]$ and $\Delta_m (t)= \widehat A_m - A^*_m(t)$. \\~\\
	
\noindent {\bf Step 1.}  Note that
	\begin{align*}
		& \sum_{t \in I } \sum_{m = 1}^M \epsilon_m(t) \Delta_m(t)  g\{\mathcal X (t)\} = \sum_{t   \in I } \sum_{m, m' = 1}^M \epsilon_m(t) \Delta_{m, m'}(t)  g_{m'}\{\mathcal X (t)\} \\
		= & \sum_{t   \in I }\sum_{(m,m')\in S} \epsilon_m(t) \Delta_{m, m'}(t)  g_{m'}\{\mathcal X (t)\} + \sum_{t \in I }\sum_{(m, m')\in S^c} \epsilon_m(t) \widehat A_{m, m'}  g_{m'}\{\mathcal X (t)\}.
	\end{align*}
	It follows from  \Cref{lemma:finite change points event 1} that with  probability at least $1-(TM)^{-4}$, there exists an absolute constant $C > 0$ such that
	\begin{align*}
		& \sum_{t \in I   }\sum_{(m, m')\in S} \epsilon_m(t) \Delta_{m, m'}(t)  g_{m'}\{\mathcal X (t)\} \\
		\le & \sup_{(m, m')\in S} \left |\frac{\sum_{t   \in I  } \epsilon_m(t)\Delta_{m, m'} (t)  g_{m'} \{\mathcal X (t)\}}{ \sqrt {\sum_{t \in I } \Delta_{m, m'}^2 (t)}} \right| \sum_{(m, m')\in S} \sqrt {\sum_{t \in I}  \Delta_{m, m'}^2 (t)} \\
		\le  & C\log(TM)  \sqrt {d  \sum_{t\in I } \|\Delta(t) \|_{\mathrm{F}}^2}.
	\end{align*}
	It follows from \Cref{lemma:finite change points event 2} that with  probability at least $1-(TM)^{-4}$, there exists an absolute constant $C > 0$ such that
	\begin{align*}
		\sum_{t \in I }\sum_{(m, m')\in S^c} \epsilon_m(t) \widehat A_{m,m'}  g_{m'}\{\mathcal X (t)\} & = \sum_{(m, m')\in S^c}  \widehat A_{m, m'} \sum_{t \in I  }\epsilon_m(t)   g_{m'}\{\mathcal X (t)\} \\
		& \le C\log^{1/2}(TM)\sqrt {|I| }  \|\widehat  A_{S^c} \|_1.
	\end{align*}

\vskip 3mm
\noindent {\bf Step 2.} For \eqref{eq:basic inequality first term}, since both $A^*(I_1)$ and $A^*(I_2)$ are supported on $S$, it follows that
	\begin{align*}
		& -\lambda \sqrt {|I|} \| \widehat A\|_1 + \lambda \left( \sqrt {|I_1|} \|A^*(I_1)\|_1 + \sqrt {|I_2|} \|A^*(I_2)\|_1 \right) \\
		= & -\lambda \sqrt {|I|} (\| \widehat A_S\|_1 +\| \widehat A_{S^c}\|_1) + \lambda \left( \sqrt {|I_1|} \|A^*_S(I_1) \|_1 + \sqrt {|I_2|} \|A^*_S(I_2)\|_1 \right) \\
		\le & - \lambda \sqrt {|I| } \| \widehat A_{S^c}\|_1 + \lambda \sqrt {|I_1|} \| A^*_S(I_1) -\widehat A_S \|_1 + \lambda\sqrt {|I_2|} \|A^*_S(I_2)\|_1 \\
		\le & -\lambda \sqrt {|I| } \| \widehat A_{S^c}\|_1 +\lambda \sqrt {|I_1|d} \|A^*_S(I_1) -\widehat A_S \|_{\mathrm{F}}+ \lambda\sqrt {|I_2|} d \\
		\le & -\lambda \sqrt {|I| } \| \widehat A_{S^c}\|_1 +\lambda \sqrt {d \sum_{t\in I } \|\Delta(t) \|_{\mathrm{F}}^2} + \lambda\sqrt {|I_2|}d. 
	\end{align*}

\vskip 3mm
\noindent {\bf Step 3.} Applying {\bf Steps 1} and {\bf 2} to \eqref{eq:basic inequality} and \eqref{eq:basic inequality first term}, respectively, leads to
	\begin{align}\label{eq:basic inequality 2}
		c \sum_{t \in I}\sum_{m = 1, \ldots, M}  [\Delta_m(t) g\{\mathcal X (t)\}]^2 \le 3 \lambda \sqrt {d\sum_{t\in I } \|\Delta(t) \|_{\mathrm{F}}^2} - 2\lambda/3 \sqrt {|I| } \| \widehat A_{S^c}\|_1 +\lambda \sqrt {|I_2|} d +\gamma,
	\end{align}
	with $\lambda \ge C_\lambda \log(TM)$.  For any $q \in \mathbb Z^+ \cap [0, p-1]$, let $$\mathcal T_q = \{pb+q:\, b\in \mathbb Z^+, \, 1\le p b+q \le T \}.$$
	Then
	\begin{align}
		c \sum_{t \in I \cap \mathcal T_q} \sum_{m = 1, \ldots, M} [\Delta_m(t) g\{\mathcal X (t)\}]^2 = & \sum_{t \in I \cap  \mathcal T_q} \sum_{m = 1, \ldots, M} \Delta_m (t) \mathbb{E}[g\{\mathcal X (t)\}g\{\mathcal X (t)\}^\top | \mathcal X (t-p)]\Delta_m^{\top}(t) \nonumber \\
		- & \sum_{t \in I \cap  \mathcal T_q} \sum_{m = 1, \ldots, M} \Delta_m(t) y (t) \Delta_m^{\top} (t), \label{eq:martingale}
	\end{align}
	where $y(t) = \mathbb{E}[g\{\mathcal X (t)\}g\{\mathcal X (t)\}^\top   | \mathcal X (t-p)] - g\{\mathcal X (t)\}g\{\mathcal X (t)\}^\top   \in \mathbb R^{M\times M}$.  By Assumption~{\bf A4}, \eqref{eq:martingale} implies that
	\begin{align} \label{eq:ar lower bound}
		& c \sum_{t \in I \cap \mathcal T_q} \sum_{m = 1, \ldots, M}[\Delta_m(t) g\{\mathcal X (t)\}]^2 \nonumber \\
		\ge & \sum_{t \in I \cap  \mathcal T_q} \sum_{m = 1, \ldots, M} \|\Delta_m(t)\|_{\mathrm{F}}^2 \left\{\omega - \frac{ \sum_{t \in I \cap  \mathcal T_q} \sum_{m = 1, \ldots, M} \Delta_m (t) y (t) \Delta_m^{\top} (t)}{\sum_{t \in I \cap \mathcal T_q} \sum_{m = 1, \ldots, M} \|\Delta_m(t)\|_{\mathrm{F}}^2}\right\}.
	\end{align}

Now consider
	\[
	Q_1 = \left\{q: \, \omega  \sum_{t \in I \cap  \mathcal T_q} \sum_{m = 1, \ldots, M} \|\Delta_m(t)\|_{\mathrm{F}}^2  \ge 4\sum_{t \in I \cap  \mathcal T_q} \sum_{m = 1, \ldots, M} \Delta_m   (t)y (t) \Delta_m^\top (t)\right\} 
	\]
	and
	\[
	Q_2  = \left\{q: \, \omega  \sum_{t \in I \cap  \mathcal T_q} \sum_{m = 1, \ldots, M} \|\Delta_m(t)\|_{\mathrm{F}}^2  < 4\sum_{t \in I \cap  \mathcal T_q} \sum_{m = 1, \ldots, M} \Delta_m   (t)y (t) \Delta_m^\top (t)\right\}.
	\]

\vskip 3mm
\noindent {\bf Step 4.} For $q\in Q_1$, it follows from \eqref{eq:ar lower bound} and the definition of $Q_1$ that 
	\begin{align}\label{eq:Q1 inequality}
		\sum_{t \in I \cap \mathcal T_q} \sum_{m = 1, \ldots, M} [\Delta_m (t) g\{\mathcal X (t)\}]^2 \ge \omega/2 \sum_{t \in I \cap \mathcal T_q} \sum_{m = 1, \ldots, M}\|\Delta_m(t)\|_{\mathrm{F}}^2.
	\end{align}

Since $\widehat A,  A^*(I_1) \in \mathcal C$, one has for $t \in I_1$,
	\begin{align*}
		& \max_{m = 1, \ldots, M} \sum_{m' = 1, \ldots, M}|\Delta_{m, m'}(t)| \\  
		\le & \max_{m = 1, \ldots, M} \sum_{m' = 1, \ldots, M} | \widehat A _{m, m'}(t)|  + \max_{m = 1, \ldots, M} \sum_{m' = 1, \ldots, M} |A^*_{m, m'}(I_1)| \le 2,
	\end{align*}
	due to Assumption {\bf A2}.  Since $A^*(I_2) \in \mathcal C$, for $t \in I_2$,
	\[
		\max_{m = 1, \ldots, M} \sum_{m' = 1, \ldots, M}|\Delta_{m, m'}(t)| \le 2. 
	\]
	Consequently, 
	\begin{align}
		\max_{m = 1, \ldots, M} \sum_{m' = 1, \ldots, M} \sup_{t\in I } | \Delta_{m, m'}(t)| \leq 2. \label{eq:consequence of stablity}
	  \end{align}

\vskip 3mm	  
\noindent {\bf Step 5.} For $q\in Q_2$, observe that 
	\begin{align} 
		& \sum_{t \in I \cap  \mathcal T_q} \sum_{m = 1, \ldots, M}\Delta_m(t) y (t) \Delta_m^{\top}(t) = \sum_{t \in I \cap  \mathcal T_q} \sum_{m, m', m'' = 1, \ldots, M} \Delta_{m, m'}(t) y_{m', m''} (t)\Delta_{m, m''}(t) \nonumber \\ 
		= & \sum_{t \in I \cap  \mathcal T_q} \sum_{(m, m') \in S} \sum_{m'' = 1, \ldots, M } \Delta_{m, m'}(t) y_{m', m''} (t)\Delta_{m, m''}(t) \label{eq:step 6 first term} \\ 
		+ & \sum_{t \in I \cap  \mathcal T_q} \sum_{(m, m') \in S^c} \sum_{m'' = 1, \ldots, M }  \widehat A_{m, m'} y_{m',m''} (t)\Delta_{m,m''}(t).  \label{eq:step 6 second term}
	\end{align}
	
For \Cref{eq:step 6 first term}, due to \Cref{lemma:deviation bound 2}, it holds with probability at least $1 -(TM)^{-4}$ that
	\begin{align*}
		& \sum_{t \in I \cap  \mathcal T_q} \sum_{(m, m') \in S} \sum_{m'' = 1, \ldots, M} \Delta_{m, m'}(t) y_{m',m''} (t)\Delta_{m,m''}(t) \\
		= & \sum_{(m, m')\in S, \, m'' = 1, \ldots, M} \left\{\frac{\sum_{t \in I \cap \mathcal T_q } \Delta_{m, m'}(t) y_{m', m''} (t)\Delta_{m, m''}(t) } { \sqrt {\sum_{t \in I \cap  \mathcal T_q} \Delta_{m, m'}^2 (t) \Delta_{m, m''}^2 (t)} }\right\}\sqrt {\sum_{t \in I \cap  \mathcal T+q} \Delta_{m,m'}^2 (t) \Delta_{m,m''}^2 (t)} \\
		\le & \max_{m, m', m'' = 1, \ldots, M} \left|\frac{\sum_{t \in I \cap \mathcal T_q } \Delta_{m, m'}(t) y_{m',m''} (t)\Delta_{m,m''}(t)}{ \sqrt {\sum_{t \in I \cap  \mathcal T_q} \Delta_{m, m'}^2 (t) \Delta_{m, m''}^2 (t)} }\right| \sum_{(m, m')\in S, \, m'' = 1, \ldots, M } \sqrt {\sum_{t \in I \cap  \mathcal T_q} \Delta_{m,m'}^2 (t) \Delta_{m,m''}^2 (t)} \\	
		\le & C\log(TM)\sum_{(m, m')\in S, \, m'' = 1, \ldots, M}\sqrt {\sum_{t \in I \cap \mathcal T_q} \Delta_{m, m'}^2 (t) \Delta_{m,m''}^2 (t)} \\
		\le & C\log(TM)  \sum_{(m, m')\in S}\sqrt {\sum_{t \in I \cap  \mathcal T_q} \Delta_{m, m'}^2 (t)} \max_{m = 1, \ldots,  M } \sum_{m'' = 1, \ldots, M }  \sup_{t\in I }  | \Delta_{m, m''}(t)| \\
		\le & 2C\log(TM)  \sqrt {d} \sqrt {\sum_{(m,m')\in S}\sum_{t \in I \cap  \mathcal T_q} \Delta_{m, m'}^2 (t)} \\
		\le & 32 C \omega^{-1} \log^2(TM) d + \frac{\omega}{16}\sum_{(m, m')\in S}\sum_{t \in I \cap  \mathcal T_q} \Delta_{m, m'}^2 (t),
	\end{align*}
	where the second inequality follows from \Cref{lemma:deviation bound 2} and the fourth inequality follows from  \eqref{eq:consequence of stablity}.  For \eqref{eq:step 6 second term}, similarly, we have that with probability at least $1- (TM)^{-4}$,
	\begin{align*}
		& \sum_{t \in I \cap  \mathcal T_q} \sum_{(m,m') \in S^c} \sum_{ m'' = 1, \ldots, M }  \widehat A_{m,m'} y_{m',m''} (t)\Delta_{m,m''}(t) \\
		= & \sum_{ m'' = 1, \ldots, M } \sum_{(m,m') \in S^c} \widehat A_{m,m'}  \sum_{t \in I \cap  \mathcal T_q} y_{m',m''} (t)\Delta_{m,m''}(t) \\
		\le & \sum_{ m'' = 1, \ldots, M } \sum_{(m,m') \in S^c} |\widehat A_{m,m'} | \left|\sum_{t \in I \cap  \mathcal T_q} \frac{y_{m',m''} (t)\Delta_{m,m''}(t)}{\sqrt {\sum _{t\in I \cap \mathcal T+q} \Delta ^2_{m,m''}(t) }}\sqrt {\sum _{t\in I \cap \mathcal T+q} \Delta ^2_{m,m''}(t) } \right| \\
		\le & C \sum_{ m'' = 1, \ldots, M } \sum_{(m,m') \in S^c}| \widehat A_{m,m'}  |  \log(TM) \sqrt  { |I| }  \max_{t \in I}|\Delta_{m,m''}(t)| \le C \|\widehat A_{S^c} \|_1   \sqrt {|I| }  \log(TM)  
	\end{align*}
	where the second inequality follows from  \Cref{lemma:deviation bound 2} and the last inequality follows from   \eqref{eq:consequence of stablity}. 	 Combining the above calculations with \eqref{eq:step 6 first term} and \eqref{eq:step 6 second term}, we have that
	\begin{align}
		& \sum_{t \in I \cap  \mathcal T_q} \sum_{m = 1, \ldots, M } \Delta_m   (t) y (t) \Delta^{\top}_m(t)   \nonumber \\	 
		\le &  32 C \omega^{-1} \log^2(TM) d  + \frac{\omega}{16}\sum_{(m,m')\in S}\sum_{t \in I \cap  \mathcal T_q} \Delta_{m,m'}^2 (t) + C \|\widehat A_{S^c} \|_1   \sqrt {|I| }  \log(TM)  \label{eq:step 6 Q2}
	\end{align}
	Since $q\in Q_2$, it holds that
	\begin{align}
		& \omega  \sum_{q\in Q_2}\sum_{t \in I \cap  \mathcal T_q} \sum_{m = 1, \ldots, M} \|\Delta_m(t)\|_{\mathrm{F}}^2   \nonumber \\
		\le & 4\sum_{q\in Q_2 } \sum_{t \in I \cap  \mathcal T_q} \sum_{m = 1, \ldots, M} \Delta_m   (t)y (t) \Delta_m^\top (t)\nonumber \\
		\le &C p \omega^{-1} \log^2(TM) d  + \frac{\omega}{4} \sum_{q\in Q_2 } \sum_{(m,m')\in S}\sum_{t \in I \cap  \mathcal T_q} \Delta_{m,m'}^2 (t) + \lambda/6	\|\widehat A_{S^c} \|_1   \sqrt {|I|}, \label{eq:Q_2 inequality}
	\end{align}
	where the first inequality follows from  the definition of  $Q_2$, and  the last inequality  follows from  \eqref{eq:step 6 Q2} and  that $\lambda \ge C p \log(TM)$.  \Cref{eq:Q_2 inequality} directly leads to that 
		\begin{align}
			\omega  \sum_{q\in Q_2}\sum_{t \in I \cap  \mathcal T_q} \sum_{m = 1, \ldots, M} \|\Delta_m(t)\|_{\mathrm{F}}^2 \le  2C p \omega^{-1} \log^2(TM)d + \lambda/3 	\|\widehat A_{S^c} \|_1   \sqrt {|I|}. \label{eq:Q_2 inequality 2}
	\end{align}
	Therefore
	\begin{align*}
		& \omega  \sum_{t\in I } \sum_{ m = 1, \ldots, M } \|\Delta_m(t)\|_{\mathrm{F}}^2 \\ 
		= &   \omega  \sum_{q\in Q_1}\sum_{t \in I \cap  \mathcal T_q} \sum_{ m = 1, \ldots, M } \|\Delta_m(t)\|_{\mathrm{F}}^2  + \omega  \sum_{q\in Q_2}\sum_{t \in I \cap  \mathcal T_q} \sum_{ m = 1, \ldots, M } \|\Delta_m(t)\|_{\mathrm{F}}^2  \\
		\le & 2 \sum_{q\in Q_1}	\sum_{t \in I \cap \mathcal T_q} \sum_{ m = 1, \ldots, M } [\Delta_m (t) g\{\mathcal X (t)\}]^2 + 2Cp\omega^{-1} \log^2(TM)d + \lambda/3 	\|\widehat A_{S^c} \|_1 \sqrt {|I|} \\
		\le & \sum_{t \in I } \sum_{ m = 1, \ldots, M }[\Delta_m (t)  g\{\mathcal X (t)\}]^2 +  2C p \omega^{-1} \log^2(TM)d + \lambda/3 \|\widehat A_{S^c} \|_1   \sqrt {|I|}  \\
		\le & 6/c \lambda\sqrt {d \sum_{t\in I } \|\Delta(t) \|_{\mathrm{F}}^2}  - 4\lambda/(3c) \sqrt {|I| } \| \widehat A_{S^c}\|_1 + 2\lambda/c \sqrt {|I_2|}d +\gamma \\
		& \hspace{5cm} +  2C p \omega^{-1} \log^2(TM)d + \lambda/3	\|\widehat A_{S^c} \|_1   \sqrt {|I|},
      \end{align*}
	where the first inequality follows from \eqref{eq:Q1 inequality} and \eqref{eq:Q_2 inequality 2} and the last inequality follows from  \eqref{eq:basic inequality 2}.   The above display together directly yields that
	\begin{align}\label{eq:lasso standard form 3}
		\omega \sum_{t\in I }\| \Delta(t)\|_{\mathrm{F}}^2 + (\lambda/3)	\|\widehat A_{S^c} \|_1   \sqrt {|I|}    \le \frac{pd}{\omega} \lambda ^2 + \lambda \sqrt {|I_2|}d + \gamma.
	\end{align}

\vskip 3mm
\noindent {\bf Step 6.} Observe that
	\begin{align*}
		\sum_{t\in I } \|\Delta(t) \|_{\mathrm{F}}^2 \ge & \inf_{B \in \mathbb{R}^{M \times M}} \sum_{  t\in I } \|B-A^* (t)\|_{\mathrm{F}}^2  = \inf_{B\in \mathbb{R}^{M \times M}}  \left\{ |I_1|\|B-A^*(I_1)\|_{\mathrm{F}}^2  +|I_1|\|B-A^*(I_2)\|_{\mathrm{F}}^2\right\} \\
		= & \frac{|I_1||I_2|}{|I_1| +|I_2|} \kappa^2  \ge \frac{|I_2|}{2} \kappa^2,
	\end{align*}
	where $|I_1|\ge |I_2| $ is used in the last inequality.  Due to \eqref{eq:lasso standard form 3}, it holds that 
	\[
		\frac{1}{2 } |I_2| \omega  \kappa^2 \le \frac{pd}{\omega} \lambda ^2 + \lambda \sqrt {|I_2|}d +\gamma \le \frac{pd  \lambda ^2  }{\omega}  +  \left(    \frac{8 \lambda^2 d^2}{      \omega \kappa^2 }  +  \frac{1}{4  } |I_2| \omega  \kappa^2\right) + \gamma,
	\]
	where the last inequality follows from  H\ou lder's inequality.  We thus have that with probability at least $1 - 4(TM)^{-4}$,
	\[
		|I_2| \le C_\epsilon \left( \frac{ \lambda^2 d}{  \kappa^2}   +  \frac{  \lambda^2 d^2 }{  \kappa^4}+\frac{\gamma}{ \kappa^2} \right).
	\]
\end{proof}

\begin{lemma} \label{lemma:two change point} 
Under all the assumptions in \Cref{theorem:SEPP change point}, let $I = (s, e] \subset [1, T]$ be any interval containing exactly two change points $\eta_{r+1}$ and $\eta_{r+2}$, $I_1 = (s,\eta_{r+1}]$,  $I_2 = (\eta_{r+1}, \eta_{r+2}] $ and $I_3 = ( \eta_{r +2},  e] $.  Let $\kappa_j = \| A^*(I_j) -A^*(I_{j+1}) \|_{\mathrm{F}}  $ for $j = 1, 2$ and 
	 $\kappa = \min \{ \kappa_1, \, \kappa_2\}$.  If 
	\begin{align}\label{eq:two change point inequality} 
		H(\widehat A(I),I  )  \le   \sum_{ j=1}^{3 } H(  A^*(I_{j}) , I_j )     + 2\gamma,
	\end{align}
	then with  probability at least $1- 4 ( TM)^{-4} $ it holds that
	\[
		\max \{ | I_1| ,\, |I_3|\}   \le  C_\epsilon \left( \frac{ \lambda^2 d}{  \kappa^2}   +  \frac{  \lambda^2 d^2 }{  \kappa^4}+\frac{\gamma}{ \kappa^2} \right). 
	\]
\end{lemma}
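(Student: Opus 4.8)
The plan is to run the same argument as in the proof of \Cref{lemma:one change point}, replacing the two-segment decomposition by the three-segment decomposition $I = I_1 \cup I_2 \cup I_3$ and exploiting that the middle block $I_2 = (\eta_{r+1}, \eta_{r+2}]$ is long. Writing $\Delta(t) = \widehat A(I) - A^*(t)$ and $\epsilon_m(t) = X_m(t+1) - \exp[v + A^*_m(t)g\{\mathcal X(t)\}]$, I would first apply the strong-convexity bound \Cref{eq:strong convexity standard} termwise over $t \in I$ and $m \in [1,M]$ and combine it with the hypothesis \eqref{eq:two change point inequality}. This yields a basic inequality of exactly the shape used in \Cref{lemma:one change point}, except that the penalty contributions now read $\lambda\sum_{j=1}^3\sqrt{|I_j|}\,\|A^*(I_j)\|_1$ and the additive slack is $2\gamma$ in place of $\gamma$.

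The steps that control the cross term $\sum_{t\in I}\sum_m \epsilon_m(t)\Delta_m(t)g\{\mathcal X(t)\}$, decompose time into the residue classes $\mathcal T_q$, split into $Q_1$ and $Q_2$, and bound the deviations of $y(t)$ are all insensitive to how many change points lie inside $I$: they only involve $\Delta(t)$ over $I$ and the noise, so they transfer verbatim (and the stability bound $\max_m\sum_{m'}\sup_{t\in I}|\Delta_{m,m'}(t)| \le 2$ still holds since $A^*(I_j)\in\mathcal C$ for every $j$ by Assumption~{\bf A2}). The only new bookkeeping is in the penalty step: using $\sqrt{|I|}\ge\sqrt{|I_2|}$ and the triangle inequality I would fold the $I_2$-penalty into the deviation via $\lambda\sqrt{|I_2|}\,\|A^*_S(I_2)-\widehat A_S\|_1 \le \lambda\sqrt{d\sum_{t\in I}\|\Delta(t)\|_{\mathrm{F}}^2}$, and bound the two outer penalties crudely by $\lambda\sqrt{|I_1|}\,d + \lambda\sqrt{|I_3|}\,d$ using $\|A^*_S(I_j)\|_1 \le d$. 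The net outcome is the analogue of \eqref{eq:lasso standard form 3},
\[
\omega\sum_{t\in I}\|\Delta(t)\|_{\mathrm{F}}^2 \;\le\; \frac{pd}{\omega}\lambda^2 + \lambda\bigl(\sqrt{|I_1|}+\sqrt{|I_3|}\bigr)d + 2\gamma .
\]

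The substantive new ingredient is the lower bound on $\sum_{t\in I}\|\Delta(t)\|_{\mathrm{F}}^2$, which must now produce $\max\{|I_1|,|I_3|\}$ rather than $\min$. I would bound $\sum_{t\in I}\|\Delta(t)\|_{\mathrm{F}}^2 \ge \inf_B\sum_{t\in I}\|B-A^*(t)\|_{\mathrm{F}}^2$ and evaluate the infimum at the length-weighted mean of $A^*(I_1),A^*(I_2),A^*(I_3)$, giving the three-point weighted-variance identity
\[
\frac{1}{|I|}\Bigl(|I_1||I_2|\kappa_1^2 + |I_1||I_3|\,\|A^*(I_1)-A^*(I_3)\|_{\mathrm{F}}^2 + |I_2||I_3|\kappa_2^2\Bigr) \;\ge\; \frac{|I_2|}{|I|}\,\kappa^2\bigl(|I_1|+|I_3|\bigr).
\]
The crucial point is to turn the prefactor $|I_2|/|I|$ into an absolute constant: since $|I_2| \ge \Delta \ge C_{\Delta,1}T \ge C_{\Delta,1}|I|$ by Assumption~{\bf A3}, this ratio is bounded below, so $\sum_{t\in I}\|\Delta(t)\|_{\mathrm{F}}^2 \gtrsim \kappa^2(|I_1|+|I_3|) \ge \kappa^2\max\{|I_1|,|I_3|\}$.

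Finally I would combine the two bounds. Writing $m^* = \max\{|I_1|,|I_3|\}$ and using $\sqrt{|I_1|}+\sqrt{|I_3|} \le 2\sqrt{m^*}$, the inequality reads $c\,\omega\kappa^2 m^* \le \frac{pd\lambda^2}{\omega} + 2\lambda d\sqrt{m^*} + 2\gamma$; absorbing the middle term by Young's inequality ($2\lambda d\sqrt{m^*} \le \tfrac{c\omega\kappa^2}{2}m^* + C\lambda^2 d^2/(\omega\kappa^2)$) and solving for $m^*$ gives $\max\{|I_1|,|I_3|\} \le C_\epsilon(\lambda^2 d/\kappa^2 + \lambda^2 d^2/\kappa^4 + \gamma/\kappa^2)$, and the $4(TM)^{-4}$ failure probability is inherited from the union of the same four concentration events used in \Cref{lemma:one change point}. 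The main obstacle is precisely this lower-bound step: obtaining $\max$ rather than $\min$ relies on recognizing that a long middle segment forces \emph{both} outer segments to be short at once, and that Assumption~{\bf A3} is exactly what keeps the weighted-variance prefactor $|I_2|/|I|$ bounded away from zero.
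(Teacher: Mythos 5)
Your proposal is correct, and the probabilistic machinery (basic inequality from strong convexity, the cross-term bound, the $\mathcal T_q$/$Q_1$-$Q_2$ decomposition, the resulting analogue of \eqref{eq:lasso standard form 3}) matches the paper's Steps 1--4 essentially verbatim. Where you genuinely diverge is the deterministic endgame that converts the bound on $\sum_{t\in I}\|\Delta(t)\|_{\mathrm{F}}^2$ into a bound on $\max\{|I_1|,|I_3|\}$. The paper first assumes WLOG $|I_1|\ge|I_3|$, splits into the scenarios $\min\{|I_1|,|I_2|\}\ge|I_3|$ versus $|I_3|>|I_2|$, lower-bounds the variance using only the two segments $I_1\cup I_2$ to get $\tfrac{1}{2}\min\{|I_1|,|I_2|\}\kappa^2$, and then argues by contradiction: if $|I_2|\le|I_1|$ the resulting bound on $|I_2|$ would violate Assumption \textbf{A3}, so the minimum must be $|I_1|$, and $|I_3|\le|I_1|$ finishes. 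You instead keep all three segments, evaluate $\inf_B\sum_{t\in I}\|B-A^*(t)\|_{\mathrm{F}}^2$ via the three-point weighted-variance identity, drop the $I_1$--$I_3$ cross term, and use $|I_2|\ge\Delta\ge C_{\Delta,1}T\ge C_{\Delta,1}|I|$ to make the prefactor $|I_2|/|I|$ an absolute constant, which yields $\kappa^2(|I_1|+|I_3|)$ directly and hence $\max\{|I_1|,|I_3|\}$ without any case split, WLOG ordering, or contradiction. Both routes lean on Assumption \textbf{A3} at exactly this point; yours uses it affirmatively (to bound $|I_2|/|I|$ below) while the paper uses it to exclude a bad case. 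Your version is symmetric in $I_1$ and $I_3$ and arguably cleaner; the paper's version reuses the two-point variance computation already written out in \Cref{lemma:one change point}. One small bookkeeping difference: the paper's penalty step retains the longer of $I_1,I_2$ and pays $2\lambda d\sqrt{|\widetilde I|}$ for the two shorter segments, whereas you retain $I_2$ and pay $\lambda d(\sqrt{|I_1|}+\sqrt{|I_3|})\le 2\lambda d\sqrt{m^*}$; both are absorbed identically by Young's inequality, so the final rate is unchanged.
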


\begin{proof}
Without loss of generality, assume that $ |I_1 | \ge |I_3| $.  Then \eqref{eq:two change point inequality} implies that
	\begin{align} 
		& c\sum_{t \in I }\sum_{m = 1, \ldots, M } [\Delta_m(t) g\{\mathcal X (t)\}]^2 \le \sum_{t\in I} \sum_{m = 1, \ldots, M } \epsilon_m(t) \Delta_m(t)  g\{\mathcal X (t)\} \nonumber \\ 
		& \hspace{3cm} - \lambda \sqrt {|I| } \| \widehat A(I) \|_1 + \lambda \sum_{j =1}^{3}\left( \sqrt {|I_j|} \|A^* (I_j) \|_1 \right) + 2 \gamma, \label{eq:multiple basic inequality first term}
	\end{align}
	where $\epsilon_m(t) = \exp[v+ A_m(t) g\{\mathcal X (t)\}]-X_m(t+1)$ and $\Delta (t)= \widehat A (I) -A^* (I_j)$ for $t\in I_j$, $j = 1, 2, 3$.

There are two possible scenarios: (1) $\min\{|I_1|, \, |I_2|\} \geq |I_3|$ and (2) $|I_3| > |I_2|$.  We remark that (2) is simpler than (1), so in the sequel we will assume (1).  Let $\widetilde{I}$ and $\widetilde{J}$ be the shorter and longer one between $I_1$ and $I_2$, respectively.

\vskip 3mm	
\noindent {\bf Step 1.}  For the case   Observe that 
	\begin{align*}
		& -\lambda \sqrt {|I| } \| \widehat A(I) \|_1 + \lambda \sum_{j =1}^{3}\left( \sqrt {|I_j|} \|A^*(I_j) \|_1 \right) \\
		\le & -\lambda \sqrt {|I| } \| \widehat A(I) \|_1 + \lambda \sqrt { | \widetilde{J} |} \| A  ^* (\widetilde{J}) \|_1 + 2 \lambda \sqrt { | \widetilde{I} |} d \\
		\le & -\lambda \sqrt { |I  | }  \|\widehat A_{S^c} (I) \|_1 + \lambda \sqrt {|\widetilde{J}|} \| \widehat A_S (I) - A^*(\widetilde{J})\|_1 + 2 \lambda \sqrt { | \widetilde{I} |} d \\
		\le & -\lambda \sqrt { |I  | }  \|\widehat A_{S^c} (I) \|_1 + \lambda \sqrt {d  \sum_{t\in I } \|  \Delta(t) \|_{\mathrm{F}}^2  } + 2 \lambda \sqrt { | \widetilde{I} |} d,
	\end{align*}
	where the first inequality follows from the assumption that $|I_1|\ge |I_3|$ and the fact that $\|A^*(I_j) \| _1 \le d$ for all $j = 1, 2, 3$. 

\vskip 3mm
\noindent {\bf Step 2.} Following from similar arguments as those in the {\bf Step 1} in the proof of \Cref{lemma:one change point}, one has 
	\begin{align*}
		& \sum_{t \in I }  \epsilon_m(t) \Delta_{m, m'}(t)  g_{m'}\{\mathcal X (t)\} \\
		\le & C\log(TM) \sqrt{d\sum_{t\in I } \|\Delta(t)\|_{\mathrm{F}}^2} + C\log^{1/2}(TM) \sqrt{|I|} \|\widehat A_{S^c} (I)\|_1 \\
		\le & \lambda/6 \sqrt {d \sum_{t\in I } \|\Delta(t) \|_{\mathrm{F}}^2} +\lambda/6  \|\widehat A_{S^c} (I )\|_1   \sqrt {|I |},
	\end{align*}
	where $\lambda \ge C_\lambda\log^{3/2}(TM) $ is used in the last inequality. 

\vskip 3mm
\noindent {\bf Step 3.} Following from similar arguments as those in the {\bf Step 5} in the proof of \Cref{lemma:one change point}, one has 
	\begin{align*}
    	\omega \sum_{t\in I } \sum_{m = 1, \ldots, M} \|\Delta_m(t)\|_{\mathrm{F}}^2  \le 2 \sum_{t \in I } \sum_{m = 1, \ldots, M} [\Delta_m (t) g\{\mathcal X (t)\}]^2 + 2C \omega^{-1} p\log^2(TM)d + \lambda/3	 \sqrt {|I | }   \|\widehat A_{S^c} (I )  \|_1   
	\end{align*}

\vskip 3mm  
\noindent {\bf Step 4.}  Combing all the previous steps gives
	\begin{align} \label{eq:two change lasso standard form 3} 
 		\omega \sum_{t\in I } \sum_{m = 1, \ldots, M} \|\Delta_m(t)\|_{\mathrm{F}}^2     + \lambda/3 \sqrt {|I|} \|\widehat A (I )\|_1 \le 4C \omega^{-1} p  \lambda^2 d  + 2 \lambda d\sqrt {|\widetilde{I}|}+ 2 \gamma.
	\end{align}	
	%Note that by assumption {\bf A3}, there exists constant $c \in (0,1)$ such that 
	%$$   \eta_{r+1} -\eta_{r} = |I_2| \ge cT $$
	%and therefore
	%$ |I_2| \ge  c  |I_1|. $
	Observe that
	\begin{align*}
		& \sum_{t\in I } \|\Delta(t) \|_{\mathrm{F}}^2 \ge	\sum_{t\in I_1 \cup I_2 } \|\Delta(t) \|_{\mathrm{F}}^2 \ge \inf_{B \in \mathbb{R}^{M \times M}}  \sum_{  t\in I_1 \cup I_2  } \|B-A^* (t)\|_{\mathrm{F}}^2 \\
		= & \inf_{B \in \mathbb{R}^{M \times M}} \{|I_1|\|B-A^*(I_1)\|_{\mathrm{F}}^2  +|I_2|\|B-A^*(I_2)\|_{\mathrm{F}}^2 \} = \frac{|I_1||I_2|}{|I_1| +|I_2|} \kappa^2 \ge  \frac{\min\{|I_1| ,\, |I_2| \}}{2}   \kappa^2. 
	\end{align*}
	If $|I_2| \leq |I_1|$, then it follows that
	\begin{align*}
		\frac{|I_2| \kappa^2 \omega}{2} \leq 4C \omega^{-1} p  \lambda^2 d  + 2 \lambda d\sqrt {|I_2|}+ 2 \gamma \leq 4C \omega^{-1} p  \lambda^2 d  + \frac{\Delta \kappa^2 \omega }{4} + \frac{8 \lambda^2 d^2}{\kappa^2 \omega} + 2 \gamma.
	\end{align*}
	This aleads to that 
	\[
		|I_2| \lesssim \frac{\lambda^2d}{\kappa^2} + \frac{\lambda^2 d^2}{\kappa^4} + \frac{\gamma}{\kappa^2},
	\]
	which contradicts with Assumption~{\bf A3}.  Then there exits an absolute constant such that 
%	So \eqref{eq:two change lasso standard form 3} implies that
%	$$ \frac{c}{2}|I_1| \omega  \kappa^2 \le \frac{Cp|S|\lambda ^2 }{\omega}  + 2 \lambda \sqrt {|I_1|}|S| +2\gamma 
%	\le  \frac{Cp|S|\lambda ^2 }{\omega}  + \left( \frac{16 \lambda^2 |S|^2}{c\omega \kappa^2 }  + 
%\frac{c}{4}|I_1| \omega  \kappa^2 	\right)
%	 +2\gamma  $$ 
%	 where the last inequality follows from H\ou lder's inequality. The above display implies that 
	\[
		| I_1|    \le  C_\epsilon \left( \frac{  \lambda^2 d}{ \kappa^2} +   \frac{ p \lambda^2  d^2}{ \kappa^4}   + \frac{\gamma}{ \kappa^2} \right).
	\]
	Since by assumption, $|I_3 | \le |I_1|$, the desired results follows.   
\end{proof}

\begin{lemma}  \label{lemma:no change point}
Under all the assumptions in \Cref{theorem:SEPP change point}, let $I = (s, e] \subset [1, T]$ be any interval which contains no change point.  Let $I_1$ and $I_2$ be two intervals such that $I_1 \sqcup I_2 = I$.   Then with probability at least $1 - (TM)^{-4}$,  
	\begin{align*} 
		H(\widehat A (I_1) ,I_1 ) + H(\widehat A (I_2) ,I_2 ) + \gamma \geq  H(  A^*(I ) ,I ).
	\end{align*}
\end{lemma}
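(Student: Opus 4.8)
The plan is to show that the gain from splitting a homogeneous interval is at most the partition penalty $\gamma$. Write $L(A, I) = \sum_{t \in I}\sum_{m=1}^M \bigl(\exp[v + A_m g\{\mathcal X(t)\}] - X_m(t+1)[v + A_m g\{\mathcal X(t)\}]\bigr)$ for the unpenalized part of the loss, so that $H(A, I) = L(A, I) + \lambda|I|^{1/2}\|A\|_1$ and $L$ is additive, $L(A, I) = L(A, I_1) + L(A, I_2)$. Since $I$ contains no change point, $A^*(t) \equiv A^*(I)$ throughout $I$, and $A^*(I) \in \mathcal C$ by Assumption~{\bf A2}.

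First I would reduce the claim to a per-piece excess-risk bound. As $\sqrt{|I_1|} + \sqrt{|I_2|} \ge \sqrt{|I_1| + |I_2|} = \sqrt{|I|}$, the penalty is subadditive, giving $H(A^*(I), I) \le H(A^*(I), I_1) + H(A^*(I), I_2)$. Setting $D_j = H(A^*(I), I_j) - H(\widehat A(I_j), I_j)$, which is non-negative because $\widehat A(I_j)$ minimizes $H(\cdot, I_j)$ over $\mathcal C \ni A^*(I)$, this yields
\[
H(A^*(I), I) - H(\widehat A(I_1), I_1) - H(\widehat A(I_2), I_2) \le D_1 + D_2,
\]
so it suffices to prove $D_1 + D_2 \le \gamma$ on the relevant high-probability event.

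Next I would bound each $D_j$ by the constrained-Lasso argument already carried out in the proof of \Cref{lemma:one change point}, specialized to the far simpler regime in which the target $A^*(I)$ is a single matrix, so that $\Delta^{(j)} := \widehat A(I_j) - A^*(I)$ is constant over $t \in I_j$ and there is no jump $\kappa$ to track. Applying \Cref{eq:strong convexity standard} and summing over $t \in I_j$ and $m$ turns $L(A^*(I), I_j) - L(\widehat A(I_j), I_j)$ into $-c\sum_{t, m}[\Delta_m^{(j)} g\{\mathcal X(t)\}]^2$ minus the noise term $\sum_{t, m}\epsilon_m(t)\Delta_m^{(j)} g\{\mathcal X(t)\}$. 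Exactly as in Step~1 of \Cref{lemma:one change point}, \Cref{lemma:finite change points event 1} and \Cref{lemma:finite change points event 2} bound this noise term by $C\log(TM)\sqrt{d\sum_{t \in I_j}\|\Delta^{(j)}\|_{\mathrm F}^2}$ on the support $S$ and by $C\log^{1/2}(TM)\sqrt{|I_j|}\,\|\widehat A_{S^c}(I_j)\|_1$ off it; combined with the penalty difference $\lambda|I_j|^{1/2}(\|A^*(I)\|_1 - \|\widehat A(I_j)\|_1) \le \lambda|I_j|^{1/2}\|\Delta_S^{(j)}\|_1 - \lambda|I_j|^{1/2}\|\widehat A_{S^c}(I_j)\|_1$ and the choice $\lambda \ge C_\lambda \log(TM)$, the off-support terms cancel. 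Lower-bounding the quadratic term through Assumption~{\bf A4} (via the block decomposition $\mathcal T_q$ and the martingale control of $y(t)$ from Steps~3--5 of \Cref{lemma:one change point}) and absorbing the remaining linear-in-$\|\Delta^{(j)}\|_{\mathrm F}$ term with H\ou lder's inequality then gives $D_j \le C\omega^{-1}\lambda^2 d + C p\omega^{-1}\log^2(TM) d \lesssim \lambda^2 d$.

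The step I expect to be the main obstacle is the restricted-eigenvalue lower bound $\sum_{t \in I_j}[\Delta_m^{(j)} g\{\mathcal X(t)\}]^2 \gtrsim \omega|I_j|\,\|\Delta^{(j)}\|_{\mathrm F}^2$: without it the noise term, being only linear in $\|\Delta^{(j)}\|_{\mathrm F}$, cannot be absorbed, and the crude bound $\|\Delta^{(j)}\|_{\mathrm F} \le 2\sqrt{M}$ available from $\widehat A(I_j), A^*(I) \in \mathcal C$ would scale with $|I_j|$ and defeat the argument. This is precisely the conditional-second-moment control of Assumption~{\bf A4} and is established verbatim in \Cref{lemma:one change point}, so it can be reused. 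Summing over the two pieces gives $D_1 + D_2 \lesssim \lambda^2 d$, and since the tuning choice \eqref{eq-tuning-order} satisfies $\gamma = C_\gamma \log^2(TM) d(1 + d\kappa^{-2}) \ge C_\gamma \log^2(TM) d$, choosing $C_\gamma$ large relative to $C_\lambda$, $p$ and $\omega$ yields $D_1 + D_2 \le \gamma$, which proves the lemma with probability at least $1 - (TM)^{-4}$ after a union bound over the concentration events.
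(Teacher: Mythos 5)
Your proposal is correct, and the analytic core — strong convexity via \Cref{eq:strong convexity standard}, the noise bounds from \Cref{lemma:finite change points event 1} and \Cref{lemma:finite change points event 2}, the restricted-eigenvalue step through Assumption~{\bf A4} and \Cref{lemma:deviation bound 2}, and the final absorption of the linear term to reach a bound of order $p\omega^{-1}\lambda^2 d$ that is beaten by $\gamma$ — is exactly the paper's. The only difference is organizational: the paper argues by contradiction, defines the piecewise difference $\Delta(t)$ equal to $\widehat A(I_1)-A^*(I)$ on $I_1$ and $\widehat A(I_2)-A^*(I)$ on $I_2$, and runs one combined basic inequality over all of $I$ (its Step~1 performs the same penalty manipulation your subadditivity observation $\sqrt{|I_1|}+\sqrt{|I_2|}\ge\sqrt{|I|}$ makes explicit), whereas you split the claim up front into the two per-interval excess risks $D_j = H(A^*(I),I_j)-H(\widehat A(I_j),I_j)\ge 0$ and bound each separately. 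The two are algebraically equivalent, but your reduction has the merit of making transparent that the lemma is just a no-change-point oracle inequality applied twice — essentially \Cref{coro:SEPP RSS} without its length restriction on $|I|$ — and that no jump size $\kappa$ ever enters; the paper's combined form has the merit of producing the contradiction $\gamma\le 2C_\epsilon p\lambda^2 d$ in a single display. One small point to keep honest in a write-up: the union bound over the handful of concentration events gives $1-C(TM)^{-4}$ rather than exactly $1-(TM)^{-4}$, the same slack the paper itself carries.
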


\begin{proof}
We prove by contradiction, assuming that 
	\[
		H(\widehat A (I_1) ,I_1 ) + H(\widehat A (I_2) ,I_2 ) + \gamma <  H(  A^*(I ) ,I ).
	\]
	Denote 
	\[
		\Delta (t) = \begin{cases}
			\widehat A (I_1) -A^*(I), & t\in I_1; \\
			\widehat A (I_2) -A^*(I), & t\in I_2.
		\end{cases}
	\]
	Standard calculations give
	\begin{align}
		& c\sum_{t \in I }\sum_{m = 1, \ldots, M} [\Delta_m(t) g\{\mathcal X (t)\}]^2 +\gamma \le \sum_{t \in I }  \sum_{m = 1, \ldots, M} \epsilon_m(t) \Delta_m(t) g\{\mathcal X (t)\} \label{eq:no change basic inequality} \\
		& \hspace{3cm} +  \lambda \sqrt {|I| } \|  A^*(I) \|_1 - \lambda \left\{\sqrt {|I_1|} \|\widehat A (I_1)\|_1 + \sqrt {|I_2|} \|\widehat A (I_2)\|_1 \right\}, \label{eq:no change basic inequality first term}
	\end{align}
	where $\epsilon_m(t) = \exp[v+ A_m(t) g\{\mathcal X (t)\}]-X_m(t+1) $.

\vskip 3mm
\noindent {\bf Step 1.} For \eqref{eq:no change basic inequality first term}, we have that 
	\begin{align*}
		& \lambda \sqrt {|I|} \|  A^*(I)\|_1 - \lambda \left\{\sqrt {|I_1|} \|\widehat A (I_1)\|_1 + \sqrt {|I_2|} \|\widehat A (I_2)\|_1 \right\} \\
		\le & \lambda ( \sqrt {|I_1| } +  \sqrt {|I_2| } )\|  A^*(I)\|_1 -\lambda \left\{\sqrt {|I_1|} \| \widehat A_S (I_1)\|_1 + \sqrt {|I_2|} \| \widehat A_S (I_2)\|_1 \right\} \\
		& \hspace{3cm} - \lambda \left\{\sqrt {|I_1|} \| \widehat A_{S^c} (I_1) \|_1 + \sqrt {|I_2|} \|\widehat A_{S^c} (I_2)\|_1 \right\} \\
		\le & \lambda \left\{\sqrt {|I_1|} \|  \widehat A_S  (I_1)  -A^*(I) \|_1 + \sqrt {|I_2|} \| \widehat A_{S} (I_2)    -A^*(I) \|_1 \right\} \\
		& \hspace{3cm} - \lambda \left\{\sqrt {|I_1|} \| \widehat A _{S^c} (I_1)\|_1 + \sqrt {|I_2|} \|\widehat A_{S^c} (I_2) \|_1 \right\} \\
		\le & \lambda   \sqrt { 2  |I_1|\|  \widehat A_S (I_1)  - A^*(I) \|_1^2 + 2 |I_2| \| \widehat A_S (I_2) -A^*(I) \|_1^2 } \\
		& \hspace{3cm} -  \lambda \left\{\sqrt {|I_1|} \| \widehat A_{S^c}  (I_1)\|_1 + \sqrt {|I_2|} \| \widehat A_{S^c}  (I_2) \|_1 \right\} \\
		\le & \lambda   \sqrt { 2  |I_1|d \|  \widehat A_S (I_1)  - A^*(I) \|_{\mathrm{F}}^2 + 2 |I_2| d \| \widehat A_S (I_2) -A^*(I) \|_{\mathrm{F}}^2 } \\
		& \hspace{3cm} -  \lambda \left\{\sqrt {|I_1|} \| \widehat A_{S^c}  (I_1)\|_1 + \sqrt {|I_2|} \| \widehat A_{S^c}  (I_2) \|_1 \right\} \\
		= & \sqrt { 2\lambda d  \sum_{t\in I } \|  \Delta (t)  \|_{\mathrm{F}}^2 } - \lambda \left\{\sqrt {|I_1|} \| \widehat A_{S^c} (I_1)\|_1 + \sqrt {|I_2|} \| \widehat A_{S^c} (I_2) \|_1 \right\}.
	\end{align*}

\vskip 3mm
\noindent {\bf Step 2.} Using similar calculations as {\bf Step 1} in the proof of \Cref{lemma:one change point}, one has that with probability at least $1 - (TM)^4$,
	\begin{align*}
		& \sum_{t \in I   }  \epsilon_m(t) \Delta_{m,m'}(t)  g_{m'}\{\mathcal X (t)\} \\
		\le & C\log(TM)  \sqrt {d \sum_{t\in I } \|\Delta(t) \|_{\mathrm{F}}^2} + C\log^{1/2}(TM)  \left\{\|\widehat A_{S^c} (I_1)  \|_1   \sqrt {|I_1|} +\|\widehat A_{S^c} (I_2)  \|_1   \sqrt {|I_2|}   \right\} \\
		\le & \lambda/6 \sqrt {d \sum_{t\in I } \|\Delta(t) \|_{\mathrm{F}}^2} +\lambda/6 \left\{\|\widehat A_{S^c} (I_1)  \|_1   \sqrt {|I_1|} +\|\widehat A_{S^c} (I_2)  \|_1   \sqrt {|I_2|}   \right\},
	\end{align*}
	where $\lambda\ge C_\lambda\log(TM) $ is used in the last inequality. 

\vskip 3mm
\noindent {\bf Step 3.} Using similar calculations as  {\bf Step 5}  in the proof of \Cref{lemma:one change point}, one has that with probability at least $1 - (TM)^4$,
	\begin{align*}
    	& \omega \sum_{t\in I } \sum_{m = 1, \ldots, M } \|\Delta_m(t)\|_{\mathrm{F}}^2  \le  \omega/c \sum_{t \in I } \sum_{m = 1, \ldots, M }[\Delta_m (t) g\{\mathcal X (t)\}]^2 \\ 
		& \hspace{2cm} +  2C \omega^{-1} p  \log^2(TM)d 	+ \lambda/3 \left\{ \|\widehat A_{S^c} (I_1)  \|_1   \sqrt {|I_1|} +\|\widehat A_{S^c} (I_2)  \|_1   \sqrt {|I_2|}   \right\}.
      \end{align*}

\vskip 3mm
\noindent {\bf Step 4.} Combing all the previous steps gives
	\begin{align*} 
		\omega  \sum_{t\in I } \sum_{m = 1, \ldots, M } \|\Delta_m(t)\|_{\mathrm{F}}^2    + \gamma + \lambda/3 \left\{\sqrt {|I_1|} \|\widehat A (I_1)\|_1 + \sqrt {|I_2|} \|\widehat A (I_2)\|_1 \right\} \le 4C \omega^{-1} p  \log^2(TM)d,
	\end{align*}
	which directly implies that $\gamma \le 2C_\epsilon  p\lambda^2 d$.  This leads to a contradiction with \eqref{eq-tuning-order} and completes the proof.
\end{proof}

\begin{lemma}\label{lemma:three change point}
Under all the assumptions in \Cref{theorem:SEPP change point}, let $I =(s, e] \subset [1, T]$ satisfying $I \cap  \{ \eta_k\}_{k=1}^K = \{\eta_{r+1},\ldots, \eta_{r+J} \} $, with $J\ge 3$.  Denote  $I_1 = (s,\eta_{r+1}]   $, $I_j = (\eta_{r+j-1}, \eta_{r+j}] $ for all $j = 2, \ldots, J$ and $I_{J+1} = (\eta_{r+J}, e] $.  Denote $\kappa_j = \| A^*(I_j) -A^*(I_{j+1}) \|_{\mathrm{F}}  $ for $j = 1, \ldots, J$ and $\kappa = \min_{j = 1, \ldots, J}  \kappa_j$.  It holds that  probability at least $1- (TM)^{-4}  $,
	\begin{align} \label{eq:minimizer property three chnge points}
		H(\widehat A(I),I  )  > \sum_{ j=1}^{J+1 } H(  A^*(I_{j}) , I_j )    + J \gamma.
	\end{align}
\end{lemma}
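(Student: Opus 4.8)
The plan is to argue by contradiction, paralleling the proofs of \Cref{lemma:one change point} and \Cref{lemma:two change point}. Suppose, contrary to \eqref{eq:minimizer property three chnge points}, that $H(\widehat A(I), I) \le \sum_{j=1}^{J+1} H(A^*(I_j), I_j) + J\gamma$. Writing $\Delta(t) = \widehat A(I) - A^*(I_j)$ for $t \in I_j$, applying the strong convexity bound of \Cref{eq:strong convexity standard} to every summand turns this assumed inequality into the basic inequality
$$c\sum_{t\in I}\sum_{m=1}^M [\Delta_m(t) g\{\mathcal X(t)\}]^2 \le \sum_{t\in I}\sum_{m=1}^M \epsilon_m(t)\Delta_m(t) g\{\mathcal X(t)\} - \lambda\sqrt{|I|}\,\|\widehat A(I)\|_1 + \lambda\sum_{j=1}^{J+1}\sqrt{|I_j|}\,\|A^*(I_j)\|_1 + J\gamma,$$
with $\epsilon_m(t) = X_m(t+1) - \exp[v + A^*_m(t)g\{\mathcal X(t)\}]$. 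This is the exact analogue of \eqref{eq:basic inequality}--\eqref{eq:basic inequality first term}, now carrying $J\gamma$ and $J+1$ oracle penalties on the right.

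First I would control the right-hand side by reusing Steps~1--5 of the proof of \Cref{lemma:one change point} essentially verbatim. The cross term is bounded through \Cref{lemma:finite change points event 1} and \Cref{lemma:finite change points event 2} by $C\log(TM)\sqrt{d\sum_{t\in I}\|\Delta(t)\|_{\mathrm F}^2} + C\log^{1/2}(TM)\sqrt{|I|}\,\|\widehat A_{S^c}(I)\|_1$; the penalty difference is handled by keeping the longest full segment as the reference and bounding the remaining terms via $\|A^*(I_j)\|_1 \le d$; and the quadratic form is passed through the restricted-eigenvalue inequality \eqref{eq:ar lower bound} of Assumption~\textbf{A4}, with the $y(t)$-deviation contributions dominated via \Cref{lemma:deviation bound 2} exactly as in \eqref{eq:step 6 Q2}. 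Collecting these yields an inequality of the form $\omega\sum_{t\in I}\|\Delta(t)\|_{\mathrm F}^2 \lesssim p\lambda^2 d + \lambda d\sum_{j}\sqrt{|I_j|} + J\gamma$.

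The genuinely new ingredient, and the crux of the argument, is a lower bound on $\sum_{t\in I}\|\Delta(t)\|_{\mathrm F}^2$ that grows linearly in $J$, reflecting that one matrix cannot be simultaneously close to many pairwise-separated targets. Since every adjacent pair satisfies $\|A^*(I_j) - A^*(I_{j+1})\|_{\mathrm F}\ge\kappa$, the triangle inequality forces $\max\{\|\Delta(I_j)\|_{\mathrm F}, \|\Delta(I_{j+1})\|_{\mathrm F}\} \ge \kappa/2$ for each consecutive pair. Hence the set of full interior segments $\{I_2,\dots,I_J\}$ on which $\widehat A(I)$ is \emph{close} to the truth contains no two consecutive indices, so it is an independent set in a path on $J-1$ vertices and has size at most $\lceil(J-1)/2\rceil$; therefore at least $\lfloor(J-1)/2\rfloor \gtrsim J$ of these segments satisfy $\|\Delta(I_j)\|_{\mathrm F}\ge\kappa/2$. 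As each full interior segment has length at least the minimal spacing $\Delta$, this gives $\sum_{t\in I}\|\Delta(t)\|_{\mathrm F}^2 \gtrsim J\Delta\kappa^2$.

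Finally I would combine the two displays into $\omega J\Delta\kappa^2 \lesssim p\lambda^2 d + \lambda d\sum_{j}\sqrt{|I_j|} + J\gamma$ and contradict Assumption~\textbf{A3}. Dividing by $J$, the dominant term on the right is $\gamma\asymp\lambda^2 d(1+d\kappa^{-2})$, whereas $\Delta\kappa^2\gtrsim\log^{2+\xi}(TM)d^2\max\{1,\kappa^{-2}\}$ exceeds $\gamma$ by a factor of order $\log^{\xi}(TM)\to\infty$; the $p\lambda^2 d$ term is $O(1)$ in $J$, and the crude penalty $\lambda d\sum_{j}\sqrt{|I_j|}\lesssim\lambda d J\sqrt T$ is dominated by $\omega J\Delta\kappa^2\asymp\omega J T\kappa^2$ because $\sqrt T\kappa^2\gg\lambda d$ under \textbf{A3} with $\Delta\asymp T$. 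A union bound over the constantly many concentration events, as in the earlier lemmas, gives the stated probability $1-(TM)^{-4}$. The main obstacle is the combinatorial step: one must establish that the number of interior segments on which $\widehat A(I)$ is forced to be $\kappa/2$-far from the truth grows proportionally to $J$ rather than merely being nonzero, since only a lower bound scaling linearly in $J$ can overcome the $J\gamma$ on the right-hand side.
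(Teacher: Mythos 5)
Your proposal is correct, and the reduction to the basic inequality together with the reuse of Steps~1--5 of \Cref{lemma:one change point} matches the paper's proof closely; but the crux of your argument --- how the lower bound on $\sum_{t\in I}\|\Delta(t)\|_{\mathrm{F}}^2$ beats the $J\gamma$ on the right --- is genuinely different. The paper extracts signal from a \emph{single} adjacent pair of full interior segments (it takes $I_2,I_3$, notes $\min\{|I_2|,|I_3|\}\ge cT$ by Assumption~\textbf{A3}, and gets $\sum_{t\in I}\|\Delta(t)\|_{\mathrm{F}}^2\ge \tfrac{c|I_3|}{2}\kappa^2$), and then disposes of the factor $J$ on the right-hand side by observing $J\le K\le c^{-1}=O(1)$, so that $J\gamma\lesssim\gamma$ and the usual localization bound on $|I_3|$ contradicts $|I_3|\ge cT$. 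You instead prove the combinatorial fact that the interior segments on which $\widehat A(I)$ is $\kappa/2$-close to the truth form an independent set in a path on $J-1$ vertices, so at least $\lfloor (J-1)/2\rfloor\gtrsim J$ interior segments are $\kappa/2$-far, yielding $\sum_{t\in I}\|\Delta(t)\|_{\mathrm{F}}^2\gtrsim J\Delta\kappa^2$ --- a bound growing linearly in $J$ that dominates $J\gamma$ term by term. Your route pays a slightly cruder penalty bound ($\lambda d\sum_j\sqrt{|I_j|}\lesssim \lambda d J\sqrt{T}$ rather than the paper's $\lambda d\sqrt{|I_3|}$ obtained via $|I_3|\ge c|I_j|$), but this is still dominated by $\omega J\Delta\kappa^2$ under \textbf{A3}, as you verify. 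What your approach buys is independence from the assumption $K=O(1)$ at this step: the paper's treatment of $J\gamma$ collapses without $\Delta\ge C_{\Delta,1}T$, whereas your linear-in-$J$ signal lower bound would survive a growing number of change points; what it costs is the extra combinatorial lemma, which the paper avoids entirely. Both arguments are valid under the stated assumptions.
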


\begin{proof}
We prove by contradiction, assuming that 
	\[
		H(\widehat A(I),I  )  \leq \sum_{ j=1}^{J+1 } H(  A^*(I_{j}) , I_j )    + J \gamma.
	\]
	Note that by assumption {\bf A3} $ \min \{ | I_2|, |I_3| \} \ge cT. $
Without loss of generality, assume that $ |I_2 | \ge |I_3| $.
	\Cref{eq:minimizer property three chnge points} implies that
	\begin{align} 
	\sum_{t \in I }\sum_{1\le m \le M } (\Delta_m(t) g(\mathcal X (t)))^2 
	\le 
	&
	\sum_{t\in 	I } 
	\sum_{1\le m \le M }
	\epsilon_m(t) \Delta_m(t)^\top  g(\mathcal X (t)) 
	\\- 
	&
	\lambda \sqrt {|I| } \| \widehat A(I) \|_1 + 
	\lambda \sum_{j =1}^{J+1}\left( \sqrt {|I_j|} \|A^* (I_j) \|_1 \right)
	\label{eq:multiple basic inequality first term} + \gamma
	\end{align}
	where $\epsilon_m(t) = \exp(v+ A_m(t) g(\mathcal X (t)))-X_m(t+1) $
	and $\Delta (t)= \widehat A (I) -A^* (I_j)$ for $t\in I_j \subset I$.
	\\
	\\
	{\bf Step 1.}
	Observe that by assumption {\bf A3} there exists $c>0 $ such that $|I_3|\ge c T \ge c|I_j| $ for any $1\le j\le J$. Therefore 
	\begin{align*}
	&	-\lambda \sqrt {|I| } \| \widehat A(I) \|_1 + 
	\lambda \sum_{j =1}^{J+1}\left( \sqrt {|I_j|} \|A^*(I_j) \|_1 \right)
	\\
\le &
-\lambda \sqrt {|I| } \| \widehat A(I) \|_1 + \lambda \sqrt {|I_{2}| } \| A  ^* (I_2) \|_1 
+ 2 \sum_{j\not = 2 }\lambda \sqrt { | I_{j} |} |S|
\\
\le 
& -\lambda \sqrt { |I  | }  \|\widehat A_{S^c} (I) \|_1 + \lambda \sqrt {|I_2|} \| \widehat A_S (I) - A^*(I_2)\|_1
+ c^{-1/2} \lambda \sqrt { | I_{3} |} |S|
\\
\le 
& -\lambda \sqrt { |I  | }  \|\widehat A_{S^c} (I) \|_1 + \lambda \sqrt { |S|  \sum_{t\in I } \|  \Delta(t) \|_{\mathrm{F}}^2  } 
+ c^{-1/2} \lambda \sqrt { | I_{3} |} |S|
	\end{align*}
	where the first inequality follows from the fact that  	$\|A^*(I_j) \| _1 \le |S| $ for all $1\le  j\le J $ and the second inequality  follows from the observation that $|I_3| \ge c|I_j| $.
	\\
	\\
{\bf Step 2.} Using similar calculations as {\bf Step 1} in the proof of \Cref{lemma:one change point},
one has 
\begin{align*}
	& \sum_{t \in I   }  \epsilon_m(t) \Delta_{m,m'}(t)  g_{m'}(\mathcal X (t)) 
\\
	\le  &
	C\log^{3/2}(TM)  \sqrt {|S|     \sum_{t\in I } \|\Delta(t) \|_{\mathrm{F}}^2}
	+ 
	C\log^{3/2}(TM)  \sqrt{|I| } \|\widehat A_{S^c} (I)  \|_1  
\\
\le & 
(\lambda/6) \sqrt {|S|     \sum_{t\in I } \|\Delta(t) \|_{\mathrm{F}}^2}+
(\lambda/6)   \|\widehat A_{S^c} (I )  \|_1   \sqrt {|I |}  
	\end{align*}
where 
$\lambda\ge C_\lambda\log^{3/2}(TM) $ is used in the last inequality. 
\\
\\	
{\bf Step 3.} Using similar calculations as   {\bf Step 6}  in the proof of \Cref{lemma:one change point}, one has 
   \begin{align*}
    \omega   \sum_{t\in I } \sum_{1\le m \le M } \|\Delta_m(t)\|_{\mathrm{F}}^2  
\le 
	  	\sum_{t \in I } 
	\sum_{1\le m \le M }(\Delta_m (t)^\top   g(\mathcal X (t)))^2   
	+  2C \omega^{-1} p  \log^2(TM) |S|  
	+ (\lambda/3)	 \sqrt {|I | }   \|\widehat A_{S^c} (I )  \|_1   
      \end{align*}
\
\\      
  {\bf Step 4.}    Combing all the previous steps gives
\begin{align}\label{eq:three change lasso standard form 3} 
 \omega   \sum_{t\in I } \sum_{1\le m \le M } \|\Delta_m(t)\|_{\mathrm{F}}^2   +
 (\lambda/3)  \sqrt {|I|} \|\widehat A (I )\|_1 
 \le 4C \omega^{-1} p  \lambda^2  |S|  + 2 \lambda |S| \sqrt {|I_3|} +J \gamma. 
 \end{align}	
 Observe that
	\begin{align*}
	\sum_{t\in I } \|\Delta(t) \|_{\mathrm{F}}^2 \ge	\sum_{t\in I_2 \cup I_3 } \|\Delta(t) \|_{\mathrm{F}}^2 \ge
	&
	\inf_{B} 
\sum_{  t\in I_2 \cup I_3  } \|B-A^* (t)\|_{\mathrm{F}}^2 
	= 	\inf_{B}  
		|I_2|\|B-A^*(I_2)\|_{\mathrm{F}}^2  +|I_3|\|B-A^*(I_3)\|_{\mathrm{F}}^2
		\\
		=& \frac{|I_2||I_3|}{|I_2| +|I_3|} \kappa^2  \ge  
          \frac{\min \{|I_2| , |I_3|  \} }{2 }		
	  \kappa^2 \ge  \frac{c|I_3|}{2 } \kappa^2  ,
	\end{align*}
	where $ |I_2| \ge |I_3|\ge c|I_2| $ is used in the last inequality. 
	So \eqref{eq:three change lasso standard form 3} implies that
	$$ \frac{c}{2} |I_3| \omega  \kappa^2 \le \frac{Cp|S|}{\omega} \lambda ^2 + \lambda \sqrt {|I_3|}|S| + J \gamma 
	\le \frac{Cp|S|}{\omega} \lambda ^2  + \left(  \frac{8 \lambda^2 |S|^2 }{c\omega \kappa^2 }   
+ 	
	 \frac{c}{4} |I_3| \omega  \kappa^2  \right)    + J \gamma .$$ 
	Note that $J \le K \le c^{-1}$ and therefore  this implies
	$$|I_3| \le C _\omega  \left( \frac{ p \lambda^2  |S| }{ \kappa^2}   +   \frac{  p \lambda^2  |S| ^2 }{  \kappa^4}+\frac{\gamma}{  \kappa^2} \right) . $$
	Since by assumption, $|I_3 | \le |I_2|$.  This leads to the contradiction with Assumption~{\bf A3} and completes the proof.
\end{proof}

\section{Proof of \Cref{lemma:consistency of number}}

\begin{proof}[Proof of \Cref{lemma:consistency of number}] 

For a collection of generic strictly increasing time points $\{\eta_j'\}_{j=0}^{J+1}$, where $\eta_0' = 1$ and $\eta_{J+1}' = T + 1$, denote $I_j = [\eta_{j-1}', \eta_j')$ and 
	\[
		\mathcal L (\{\eta_j'\}_{j=0}^{J+1}) = \sum_{j=1}^{ J+1}  H( \widehat A(I_{j} ) ,I_{j} ).
	\]
	In addition assume that $ \{\eta_k \}_{k=1}^K \subset\{\eta_j'\}_{j=1}^{J}$ so that $A^*(t)$ is unchanged in each of the interval $I_j$.  Let
	$$ \mathcal L^* (\{\eta_j'\}_{j=0}^{J+1} )  = \sum_{j=1}^{ J+1}   H(  A^*(I_{j} ) ,I_{j}   ). $$
 
	Let $\{ \widehat \eta_{k}\}_{k=1}^{\widehat K}$ denote the change points induced by $\widehat {\mathcal P}$.  If one can show that 
	\begin{align} 
		& \mathcal L ^*   (\eta_1,\ldots,\eta_K ) +K\gamma \nonumber \\
		\ge  &\mathcal L (\eta_1,\ldots,\eta_K)   + K\gamma \label{eq:K consistency step 1} \\ 
		\ge & \mathcal L   (\widehat \eta_{1},\ldots, \widehat \eta_{\widehat K } ) +\widehat K \gamma  \label{eq:K consistency step 2} \\ 
		\ge  &   \mathcal L^* ( \textbf {Sort } (  \widehat \eta_{1},\ldots, \widehat \eta_{\widehat K } , \eta_1,\ldots,\eta_K ) ) + \widehat K \gamma    - C dK  \lambda^2    - C   K \lambda ^2 d \sqrt {     \frac{ \lambda^2d }{\kappa^2 }      +     \frac{ \lambda^2d^2 }{\kappa^4  }   + \frac{\gamma}{\kappa^2} } \label{eq:K consistency step 3}
	\end{align}
	and that 
	\begin{align}\label{eq:K consistency step 4}
		\mathcal L ^*  (\eta_1,\ldots,\eta_K )   \le  \mathcal L^* ( \textbf {Sort } (  \widehat \eta_{1},\ldots, \widehat \eta_{\widehat K } , \eta_1,\ldots,\eta_K ) ),   
	\end{align}
	then it must hold that $| \hatp | = K$.  To see this, if $\widehat K \ge K+1 $, then  
	\begin{align} \label{eq:Khat=K}  
		\gamma \le  ( \widehat K - K)\gamma  \le  Cd K \lambda^2+    C   K \lambda ^2 d  \sqrt {     \frac{ \lambda^2d }{\kappa^2 }      +     \frac{ \lambda^2d^2 }{\kappa^4  }   + \frac{\gamma}{\kappa^2} },
	\end{align} 
 	which is a contradiction to \eqref{eq-tuning-order} if $C_\gamma$ is sufficiently large. 

Since $ \widehat A(I) = \argmin_{A \in \mathcal C } H(A,I)$, we have that 
	\[
		H (\widehat  A(I), I ) \le  H (A^*(I), I ). 
	\]
	Then \eqref{eq:K consistency step 1}  is a direct consequence  of the above display. 
	
Moreover, \eqref{eq:K consistency step 2} is a direct consequence of \eqref{eq:SEPP DP}.

To show \eqref{eq:K consistency step 3}, consider any $   I =(s,e] \in \hatp $.  By \Cref{proposition:dp 1}, with probability at least $1 - C(TM)^4$, $I$ contains at most two change points. Therefore the three cases in \Cref{lemma:prop2 three cases} directly lead to 
	\[
		\mathcal L   (\widehat \eta_{1},\ldots, \widehat \eta_{\widehat K } )  \ge     \mathcal L^* ( \textbf {Sort } (  \widehat \eta_{1},\ldots, \widehat \eta_{\widehat K } , \eta_1,\ldots,\eta_K ) ) -C K  d\lambda^2  - C   K \lambda ^2 d\sqrt {     \frac{ \lambda^2d }{\kappa^2 }      +     \frac{ \lambda^2d^2 }{\kappa^4  }   + \frac{\gamma}{\kappa^2} }.
	\]

Lastly, suppose $I$ is any generic interval in $[1, T]$ containing no change points and that $ I_1 \sqcup I_2 = I$.  For any $a, b >0$,  we have that $\sqrt {a+b} \le \sqrt {a} + \sqrt {b}$.  This inequality directly implies that 
	\begin{align} \label{eq:direct inequality}
		H(A^*(I),I  )\le H(  A^* ( I_1 ),   I_1 ) + H(  A^* ( I_2 ),   I_2 ).
	\end{align}
	Note that \eqref{eq:K consistency step 4} is a straight forward consequence of \Cref{eq:direct inequality}.  This completes the proof.
\end{proof}

\begin{lemma}[Standard GLM inequality]\label{eq:standard Lasso}
Under all the assumptions in \Cref{theorem:SEPP change point}, suppose $I$ is any generic interval such that $|I| \ge Cp|S| \log^2(TM)$ for sufficiently large $C$ and that $I$ contains no change points.  Let
	\[
		\widehat A(I) = \arg\min_{A\in \mathcal C} H(A,I),
	\]
	where $ H(A,I)$ is defined in \eqref{eq:penalized likelihood}.  Then 
	\[
		\| \widehat A(I) -A^*(I)\|_{\mathrm{F}}^2 \le \frac{Cd\log^2(TM) }{|I|} \quad  \text{and} \quad \| \widehat A(I) -A^*(I)\|_1 \le \frac{Cd \log(TM) }{\sqrt {|I|}}. 
	\]
\end{lemma}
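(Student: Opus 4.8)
The plan is to run a standard penalized-likelihood (Lasso-type) argument, following the template of the proof of \Cref{lemma:one change point} but considerably simplified by the absence of a change point: since $A^*(t) = A^*(I)$ is constant on $I$, the deviation $\Delta = \widehat A(I) - A^*(I)$ is a single matrix, so $\sum_{t \in I}\|\Delta(t)\|_{\mathrm{F}}^2 = |I|\,\|\Delta\|_{\mathrm{F}}^2$. First I would write down the basic inequality. Since $\widehat A(I)$ minimizes $H(\cdot, I)$ over $\mathcal C$ and $A^*(I) \in \mathcal C$ by Assumption~{\bf A2}, we have $H(\widehat A(I), I) \le H(A^*(I), I)$. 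Applying the strong-convexity bound of \Cref{eq:strong convexity standard} term by term gives
\[
c\sum_{t\in I}\sum_{m=1}^M [\Delta_m g\{\mathcal X(t)\}]^2 \le -\sum_{t\in I}\sum_{m=1}^M \epsilon_m(t)\Delta_m g\{\mathcal X(t)\} + \lambda\sqrt{|I|}\left(\|A^*(I)\|_1 - \|\widehat A(I)\|_1\right),
\]
the direct analogue of \eqref{eq:basic inequality}--\eqref{eq:basic inequality first term}.

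Next I would bound the two terms on the right. For the noise cross-term I would split the sum over $S$ and $S^c$ and invoke the concentration results \Cref{lemma:finite change points event 1} and \Cref{lemma:finite change points event 2}, exactly as in {\bf Step 1} of \Cref{lemma:one change point}, obtaining a bound of the form $C\log(TM)\sqrt{d|I|}\,\|\Delta\|_{\mathrm{F}} + C\log^{1/2}(TM)\sqrt{|I|}\,\|\widehat A_{S^c}\|_1$. For the penalty difference, using that $A^*(I)$ is supported on $S$ (Assumption~{\bf A1}) together with the triangle inequality, I would establish the cone-type control $\|A^*(I)\|_1 - \|\widehat A(I)\|_1 \le \|\Delta_S\|_1 - \|\widehat A_{S^c}\|_1 \le \sqrt d\,\|\Delta\|_{\mathrm{F}} - \|\widehat A_{S^c}\|_1$, so that the $S^c$ part of the penalty carries the correct (negative) sign and can be absorbed on the left.

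The technical heart, and the step I expect to be the main obstacle, is converting the empirical quadratic form $\sum_{t\in I}\sum_m [\Delta_m g\{\mathcal X(t)\}]^2$ into the population quantity $\omega|I|\|\Delta\|_{\mathrm{F}}^2$. I would replay {\bf Steps 3--5} of \Cref{lemma:one change point}: partition $I$ into the residue classes $\mathcal T_q$, decompose each conditional second-moment matrix as $\mathbb E[g g^\top \mid \mathcal X(t-p)]$ plus a martingale remainder $y(t)$, invoke Assumption~{\bf A4} for the lower bound $\omega$, and control the remainder across the good and bad index sets $Q_1, Q_2$ via \Cref{lemma:deviation bound 2}. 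It is precisely here that the length requirement $|I| \ge Cp\,d\log^2(TM)$ enters, guaranteeing that the remainder is dominated and the restricted-eigenvalue lower bound survives. Collecting terms yields a ``lasso standard form''
\[
\omega|I|\|\Delta\|_{\mathrm{F}}^2 + (\lambda/3)\sqrt{|I|}\,\|\widehat A_{S^c}\|_1 \le \frac{Cpd}{\omega}\lambda^2 + \lambda\sqrt{d|I|}\,\|\Delta\|_{\mathrm{F}}.
\]

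Finally I would solve this quadratic inequality in $\|\Delta\|_{\mathrm{F}}$. Applying Young's inequality to the last term, $\lambda\sqrt{d|I|}\,\|\Delta\|_{\mathrm{F}} \le \frac{\omega}{2}|I|\|\Delta\|_{\mathrm{F}}^2 + \frac{\lambda^2 d}{2\omega}$, and recalling $\lambda \asymp \log(TM)$ from \eqref{eq-tuning-order}, the display rearranges to $\|\Delta\|_{\mathrm{F}}^2 \lesssim d\log^2(TM)/|I|$, the first claimed bound (absorbing the absolute constant $p$). The $\ell_1$ bound then follows from the cone condition $\|\Delta_{S^c}\|_1 \lesssim \|\Delta_S\|_1$ extracted above, since $\|\Delta\|_1 \lesssim \|\Delta_S\|_1 \le \sqrt d\,\|\Delta\|_{\mathrm{F}} \lesssim d\log(TM)/\sqrt{|I|}$. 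The stated probability $1 - (TM)^{-4}$ is inherited from the concentration events used in the noise and remainder bounds.
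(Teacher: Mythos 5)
Your argument is correct and is exactly the route the paper intends: the paper gives no detailed proof of this lemma, stating only that it follows from Lemma~\ref{eq:strong convexity standard} and the deviation bounds of Appendix~\ref{sec-prob-events} ``based on standard Lasso arguments,'' and your write-up is precisely that argument, instantiating the Steps of the proof of Lemma~\ref{lemma:one change point} in the change-point-free case where $\Delta(t)\equiv\Delta$. The cone control on $\|\widehat A_{S^c}\|_1$, the restricted-eigenvalue step via $\mathcal T_q$, $Q_1,Q_2$ and Lemma~\ref{lemma:deviation bound 2}, and the final Young's-inequality rearrangement all match the paper's template.
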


\Cref{eq:standard Lasso} is an immediate consequence of \Cref{eq:strong convexity standard} and results in \Cref{sec-prob-events}, based on standard Lasso arguments \citep[e.g.][]{wang2019localizing}.
%\begin{proof} By the minimizer property of $\widehat A(I)$,
%\begin{align*}
%&\sum_{t\in I } \sum_{1\le m \le M }\exp(v+\widehat A_m (I) g(\mathcal X(t))  ) -X_m(t+1) \left (v +\widehat A(I ) g(\mathcal X (t))  \right)  + \lambda \sqrt {|I|} \| \widehat A(I) \|_1
%\\
%\le 
%& \sum_{t\in I } \sum_{1\le m \le M }\exp(v+  A_m ^* (I) g(\mathcal X(t)) )  -X_m(t+1) \left (v +  A^*(I) g(\mathcal X (t))  \right)  +\lambda \sqrt {|I|} \|   A^*(I) \|_1
%\end{align*}
%Observe that since $\widehat A(I), A^*(I) \in \mathcal C$ and $ g$ is uniformly bounded by $C_g$, by convexity,
%\begin{align*}
%& \exp(v+\widehat A_m (I) g(\mathcal X(t)) ) - \exp(v+  A_m ^* (I) g(\mathcal X(t))  )
%\\
%\ge
%&  
%  ( \widehat A_m (I)- A_m ^*(I)) g(\mathcal X(t)) \exp(v+  A_m ^* (I))    + c_\sigma\left(  ( \widehat A_m (I)-  A_m ^*(I) )g(\mathcal X(t) ) \right)^2,
%\end{align*}
%where $c_\sigma$ is some constant depending on $C_g$ and $\sup_{1\le m\le M} \|A_m^*(I)\|_1 $.
%Denote $\epsilon_m(t) = X_m(t) - \exp(v+  A_m ^* (I) g(\mathcal X(t)) )  $. 
%Therefore 
%\begin{align} \nonumber &\sum_{t\in I } \sum_{1\le m \le M }\exp(v+\widehat A_m (I) g(\mathcal X(t)) ) - \exp(v+  A_m ^* (I) g(\mathcal X(t))  ) - ( \widehat A_m (I)- A_m ^*(I)) g(\mathcal X(t)) X_m(t)
%\\
%\label{eq:generic possion bound}
%\ge &\sum_{t\in I } \sum_{1\le m \le M }  c_\sigma\left(  ( \widehat A_m (I)-  A_m ^*(I) )g(\mathcal X(t) ) \right)^2 + ( \widehat A_m (I)- A_m ^*(I)) g(\mathcal X(t))  
%\epsilon_m(t)
%\end{align}
%
%
%The rest calculations are standard Lasso calculations.
%\end{proof}

\begin{corollary}\label{coro:SEPP RSS}
Under all the assumptions in \Cref{theorem:SEPP change point} and notation in \Cref{eq:standard Lasso}, it holds that with probability at least $1 - C(TM)^{-4}$,
	\[
		H(\widehat A(I) , I ) \ge  H(A^*(I), I)  + C \log^{2}(TM) d. 
	\]
\end{corollary}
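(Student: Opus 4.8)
The plan is to control the gap $H(A^*(I),I)-H(\widehat A(I),I)$ from above, showing it is at most of order $d\log^2(TM)$. Since $\widehat A(I)$ is the penalized-likelihood minimizer over $\mathcal C$ and $A^*(I)\in\mathcal C$, the trivial direction $H(\widehat A(I),I)\le H(A^*(I),I)$ is immediate from optimality, and the content of the corollary is precisely this matching lower bound on $H(\widehat A(I),I)$ (i.e.\ the fitted log-likelihood cannot undershoot that of the truth by more than $Cd\log^2(TM)$). First I would write $H=F+\lambda|I|^{1/2}\|\cdot\|_1$, where $F$ denotes the unpenalized negative log-likelihood sum, set $\Delta(t)=\widehat A(I)-A^*(I)$ (constant on $I$ because $I$ contains no change point) and $\epsilon_m(t)=X_m(t+1)-\exp[v+A^*_m(t)g\{\mathcal X(t)\}]$, and split $H(A^*(I),I)-H(\widehat A(I),I)$ into its likelihood part and its $\ell_1$-penalty part.

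For the likelihood part I would apply \Cref{eq:strong convexity standard} termwise and sum over $m\in[1,M]$ and $t\in I$. This yields $F(\widehat A(I))-F(A^*(I))\ge c\sum_{t\in I}\sum_m[\Delta_m(t)g\{\mathcal X(t)\}]^2+\sum_{t\in I}\sum_m\epsilon_m(t)\Delta_m(t)g\{\mathcal X(t)\}$; the quadratic term is nonnegative and therefore has the favorable sign when bounding $H(A^*(I),I)-H(\widehat A(I),I)$ from above, so it can simply be dropped, leaving the cross (score) term $\sum_{t\in I}\sum_m\epsilon_m(t)\Delta_m(t)g\{\mathcal X(t)\}$ as the only quantity requiring stochastic control. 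This cross term I would bound exactly as in Step 1 of the proof of \Cref{lemma:one change point}, splitting the sum over $S$ and $S^c$ and invoking the sub-exponential concentration and martingale deviation bounds of \Cref{sec-prob-events}, to get $C\log(TM)\sqrt{d\sum_{t\in I}\|\Delta(t)\|_{\mathrm F}^2}+C\log^{1/2}(TM)\sqrt{|I|}\,\|\widehat A_{S^c}(I)\|_1$ with probability at least $1-C(TM)^{-4}$. The penalty part is handled by the triangle inequality alone: $\lambda|I|^{1/2}\big|\|A^*(I)\|_1-\|\widehat A(I)\|_1\big|\le \lambda|I|^{1/2}\|\Delta\|_1$.

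The final step substitutes the estimation rates from the Standard GLM inequality \Cref{eq:standard Lasso}, which applies because $I$ contains no change point and, on the intervals of interest, $|I|\ge Cpd\log^2(TM)$. These give $\sum_{t\in I}\|\Delta(t)\|_{\mathrm F}^2=|I|\,\|\widehat A(I)-A^*(I)\|_{\mathrm F}^2\le Cd\log^2(TM)$ and $\|\Delta\|_1=\|\widehat A(I)-A^*(I)\|_1\le Cd\log(TM)/\sqrt{|I|}$. Plugging these in, together with $\lambda=C_\lambda\log(TM)$, makes the score term $O(d\log^2(TM))$, the $S^c$-contribution $O(d\log^{3/2}(TM))$, and the penalty difference $O(d\log^2(TM))$, so that $H(A^*(I),I)-H(\widehat A(I),I)\le Cd\log^2(TM)$ as required.

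I expect the main obstacle to be organizational rather than a single hard estimate: one must sharpen the optimality inequality $H(\widehat A(I),I)\le H(A^*(I),I)$ into a two-sided statement, which requires keeping the quadratic term with its correct (favorable) sign and ensuring that the $\|\widehat A_{S^c}(I)\|_1$ contributions appearing in both the score bound and the penalty difference are absorbed before the rates of \Cref{eq:standard Lasso} are invoked. The probabilistic content is entirely inherited from the concentration lemmas already used in \Cref{lemma:one change point}, so no new tail bound is needed.
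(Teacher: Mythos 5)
Your proposal is correct and follows essentially the same route as the paper: apply the strong-convexity bound of \Cref{eq:strong convexity standard}, drop the nonnegative quadratic term, control the remaining score term with the martingale deviation bounds of \Cref{sec-prob-events}, and plug in the rates of \Cref{eq:standard Lasso} (with the corollary's ``$+$'' necessarily read as ``$-$'', exactly as your interpretation and the paper's own use of the result in \Cref{lemma:prop2 three cases} both require). The only differences are cosmetic: you bound the cross term via the $S$/$S^c$ split and the Frobenius rate and you track the $\ell_1$-penalty difference explicitly, whereas the paper bounds the cross term directly by $\max_{m,m'}\bigl|\sum_{t\in I}\epsilon_m(t)g_{m'}\{\mathcal X(t)\}\bigr|\cdot\|\Delta\|_1$ and leaves the (same-order) penalty contribution implicit.
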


\begin{proof} Denote that $\Delta = \widehat A(I) -A^*(I)$ and that $\epsilon_m  = \exp[v+  A_m ^* (I)g\{\mathcal{X}(t)\}] - X_{m}(t+1)$ .  Observe that 
	\begin{align*}
		& \sum_{t\in I } \sum_{m  = 1, \ldots, M } \exp[v+\widehat A_m (I) g\{\mathcal X(t)\}] - \exp[v+  A_m ^* (I) g\{\mathcal X(t)\}] -X_m(t+1) \Delta_m g\{\mathcal X(t)\} \\
		\ge & \sum_{t\in I }  \sum_{m  = 1, \ldots, M } \epsilon_m \Delta_m g\{\mathcal X(t)\} +c_ \sigma\left[ \{\widehat A_m (I)-  A_m ^*(I) \}g\{\mathcal X(t) \} \right]^2 \\
		\ge & \sum_{t\in I }  \sum_{m  = 1, \ldots, M } \epsilon_m \Delta_m g\{\mathcal X(t)\} \ge -C \sqrt {|I|}\log (TM) \sum_{m  = 1, \ldots, M } \| \Delta_m\|_1  \ge -C \log^2(TM)d, 
	\end{align*}
	where the third inequality follows from \Cref{lemma:finite change points event 2}  and the last inequality follows from  \Cref{eq:standard Lasso}.
\end{proof}

\begin{lemma} \label{lemma:prop2 three cases} 

Under all the conditions in \Cref{theorem:SEPP change point}, suppose $I =(s,e]$ is any interval in $[1, T]$ containing at most two change points.  

\noindent {\bf Case 1.} If $I $ contains no change points, then with probability at least $1- (TM)^{-2}$,
	\[
		H(  A^*(   I ),  I )   \le  H( \widehat A (  I ),  I )  +  Cd \lambda^2 +   C \lambda ^2d  \sqrt {     \frac{ \lambda^2d}{\kappa^2 }      +     \frac{ \lambda^2d ^2 }{\kappa^4  }   + \frac{\gamma}{\kappa^2} }.
	\]

\noindent {\bf Case 2.} If $ I $ contains exactly one change point $\eta_k$, letting $J_1= (s,\eta_k]$ and  $J_2=(\eta_k, e]$, then with probability at least $1- (TM)^{-2} $,
	\[
		H(  A^*(J_1),J_1) +   H(  A^*(J_2),J_2)  \le   H( \widehat A (   I ),  I ) +C d \lambda^2 +   C \lambda ^2 d  \sqrt {     \frac{ \lambda^2d}{\kappa^2 }      +     \frac{ \lambda^2d ^2 }{\kappa^4  }   + \frac{\gamma}{\kappa^2} }    .
	\]
	
\noindent {\bf Case 3.} If $  I $ contains exactly exactly two change points $\eta_k$ and $\eta_{k+1}$, letting $J_1 = (s,\eta_k] $, $J_2 = (\eta_k, \eta_{k+1}]$ and $J_3 = (\eta_{k+1},e]$, then with probability at least $1- (TM)^{-2}$, 
	\[
		\sum_{j=1}^3   H(A^*(J_j),J_j  )\le H( \widehat A ( I ),   I ) +C d \lambda^2   +C \lambda ^2d \sqrt {     \frac{ \lambda^2d}{\kappa^2 }      +     \frac{ \lambda^2d^2 }{\kappa^4  }   + \frac{\gamma}{\kappa^2} } . 
	\]
\end{lemma}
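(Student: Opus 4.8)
The plan is to derive all three cases from the single-interval estimation machinery already developed in \Cref{eq:strong convexity standard}, \Cref{coro:SEPP RSS} and \Cref{lemma:one change point}, with Cases~2 and~3 needing one extra idea to accommodate the fact that $\widehat A(I)$ must approximate several distinct true matrices at once.

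Case~1 is essentially \Cref{coro:SEPP RSS}. Since $I$ has no change point, $A^*(I)$ is well defined and lies in $\mathcal C$ by Assumption~{\bf A2}, so that corollary gives $H(\widehat A(I),I)\ge H(A^*(I),I)-C\log^2(TM)d$; as $\lambda=C_\lambda\log(TM)$ the slack $C\log^2(TM)d$ is of order $d\lambda^2$ and is dominated by the claimed right-hand side. (For intervals shorter than the threshold $Cpd\log^2(TM)$ required by \Cref{coro:SEPP RSS}, the same oracle inequality applies after discarding the strong-convexity quadratic.)

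For Cases~2 and~3 I would argue uniformly. Write $J_1,\dots,J_r$ ($r=2$ or $3$) for the sub-intervals obtained by cutting $I$ at the true change points it contains, and put $\Delta(t)=\widehat A(I)-A^*(J_j)$ for $t\in J_j$. The crucial observation is a basic inequality requiring \emph{no} minimality of $I$: applying \Cref{eq:strong convexity standard} termwise to $\widehat A(I)\in\mathcal C$ against the piecewise matrix $A^*(t)$ and adding $\lambda|I|^{1/2}\|\widehat A(I)\|_1$ to both sides yields
\[
\sum_{j=1}^r H(A^*(J_j),J_j)-H(\widehat A(I),I)\le P-c\sum_{t\in I}\sum_{m=1}^M[\Delta_m(t)g\{\mathcal X(t)\}]^2-\sum_{t\in I}\sum_{m=1}^M\epsilon_m(t)\Delta_m(t)g\{\mathcal X(t)\},
\]
where $P=\lambda\sum_{j}|J_j|^{1/2}\|A^*(J_j)\|_1-\lambda|I|^{1/2}\|\widehat A(I)\|_1$ and $\epsilon_m(t)=X_m(t+1)-\exp[v+A^*_m(t)g\{\mathcal X(t)\}]$. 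I would then bound the three pieces exactly as in \Cref{lemma:one change point}: the noise term is controlled by \Cref{lemma:finite change points event 1} and \Cref{lemma:finite change points event 2} as in its Step~1, yielding $C\log(TM)\sqrt{d\sum_{t\in I}\|\Delta(t)\|_{\mathrm F}^2}$ plus a $C\log^{1/2}(TM)\sqrt{|I|}\,\|\widehat A_{S^c}(I)\|_1$ term; for $P$, taking $J_1$ to be a longest sub-interval and using $|I|^{1/2}\ge|J_1|^{1/2}$, the reverse triangle inequality, $\|A^*(J_j)\|_1\le d$ and $\mathrm{supp}(A^*(J_j))\subset S$, the $S$-parts nearly telescope and leave $\lambda\sqrt{d\sum_{t\in I}\|\Delta(t)\|_{\mathrm F}^2}+\lambda d\,|J_{\min}|^{1/2}$ with $J_{\min}$ the shortest relevant segment, while the negative $-\lambda|I|^{1/2}\|\widehat A_{S^c}(I)\|_1$ dominates and absorbs every $\|\widehat A_{S^c}(I)\|_1\sqrt{|I|}$ contribution because $\lambda\ge C_\lambda\log(TM)$; and the quadratic is converted via Assumption~{\bf A4} through the martingale/$Q_1,Q_2$ decomposition of Steps~4--5 of \Cref{lemma:one change point} into $\tfrac{\omega}{2}\sum_{t\in I}\|\Delta(t)\|_{\mathrm F}^2$ up to an additive $\lambda^2 d$ error.

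Writing $X=\sum_{t\in I}\|\Delta(t)\|_{\mathrm F}^2$, these estimates combine to $\sum_jH(A^*(J_j),J_j)-H(\widehat A(I),I)\lesssim \lambda^2 d+\lambda\sqrt{dX}+\lambda d\,|J_{\min}|^{1/2}-c\omega X$. The final and most delicate step is to trade the short-segment penalty against the strong-convexity gain jointly: the deterministic infimum computation used in \Cref{lemma:one change point} gives $X\ge\tfrac12|J_{\min}|\kappa^2$ (in Case~3 one uses, as in \Cref{lemma:two change point}, that the middle segment has length $\ge\Delta\gtrsim T$ by Assumption~{\bf A3}, so telescoping is always against a long segment and the residual cross term is governed by the shortest segment), whence $\lambda d|J_{\min}|^{1/2}\lesssim\lambda d\kappa^{-1}\sqrt X$. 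Every cross term is then linear in $\sqrt X$, and an AM--GM trade-off against $-c\omega X$ bounds the expression by $C\lambda^2 d+C\lambda^2 d^2\kappa^{-2}$. Finally, since $\sqrt{\lambda^2 d/\kappa^2+\lambda^2 d^2/\kappa^4+\gamma/\kappa^2}\ge\lambda d\kappa^{-2}$ and $\lambda\ge1$, one has $\lambda^2 d^2\kappa^{-2}\le C\lambda^2 d\sqrt{\lambda^2 d/\kappa^2+\lambda^2 d^2/\kappa^4+\gamma/\kappa^2}$, which is exactly the claimed bound; a union bound over the at most $T^2$ intervals, each event holding with probability at least $1-(TM)^{-4}$, gives the stated $1-(TM)^{-2}$. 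The main obstacle I anticipate is precisely this last trade-off, because bounding $\lambda d|J_{\min}|^{1/2}$ and $\omega X$ separately loses the collapse into the $\lambda^2 d\sqrt{\cdot}$ form; they must be handled through the coupling $X\gtrsim|J_{\min}|\kappa^2$.
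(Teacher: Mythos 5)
Your proof is correct, but for Cases~2 and~3 it takes a genuinely different route from the paper's. The paper handles the long middle segment $J_2$ via the oracle inequality of \Cref{coro:SEPP RSS} combined with the minimality of $\widehat A(J_2)$, and handles each short end segment by bounding its contribution by $C\lambda d\sqrt{|J_1|}$ and then substituting the already-established localization bound $|J_1|\lesssim \lambda^2d/\kappa^2+\lambda^2d^2/\kappa^4+\gamma/\kappa^2$ from \Cref{proposition:dp 1}~\textbf{b} --- the square root in the statement is literally $\sqrt{|J_1|}$. Your argument never invokes \Cref{proposition:dp 1}: you run the basic-inequality/restricted-eigenvalue machinery of \Cref{lemma:one change point} and \Cref{lemma:no change point} on all of $I$ at once and dispose of the short-segment penalty $\lambda d|J|^{1/2}$ through the coupling $\sum_{t\in I}\|\Delta(t)\|_{\mathrm F}^2\ge \tfrac12\min\{|J|,|J_2|\}\kappa^2$ and AM--GM, arriving at $C\lambda^2d+C\lambda^2d^2\kappa^{-2}$, which is dominated by the stated right-hand side since $\lambda\ge 1$. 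The paper's route is shorter given \Cref{proposition:dp 1}, but it silently restricts the lemma to intervals of $\widehat{\mathcal{P}}$ (otherwise part~\textbf{b} is unavailable); yours is self-contained, valid for an arbitrary interval with at most two change points, and yields a bound with no $\gamma$ term. Two small points to tighten: in Case~3 there are \emph{two} short end segments, so the AM--GM trade must be run separately for $J_1$ and for $J_3$ (each paired with $J_2$, invoking $|J_2|\ge\Delta\ge C_{\Delta,1}T$ from~\textbf{A3} in the edge case where an end segment is longer than $J_2$), spending a distinct fraction of $-c\omega\sum_{t\in I}\|\Delta(t)\|_{\mathrm F}^2$ on each; and your parenthetical about short intervals in Case~1 is moot, since the same general machinery already gives the $C\lambda^2d$ bound there without \Cref{coro:SEPP RSS}.
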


\begin{proof}
We only prove {\bf Case 3}, as the other two cases  are easier and similar.   Since $|J_2| \ge \Delta \ge Cd^2\log^2(TM)$, Observe that by definition of $  H$,
	\begin{align} \label{eq:middle segment}
		H(A^*(J_2),J_2 ) \le H( \widehat A (J_2),J_2 )  + Cd\lambda^2 \le    H( \widehat A ( I ),J_2 ) + Cd\lambda^2,
	\end{align}
	where the first inequality follows from \Cref{coro:SEPP RSS} and the second inequality follows from the definition of $\widehat A(J_2)$.

In addition, observe that with probability at least $1 - C(TM)^{-4}$,
	\begin{align} 
		& H (A^*(J_1), J_1) - \sum_{t\in J_1} \sum_{m = 1, \ldots, M }  \exp[v+  \widehat A_m    (I) g\{\mathcal X(t)\}] -\lambda \sqrt {|J_1| }\| A^*(J_1) \|_1  \nonumber \\
		= & \sum_{t\in J_1} \sum_{m = 1, \ldots, M } \exp[v+  A_m ^*    (J_1) g\{\mathcal X(t)\}] - \exp[v+  \widehat A_m    (I) g\{\mathcal X(t)\}]+ \lambda \sqrt {|J_1| }\| A^*(J_1) \|_1 \nonumber \\
		\le &\sum_{t\in J_1} \sum_{m = 1, \ldots, M } \{ \widehat A_m (I)- A_m ^*(  J_1 )\} g\{\mathcal X(t)\} \epsilon_m(t) - c   \left[\{\widehat A_m (I)-  A_m ^*(J_1) \}g\{\mathcal X(t)\} \right]^2  + \lambda d \sqrt {|J_1|} \nonumber \\
		\le &  \|  \widehat A  (I)- A  ^*(J_1)\|_1 \max_{m, m' = 1, \ldots,  M}\left | \sum_{t\in J_1 }g_{m'}\{\mathcal X(t) \}\epsilon_m(t) \right|  + \lambda d \sqrt {|J_1|}  \nonumber  \\
		\le &  C \lambda d  \sqrt {|J_1| }+ \lambda d\sqrt {|J_1|} \nonumber \\ 
  		\le &    C   \lambda ^2 d \sqrt {     \frac{ \lambda^2d }{\kappa^2 }      +     \frac{ \lambda^2d ^2 }{\kappa^4  }   + \frac{\gamma}{\kappa^2} }, \label{eq:first segment}
	\end{align}
	where the first inequality follows from the same argument in  \Cref{eq:strong convexity standard},  the third inequality follows from \Cref{lemma:finite change points event 2} and the last inequality follows from  \Cref{proposition:dp 1} {\bf b}. 

In addition, it holds that
	\begin{align} \label{eq:last segment}
		H (A^*(J_3), J_3) - \sum_{t\in J_3} \sum_{ m = 1}^M \exp[v+  \widehat A_m    (I) g\{\mathcal X(t)\}] \le   C   \lambda ^2 d \sqrt {     \frac{ \lambda^2d}{\kappa^2 }      +     \frac{ \lambda^2d^2 }{\kappa^4  }   + \frac{\gamma}{\kappa^2} }.
	\end{align}
Therefore
	\begin{align*} 
		& \sum_{j=1}^3   H(A^*(J_j),J_j  ) \\
		\le & H( \widehat A (   I ), J_2 )  +  \sum_{t\in J_1 \cup J_3} \sum_{m = 1}^M  \exp[v+  \widehat A_m    (I) g\{\mathcal X(t)\}] + C \lambda ^2 d \sqrt {     \frac{ \lambda^2d }{\kappa^2 }      +     \frac{ \lambda^2d^2 }{\kappa^4  }   + \frac{\gamma}{\kappa^2} } +Cd \lambda^2  \\
		\le & H( \widehat A (  I ),   I )+C \lambda ^2 d \sqrt {     \frac{ \lambda^2d}{\kappa^2 }      +     \frac{ \lambda^2 d^2 }{\kappa^4  }   + \frac{\gamma}{\kappa^2} } +C d\lambda^2  .
	\end{align*}
	where the first inequality  follows from \Cref{eq:middle segment}, \Cref{eq:first segment} and \Cref{eq:last segment}, and the second inequality follows from the observation that 
	\[
		\lambda \sqrt { |J_2  | }\| \widehat A(I )\|  \le \lambda \sqrt { | I   | }\| \widehat A(I )\| . 
	\]
\end{proof}

\section{Deviation Bounds}\label{sec-prob-events}
	
\begin{lemma} \label{lemma:Azuma Hoeffding}  
Under all the assumptions in \Cref{theorem:SEPP change point}, for any $t \in [1, T-1]$, let $\epsilon (t) = X(t+1) -  \exp [v +A_m^* (t) g\{\mathcal X ( t ) \}]$.  For any deterministic $v \in \mathbb R^T$ and any integer interval $I \subset [1, T-1]$, it holds that for any $\delta > 0$,
	\[
		\max_{m, m' = 1, \ldots, M} \mathbb{P}\left[\sum_{t \in I }  v_t \epsilon_m(t) g_{m'}\{\mathcal X( t) \} \ge  \delta \right] \le  2\exp\left(-\frac{C\delta ^2 }{\sum_{ t\in I }   v_t^2 }\right).
	\]
\end{lemma}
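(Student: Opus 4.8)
The plan is to recognize the weighted sum as a martingale and to control it through its conditional moment generating function, in the spirit of a Freedman/Bernstein argument rather than a literal Azuma--Hoeffding bound, since the increments are sub-exponential (Poisson) rather than bounded. First I would fix $m,m'$ and work with the natural filtration $\mathcal F_t = \sigma(X(1),\ldots,X(t))$, with respect to which $\mathcal X(t)$, and hence $g_{m'}\{\mathcal X(t)\}$, $\lambda_m(t)$ and the intensity $\mu_t := \exp[v + A_m^*(t)\,g\{\mathcal X(t)\}]$, are all measurable. Writing $d_t := v_t\, \epsilon_m(t)\, g_{m'}\{\mathcal X(t)\}$ and $a_t := v_t\, g_{m'}\{\mathcal X(t)\}$, the model \eqref{eq:change point SEPP} gives $\mathbb E[X_m(t+1)\mid \mathcal F_t] = \mu_t$, so $\mathbb E[d_t\mid \mathcal F_t] = 0$ and $\{d_t\}$ is a martingale difference sequence; the whole point is to avoid assuming boundedness of $d_t$.

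Next I would compute the conditional MGF exactly. Conditionally on $\mathcal F_t$, $X_m(t+1)\sim\poiss(\mu_t)$ while $a_t$ is fixed, so using $\mathbb E[e^{uX}] = \exp\{\mu(e^u-1)\}$ for $X\sim\poiss(\mu)$,
\[
\mathbb E\bigl[e^{s d_t}\mid \mathcal F_t\bigr] = \exp\bigl\{\mu_t\,(e^{s a_t} - 1 - s a_t)\bigr\}.
\]
The crucial inputs are the boundedness conditions: by Assumptions~{\bf A2} and {\bf A4}, $|v + A_m^*(t) g\{\mathcal X(t)\}| \le C_g + \|A_m^*(t)\|_1\,\|g\|_\infty \le 2C_g$, whence $\mu_t \le e^{2C_g}$, and $|a_t|\le C_g|v_t|$. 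Restricting to $|s|\le s_0 := (C_g\max_{t}|v_t|)^{-1}$ so that $|s a_t|\le 1$, the elementary inequality $e^x - 1 - x \le x^2$ for $|x|\le 1$ gives $\mathbb E[e^{s d_t}\mid\mathcal F_t] \le \exp(\tilde C s^2 v_t^2)$ with $\tilde C := C_g^2 e^{2C_g}$.

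Finally I would iterate this bound along the filtration by the tower rule to obtain $\mathbb E[\exp(s\sum_{t\in I} d_t)] \le \exp(\tilde C s^2 \sum_{t\in I} v_t^2)$ for every $|s|\le s_0$, and then apply a Chernoff bound optimized over $s\in(0,s_0]$ to produce the stated sub-Gaussian tail (the factor $2$ being harmless slack that also covers the symmetric lower tail used later). Taking $v_t\equiv 1$ on $I$ makes $\sum_{t\in I}v_t^2 = |I|$, which is exactly the $|I|^{1/2}$ standard-error scaling that motivates the penalty in \eqref{eq:penalized likelihood}.

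The main obstacle is precisely the heavy-tailedness: because a Poisson variable is only sub-exponential, the sub-Gaussian form cannot hold for arbitrarily large $\delta$, and the constrained optimization over $s\le s_0$ leaves a residual sub-exponential regime. I would argue this is immaterial for the paper's purposes because the lemma is always invoked with $\delta$ of order $\log(TM)\sqrt{\sum_t v_t^2}$ while $\max_t|v_t|$ stays bounded (in the applications $v_t=\Delta_{m,m'}(t)$, which obeys $|\Delta_{m,m'}(t)|\le 2$ by the constraint $\mathcal C$, cf.\ \eqref{eq:consequence of stablity}); consequently the unconstrained optimizer $s^\ast = \delta/(2\tilde C\sum_t v_t^2)$ satisfies $s^\ast\le s_0$ whenever $\sum_t v_t^2\gtrsim \log^2(TM)(\max_t|v_t|)^2$, so the estimate stays inside the sub-Gaussian regime and the sub-exponential tail never binds.
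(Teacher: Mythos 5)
Your argument is correct in its mechanics but takes a genuinely different, and more careful, route than the paper's. The paper's own proof is three lines: it identifies $Y_t=\sum_{i\le t} v_i\epsilon_m(i)g_{m'}\{\mathcal X(i)\}$ as a martingale, verifies only that $\mathbb E|Y_t-Y_{t-1}|\le C|v_t|$, and invokes Azuma's inequality. Azuma's inequality requires \emph{almost surely} bounded increments, and here $\epsilon_m(t)$ involves an unbounded Poisson variable, so the paper's verification does not actually license the inequality. Your replacement of that step by the exact conditional moment generating function $\exp\{\mu_t(e^{sa_t}-1-sa_t)\}$, bounded via $\mu_t\le e^{2C_g}$ and $|sa_t|\le 1$, followed by the tower rule and a Chernoff bound, is the right way to make the argument rigorous. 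What your approach buys is a valid proof in the Bernstein regime; what it costs --- and what you honestly flag --- is that the sub-Gaussian tail holds only for $\delta\lesssim \sum_t v_t^2/\max_t|v_t|$, beyond which one gets only a sub-exponential tail. This is not an artifact of your method: for a single-spike weight vector the sum is a centred Poisson variable, whose tail is genuinely not sub-Gaussian, so the lemma as literally stated (for all $\delta>0$ with an absolute $C$) cannot hold, and the paper's proof does not repair this.

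One caution on your closing claim that the restriction ``never binds.'' In the covering argument of \Cref{lemma:finite change points event 1} the lemma is applied to unit vectors $v$ with $\|v\|_2=1$, for which $\max_t|v_t|$ may be as large as $1$, at deviation level $\delta\asymp\log(TM)\sqrt R$; there the unconstrained optimizer $s^\ast\asymp\log(TM)\sqrt R$ exceeds your $s_0\asymp 1$, so that application falls into the sub-exponential regime and would require the weaker tail bound (and a correspondingly adjusted threshold or truncation of $v$) rather than the sub-Gaussian one. You have inherited this defect from the paper's statement rather than introduced it, but it means the justification in your final paragraph is not airtight for every downstream use.
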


\begin{proof}
For any $m, m' \in [1, M]$ and $t \in [1, T-1]$, let $Y_t = \sum_{i=1}^t   v_i\epsilon_m (i)g_{m'}\{\mathcal X(i)\}$.  Due to \Cref{model:sepp change point}, we have that $\{Y_t\}$ is a martingale sequence with respect to the filtration $\{\mathcal{F}_t\}$, $\mathcal F_t = \sigma\{X_1,\ldots, X_t\}$.  In addition, with $Y_0 = 0$, for any $t \in [1, T-1]$,
	\[
		\mathbb{E}\left(|Y_t - Y_{t-1}|\right) = \mathbb{E}\left(|v_t \epsilon_m (t) g_{m'}\{\mathcal X(t)\}|\right) \le C|v_t|.
	\]
	The final result follows from Azuma's inequality \citep{azuma1967weighted}.
\end{proof}

\begin{lemma} \label{lemma:finite change points event 1}
Under all the assumptions in \Cref{theorem:SEPP change point}, let $I \subset [1, T]$ be an integer interval and $R \in \mathbb{Z}_+ \cap [1, T]$.  Denote the event 
	\[
		\mathcal A_R (I) = \left\{\max_{m, m' = 1, \ldots, M} \sup_{v: \, \|Dv\|_0  = R, \|v\|_2 = 1} \sum_{t \in I} v_t \epsilon_m(t) g_{m'}\{\mathcal X ( t) \} \ge \sqrt {C\log^2 (TM) R}\right\},
	\]
	for some sufficiently large constant $C$.  It holds that $\mathbb{P}\{\mathcal A_R(I)\} \le (TM)^{-4R}$.
	 
\end{lemma}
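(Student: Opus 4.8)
The plan is to control the supremum by replacing it with a maximum over a deterministic finite collection and then applying the pointwise sub-Gaussian bound of \Cref{lemma:Azuma Hoeffding} together with a union bound. Throughout, fix a pair $(m, m')$ and write $w_t = \epsilon_m(t) g_{m'}\{\mathcal X(t)\}$, so that the quantity appearing in the definition of $\mathcal A_R(I)$ is $\sup_v \sum_{t \in I} v_t w_t$ over unit vectors $v$ with $\|Dv\|_0 = R$. The key structural observation is that such a $v$ is piecewise constant: its restriction to $I$ has at most $R$ internal jumps, hence lies, for some choice of breakpoints, in a subspace $V$ of vectors that are constant on each induced segment, with $\dim V \le R + 1$.

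First I would reduce the supremum to a maximum over a net. Since $\sum_{t \in I} v_t w_t$ depends on $v$ only through its restriction to $I$, which has $\ell_2$-norm at most one, the supremum is bounded by $\sup_{u \in V, \, \|u\|_2 \le 1} \sum_{t \in I} u_t w_t$ over the subspaces $V$ corresponding to the at most $\binom{|I|}{R} \le T^R$ admissible breakpoint configurations, all of which are determined by deterministic indices and so are independent of the data. For each such $V$, fix a $1/2$-net $\mathcal N_V$ of its unit sphere with $|\mathcal N_V| \le 5^{R + 1}$; since the left-hand side below is the norm of the projection of $(w_t)_{t \in I}$ onto $V$, a standard net argument gives
\[
\sup_{u \in V, \, \|u\|_2 \le 1} \sum_{t \in I} u_t w_t \le 2 \max_{v \in \mathcal N_V} \sum_{t \in I} v_t w_t.
\]
Collecting all pairs $(m, m')$ and all configurations, the event $\mathcal A_R(I)$ is contained in the event that $\sum_{t \in I} v_t w_t \ge \tfrac12 \sqrt{C \log^2(TM) R}$ for some element of a deterministic collection $\mathcal N$ of cardinality at most $M^2 T^R 5^{R + 1}$.

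Next I would apply \Cref{lemma:Azuma Hoeffding} to each fixed element of $\mathcal N$. As each such $v$ is deterministic with $\|v\|_2 = 1$, the lemma yields $\mathbb P[\sum_{t \in I} v_t w_t \ge \delta] \le 2 \exp(-C_0 \delta^2)$ for an absolute constant $C_0$. Taking $\delta = \tfrac12 \sqrt{C \log^2(TM) R}$ and a union bound over $\mathcal N$ gives
\[
\mathbb P\{\mathcal A_R(I)\} \le 2 M^2 T^R 5^{R + 1} \exp\left(- C_0 C \log^2(TM) R / 4\right).
\]
Taking logarithms, the log-cardinality of $\mathcal N$ is of order $R \log(TM)$, whereas the exponent is of order $\log^2(TM) R$; the latter dominates the former by a factor of $\log(TM)$, so for the constant $C$ in the definition of $\mathcal A_R(I)$ chosen large enough (depending only on $C_0$) the right-hand side is at most $(TM)^{-4R}$, as claimed.

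The main obstacle is the reduction in the second step rather than the concentration itself. One must verify that the index set of jump configurations is genuinely deterministic, so that \Cref{lemma:Azuma Hoeffding}—stated for deterministic $v$—may be invoked pointwise, and must correctly track the two exponential-in-$R$ cardinalities, namely the $T^R$ configurations and the $5^{R+1}$ net points per configuration. The reason the argument closes uniformly in $R$ is precisely the squared logarithm in the threshold: a single power of $\log(TM)$ would leave the union-bound cost and the tail exponent of the same order, whereas the extra logarithmic factor furnishes the slack that makes the per-$R$ bound $(TM)^{-4R}$ and, in turn, summable over $R$ when this lemma is later combined across scales.
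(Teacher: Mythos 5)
Your proposal is correct and follows essentially the same route as the paper: enumerate the at most $\binom{T}{R}$ breakpoint configurations, cover the unit sphere of each $(R+1)$-dimensional subspace of piecewise-constant vectors by a net of cardinality exponential in $R$, apply \Cref{lemma:Azuma Hoeffding} pointwise to the deterministic net elements, and close the union bound using the fact that the $\log^2(TM)R$ exponent dominates the $R\log(TM)$ log-cardinality. The only (minor, and arguably cleaner) difference is your treatment of the net-approximation error: you use the multiplicative bound $\sup_{u\in V,\|u\|_2\le 1}\langle u,w\rangle=\|P_V w\|_2\le 2\max_{v\in\mathcal N_V}\langle v,w\rangle$, whereas the paper incurs an additive error $\max_{t\in I}|4^{-1}\epsilon_m(t)g_{m'}\{\mathcal X(t)\}|$ that it must then control separately via the sub-exponential tail of the Poisson innovations.
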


\begin{proof} 
In this proof, we use $C$ to refer an absolute constant, which is not necessarily the same throughout the proof.

For any $R \in [1, T] \cap \mathbb{Z}_+$, since $\|Dv\|_0 = R$, $v$ has exactly $R$ change points and there are at most $\binom {T}{R}$ possible choices of the locations of change points.  Given the collection of change points $\{\eta_k\}_{k=1}^R \subset [1, T]$, denote by $\mathcal S(\{\eta_k\}_{k=1}^d)$ the linear subspace of all piecewise-constant functions with all change points at $\{\eta_k\}_{k=1}^R$.  Let $\mathcal N_{1/4}(\{\eta_k\}_{k=1}^R)$ be a $1/4$-net of $\mathcal S(\{\eta_k\}_{k=1}^R)\cap \mathcal{S}(0, 1)$, where $\mathcal{S}(0, 1)$ is the unit sphere in $\mathbb{R}^T$.  Since $\mathcal S(\{\eta_k\}_{k=1}^R)$ is an affine subspace with dimension $R+1$, $\mathcal N_{1/4}(\{\eta_k\}_{k=1}^R)$ can be chosen such that $|\mathcal N_{1/4}(\{\eta_k\}_{k=1}^R)|\le 12^{R+1}$, see e.g.~Lemma 4.1 in \cite{pollard1990empirical} and Lemma 4.2.8 in \cite{vershynin2018high}.
  
Then we have for any fixed $m, m' \in [1, M]$ and any set of $\{\eta_k\}_{k=1}^R \subset [1, T]$, it holds that
	\begin{align*}
		& \mathbb{P}\left[\sup_{v\in \mathcal S (\{\eta_k\}_{k=1}^R),\, \|v\|_2 =1} \sum_{t \in I} v_t \epsilon_m(t) g_{m'}\{\mathcal X (t)\} \ge \sqrt  {C\log^2 (TM) R} \right] \\
		\leq & \mathbb{P}\left[\sup_{v\in \mathcal N_{1/4}( \{\eta_k\}_{k=1}^R)} \sum_{t \in I} v_t \epsilon_m(t) g_{m'}\{\mathcal X (t)\} + \max_{t\in I}|4^{-1}\epsilon_m(t) g_{m'}\{\mathcal X (t)\}| \ge \sqrt  {C\log^2 (TM) R} \right] \\
		\leq & \mathbb{P}\left[\sup_{v\in \mathcal N_{1/4}( \{\eta_k\}_{k=1}^R)} \sum_{t \in I} v_t \epsilon_m(t) g_{m'}\{\mathcal X (t)\} \ge \sqrt  {C\log^2 (TM) R} \right]\\
		& \hspace{6cm} \times \mathbb{P} \left[\max_{t\in I}|4^{-1}\epsilon_m(t) g_{m'}\{\mathcal X (t)\}| < C\log(TM)\right] \\
		& \hspace{4cm} + \mathbb{P} \left[\max_{t\in I}|4^{-1}\epsilon_m(t) g_{m'}\{\mathcal X (t)\}| \geq C\log(TM)\right] \\
		\leq & 12^{R+1} \sup_{v\in \mathcal N_{1/4}( \{\eta_k\}_{k=1}^R)}\mathbb{P}\left[ \sum_{t \in I} v_t \epsilon_m(t) g_{m'}\{\mathcal X (t)\} \ge \sqrt  {C\log^2 (TM) R} \right] \\
		& \hspace{4cm} + \mathbb{P} \left[\max_{t\in I}|4^{-1}\epsilon_m(t) g_{m'}\{\mathcal X (t)\}| \geq C\log(TM)\right]  \\
		\leq & 12^{R+1} \times 2 \exp\left\{-\frac{C\log^2(TM) R}{\sum_{t \in I} v_t^2}\right\} + 2T\exp\left(-\frac{4C\log(TM)}{C_g}\right) \\
		\leq & C\exp\{-C\log(TM) R + R\log(12)\},
	\end{align*}
	where the fourth inequality follows from \Cref{lemma:Azuma Hoeffding} and the sub-Exponential property of Poisson random variables.
	
Therefore for any fixed integer interval $I$ and positive constant $R \leq T$, it holds that
	\[
		\mathbb{P}\{\mathcal{A}_R(I)\} \leq M^2 \binom {T} {R}C\exp\{-C\log(TM) R\} \leq (TM)^{-4R},
	\]
	with sufficiently large absolute constants.
\end{proof}

\begin{lemma} \label{lemma:finite change points event 2}
Let $I \subset [1, T]$ be an integer interval.  Denote the event 
	\[
		\mathcal B (I) = \left\{ \max_{m, m' = 1, \ldots, M} \sum_{t \in I} \epsilon_m(t) g_{m'} \{\mathcal X ( t)\} \ge \sqrt {C|I|\log (TM)}\right\},
	\]
	for some sufficiently large constant $C$.  It holds that $\mathbb{P}\{\mathcal B(I) \} \le (TM)^{-4}$.
\end{lemma}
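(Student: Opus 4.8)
The plan is to recognize that, for each fixed pair $(m,m')$, the quantity $\sum_{t\in I}\epsilon_m(t)g_{m'}\{\mathcal X(t)\}$ is a single linear functional of the martingale-difference sequence $\{\epsilon_m(t)g_{m'}\{\mathcal X(t)\}\}_{t\in I}$ with the deterministic weights $v_t=\mathbf 1\{t\in I\}$. In contrast to \Cref{lemma:finite change points event 1}, where a supremum over an entire piecewise-constant subspace forces a covering/net argument, here there is no supremum over $v$, so no net is required; the only randomness to control per coordinate is a single scalar martingale tail. Since $\sum_{t\in I}v_t^2=|I|$, this immediately suggests invoking \Cref{lemma:Azuma Hoeffding} with $\delta=\sqrt{C|I|\log(TM)}$.

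Concretely, I would first fix $(m,m')$ and apply \Cref{lemma:Azuma Hoeffding} with $v_t=\mathbf 1\{t\in I\}$ to obtain $\mathbb P[\sum_{t\in I}\epsilon_m(t)g_{m'}\{\mathcal X(t)\}\ge\sqrt{C|I|\log(TM)}]\le 2\exp(-C'C|I|\log(TM)/|I|)=2(TM)^{-C'C}$, where the length $|I|$ cancels precisely because the threshold scales like $\sqrt{|I|}$. I would then take a union bound over the $M^2$ pairs $(m,m')$, giving $\mathbb P\{\mathcal B(I)\}\le 2M^2(TM)^{-C'C}\le(TM)^{-4}$, which holds once $C$ is chosen large enough to absorb the factor $M^2\le(TM)^2$ and produce an effective exponent of at least $6$. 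Note that a single fixed interval $I$ is considered in the statement, so unlike elsewhere in the paper no further union bound over intervals is needed at this stage.

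The one genuine subtlety, and the step I expect to be the main obstacle, is that the martingale increments $\epsilon_m(t)g_{m'}\{\mathcal X(t)\}$ are not almost surely bounded: $\epsilon_m(t)=X_m(t+1)-\exp[v+A^*_m(t)g\{\mathcal X(t)\}]$ is a centered Poisson variable and hence only sub-exponential, so Azuma's inequality does not literally apply to unbounded differences. To make the argument rigorous I would reproduce the truncation device already used in the proof of \Cref{lemma:finite change points event 1}, here specialized to this single deterministic vector: split on the event $\{\max_{t\in I}|\epsilon_m(t)g_{m'}\{\mathcal X(t)\}|\le C\log(TM)\}$, on which the increments are bounded and a Bernstein/Azuma-type bound yields the desired $2(TM)^{-C'C}$ tail, and control its complement by a union bound over $t\in I$ together with the sub-exponential Poisson maximal tail, contributing at most $2|I|\exp(-c\log(TM)/C_g)$. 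The truncation level of order $\log(TM)$ is exactly what dictates the $\sqrt{|I|\log(TM)}$ scaling in the statement, and once both pieces are combined the per-pair estimate survives the union bound over $(m,m')$ to deliver the claimed $(TM)^{-4}$.
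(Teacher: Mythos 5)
Your proposal matches the paper's proof essentially verbatim: the paper likewise applies \Cref{lemma:Azuma Hoeffding} with the constant weight vector $v_t=\mathbf 1\{t\in I\}$ (so that $\sum_t v_t^2=|I|$ cancels against the $|I|$ in the threshold) and then takes a union bound over the $M^2$ pairs $(m,m')$, absorbing the factor $M^2$ by choosing $C$ large. Your additional truncation step to handle the unbounded Poisson increments is a sound refinement that the paper only carries out explicitly in the proof of \Cref{lemma:finite change points event 1} and silently omits here, so if anything your version is the more careful one.
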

\begin{proof}
It follows from \Cref{lemma:Azuma Hoeffding} that, with sufficiently large constants,
	\begin{align*}
		& \mathbb{P}\left\{\max_{m, m' = 1, \ldots, M} \sum_{t \in I} \epsilon_m(t) g_{m'}\{\mathcal X ( t)\} \geq \sqrt {C|I|\log (TM)}\right\}	\\
		\leq & M^2 \mathbb{P}\left\{\sum_{t \in I} \epsilon_1(t) g_{2}\{\mathcal X ( t)\} \geq \sqrt {C|I|\log (TM)}\right\} \leq 2M^2 \exp\left(-\frac{C|I| \log(TM)}{2C_g^2 L ^2\sum_{ t\in I }|I|}\right) \leq (TM)^{-4}.
	\end{align*}
\end{proof}

%\begin{lemma}
%	Let
%	$\mathcal S $ be any linear subspace of $\mathbb R^T  $ and let $\mathcal N_{1/4} (\mathcal S) $ be a cover   of 
%	$\mathcal S \cap B(0,1)$. For any $u\in \mathbb R^T $,
%	$$ \sup_{v \in \mathcal S \cap B(0,1)}(u,v) \le 2\sup_{v\in\mathcal N_{1/4}  (\mathcal S) }(u,v) .$$
%\end{lemma}
%\begin{proof}
%	Without loss of generality assume $\|u\|_2=1.$
% Let $v' \in  \mathcal S \cap B(0,1)$  be such that $(u,v') =   \sup_{v \in \mathcal S \cap B(0,1)}(u,v) $, there exists a sequence of $\{v_i\}_{i=1}^\infty  \subset \mathcal N_{1/4} (\mathcal S)$ and a sequence of $\{ w_i\}_{i=1}^\infty \subset \mathcal S\cap B(0,1) $ 
%	such that for each $n\ge 1$,
%	\begin{align*}
%	(u,v') = \sum_{i=1}^n 4^{-i+1}(u,v_i) + 4^{-n}(u,w_i).
%	\end{align*}
%	Therefore
%	$ (u,v') \le\sum_{i=1}^n 4^{-i+1}\sup_{v\in\mathcal N_{1/4} (\mathcal S) }(u,v) + 4^{-n}$ and the desired result followed by letting $ n \to \infty$.
%\end{proof}

%\begin{lemma}\label{lemma:covering}
%Let $\mathcal S$ be any linear subspace in $\mathbb R^T$ and $\mathcal N_{1/4}$ be a $1/4$-net of $\mathcal S \cap B(0,1)$, where $B(0,1) $ is the unit ball in $\mathbb R^n$. For any $u \in \mathbb R^n$, it holds that 
%$$ \sup_{v\in \mathcal S \cap B(0,1)  } \langle v, u \rangle \le 2 \sup_{v \in \mathcal N_{1/4}} \langle v, u \rangle  ,$$
%where $\langle \cdot,  \cdot\rangle  $ denotes the inner product in $\mathbb R^n$. 
%\end{lemma}
%\begin{proof}
%This is a standard covering result and the proof will be omitted. 
%\end{proof}

For any $q = 0, \ldots, p-1$, denote 
	\[
		\mathcal T_q = \{pb + q: \, b \in \mathbb Z^+, pb+q = 1, \ldots, T\}.
	\]	
\begin{lemma} \label{lemma:deviation bound 2}
Under all the assumptions in \Cref{theorem:SEPP change point}, let $y(t) = \mathbb{E}[g\{\mathcal X (t)\}g\{\mathcal X (t)\}^\top |\mathcal X (t-p)] - g\{\mathcal X (t)\}g\{\mathcal X (t)\}^\top \in \mathbb R^{M\times M}$, $R \in \mathbb{Z} \cap [1, T]$, $I \subset [1, T]$ be an integer interval and
	\[
		\mathcal C_R (I) = \left\{\max_{m, m' = 1, \ldots, M} \sup_{v:\, \|Dv\|_0 = R, \|v\|_2 = 1} \sum_{t \in I \cap \mathcal T_q} v_t y_{m, m'}(t) \ge \sqrt {C\log (TM)}\right\}.
	\]
	It holds that $\mathbb{P}\{\mathcal C_R (I)\} \le (TM)^{-4}$, with sufficiently large $C > 0$.
\end{lemma}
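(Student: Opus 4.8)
The plan is to mirror the proof of \Cref{lemma:finite change points event 1}, exploiting two structural features that make the present bound cleaner. First, unlike the summands $\epsilon_m(t) g_{m'}\{\mathcal X(t)\}$ appearing there, the entries $y_{m,m'}(t)$ are \emph{uniformly bounded}: by Assumption \textbf{A4} we have $\|g\{\mathcal X(t)\}\|_{\infty} \le C_g$, so $|g_m\{\mathcal X(t)\} g_{m'}\{\mathcal X(t)\}| \le C_g^2$ and hence $|y_{m,m'}(t)| \le 2 C_g^2$ for every $t$; thus no truncation of a heavy tail is needed. Second, the subsampling to $\mathcal T_q$ (points spaced $p$ apart) is precisely what produces a martingale difference structure. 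Writing the elements of $I \cap \mathcal T_q$ as $t_1 < t_2 < \cdots$ with $t_k - t_{k-1} = p$, the definition $y(t) = \mathbb{E}[g\{\mathcal X(t)\}g\{\mathcal X(t)\}^\top \mid \mathcal X(t-p)] - g\{\mathcal X(t)\}g\{\mathcal X(t)\}^\top$ gives $\mathbb{E}[y_{m,m'}(t_k) \mid \mathcal X(t_k - p)] = 0$, while $y_{m,m'}(t_k)$ is $\sigma\{\mathcal X(t_k)\}$-measurable and $t_k - p = t_{k-1}$. Hence $\{\sum_{i \le k} v_{t_i} y_{m,m'}(t_i)\}_k$ is a bounded-increment martingale with respect to $\{\sigma(\mathcal X(t_k))\}_k$.

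Given this, for any fixed unit vector $v$ the increments are bounded by $2C_g^2 |v_t|$ with $\sum_{t \in I \cap \mathcal T_q} v_t^2 \le \|v\|_2^2 = 1$, so Azuma's inequality (exactly as in \Cref{lemma:Azuma Hoeffding}) yields, for each fixed pair $(m,m')$,
\[
\mathbb{P}\left[\sum_{t \in I \cap \mathcal T_q} v_t\, y_{m,m'}(t) \ge \delta\right] \le 2\exp\left(-\frac{C\delta^2}{\sum_{t \in I \cap \mathcal T_q} v_t^2}\right) \le 2\exp\left(-C\delta^2\right).
\]
To pass from a fixed $v$ to the supremum over all $v$ with $\|Dv\|_0 = R$ and $\|v\|_2 = 1$, I would use the same discretization as in \Cref{lemma:finite change points event 1}: there are at most $\binom{T}{R}$ choices for the locations of the $R$ jumps, and for each fixed choice the admissible $v$ lie on the unit sphere of the $(R{+}1)$-dimensional subspace $\mathcal S(\{\eta_k\})$, which admits a $1/4$-net $\mathcal N_{1/4}$ of cardinality at most $12^{R+1}$. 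Since $v \mapsto \sum_t v_t y_{m,m'}(t)$ is linear, the supremum over the sphere slice is at most twice its maximum over $\mathcal N_{1/4}$, so $\mathbb{P}[\sup_v \ge \delta] \le 12^{R+1}\sup_{v' \in \mathcal N_{1/4}} \mathbb{P}[\sum_t v'_t y_{m,m'}(t) \ge \delta/2]$.

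Combining the net bound with the single-$v$ tail and taking a union bound over the $M^2$ index pairs and the $\binom{T}{R} \le T^R$ jump configurations gives
\[
\mathbb{P}\{\mathcal C_R(I)\} \le M^2\, T^R\, 12^{R+1}\, \cdot\, 2\exp\left(-\tfrac{1}{4} C\delta^2\right),
\]
and setting $\delta = \sqrt{C'\log(TM)}$ with $C'$ sufficiently large makes the exponential factor dominate the $M^2 T^R 12^{R+1}$ prefactor, delivering the claimed $(TM)^{-4}$. I expect the \textbf{main obstacle} to be the discretization step: one must justify the multiplicative $1/4$-net bound for a linear functional on the $(R{+}1)$-dimensional subspace, which here cleanly replaces the somewhat delicate additive truncation term used in \Cref{lemma:finite change points event 1} (needed there only because $\epsilon_m(t) g_{m'}$ was unbounded). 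The combinatorial factor $T^R$ is what constrains the threshold; the bound is invoked only for $\Delta(t)$ that is piecewise constant across a constant number of change points, so $R = O(1)$ in every application and the prefactor is a fixed power of $T$ absorbed by choosing $C'$ large.
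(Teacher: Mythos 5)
Your proposal follows essentially the same route as the paper: fix $v$, observe that the $p$-spacing of $\mathcal T_q$ makes $\{\sum_{s \in \mathcal T_q,\, s\le t} v_s y_{m,m'}(s)\}$ a bounded-increment martingale with $\mathbb{E}\{v_t y(t)\mid \mathcal F_{t-p}\}=0$, apply Azuma's inequality, and then invoke the same covering/union-bound argument as in the earlier deviation lemma. Your added remarks — that boundedness of $y_{m,m'}(t)$ removes the need for the truncation step, and that the $T^R$ combinatorial prefactor forces the constant to absorb $R$ (harmless since $R=O(1)$ wherever the lemma is invoked) — are correct and, if anything, slightly more careful than the paper's one-line appeal to ``similar covering arguments.''
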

\begin{proof}
Let $v$ be any fixed vector in  $\mathbb R^T$.  It holds that 
	\[
		\mathbb{E}\{v_t y(t) | \mathcal F_{t-p } \} = 0 \in \mathbb R^{M\times M}. 
	\]
	As a result, $\{Z_t\}_{t \in \mathcal T_q} $ is a martingale sequence with respect to the filtration $\{\mathcal F_{t} \}_{t\in \mathcal T_q}$, where $Z_t = \sum_{s \in \mathcal T_q, \, s \le t}v_s y_{m, m'}(s) $.  In addition, $\mathbb{E}|Z_t - Z_{t -p}| = \mathbb{E}|  v_t  y_{m,m'}  | \le 2 C_g|v_t|$.  By Azuma's inequality \citep{azuma1967weighted}, for any $m, m' \in [1, M]$,
	\[
		\mathbb{P}\left\{\sum_{t \in I}v_t y_{m, m'}(t) \ge \sqrt{C\log(TM)} \right\} \le \exp\left\{-\frac{C\log(TM)}{\sum_{ t\in I }   v_t^2 }\right\}.
	\]
	The desired result follows from  similar covering arguments as in \Cref{lemma:finite change points event 1}.
\end{proof}

\end{document}